\documentclass[11pt]{amsart}

\usepackage{graphicx}%
\usepackage{multirow}%
\usepackage{amsmath,amssymb,amsfonts}%
\usepackage{amsthm}%
\usepackage{geometry}
\usepackage{mathtools}%
\usepackage{mathrsfs}%
\usepackage[title]{appendix}%
\usepackage{xcolor}%
\usepackage{textcomp}%
\usepackage{manyfoot}%
\usepackage{booktabs}%
\usepackage{enumitem}%
\usepackage{hyperref}
\subjclass[2020]{Primary 81P45, 91A27; Secondary 91A80, 90C22, 94A15}

\newtheorem{theorem}{Theorem}[section]%
\newtheorem{proposition}[theorem]{Proposition}%
\newtheorem{lemma}[theorem]{Lemma}%
\newtheorem{corollary}[theorem]{Corollary}%
\newtheorem{example}[theorem]{Example}%
\newtheorem{remark}[theorem]{Remark}%
\newtheorem{definition}[theorem]{Definition}%

\newcommand{\R}{\mathbb{R}}
\newcommand{\C}{\mathbb{C}}
\newcommand{\Hil}{\mathcal{H}}
\newcommand{\Bop}{\mathcal{B}}
\newcommand{\Dop}{\mathcal{D}}
\newcommand{\Tr}{\operatorname{Tr}}
\newcommand{\supp}{\operatorname{supp}}
\newcommand{\1}{\mathbb{1}}
\newcommand{\ket}[1]{\lvert #1 \rangle}
\newcommand{\bra}[1]{\langle #1 \rvert}
\newcommand{\braket}[2]{\langle #1 \vert #2 \rangle}
\newcommand{\proj}[1]{\lvert #1 \rangle\!\langle #1 \rvert}
\newcommand{\QBCE}{\mathrm{QBCE}}
\newcommand{\BCE}{\mathrm{BCE}}
\newcommand{\pp}{\mathrm{pp}}
\newcommand{\Ntot}{n}
\newcommand{\suff}{\succeq}
\newcommand{\garb}{\succeq_{\mathrm{loc}}}
\newcommand{\garbsr}{\succeq_{\mathrm{LOSR}}}
\newcommand{\garbe}{\succeq_{\mathrm{LOSE}}}
\newcommand{\qis}{\succeq_{\mathrm{QIS}}}
\newcommand{\inc}{\sqsupseteq_{\mathrm{inc}}}

\begin{document}
\title[Quantum Bayes Correlated Equilibrium and Quantum Information Structures]{Quantum Bayes Correlated Equilibrium and the Comparison of Quantum Information Structures in Games}
\author{Furkan Sezer*}\thanks{*Texas A\&M University, email: furkan.sezer@tamu.edu.}

\begin{abstract}
Bergemann and Morris (2016) show that one information structure is more informative than another exactly when it induces a smaller set of Bayes correlated equilibrium outcomes in every game. We build the quantum analogue. An information structure becomes a family of density operators indexed by the payoff state, which the mediator observes. We show that obedience is equivalent to a Loewner domination between operators on one player's subsystem. The equilibrium set is then a nonempty compact spectrahedron computable by semidefinite programming, classical structures embed exactly, and under quantum individual sufficiency more information shrinks the equilibrium set in every game.
\end{abstract}

\keywords{Quantum Bayes correlated equilibrium, Quantum game theory, Quantum information structures, Games of incomplete information, Semidefinite programming}

\maketitle

\section{Introduction}\label{sec:intro}

A game of incomplete information decomposes into a \emph{basic game}, comprising the actions, the
payoffs and a prior over payoff states, and an \emph{information structure}
describing what each player learns before taking an action. Bergemann and Morris~\cite{BM2016} (BM)  
identify the order on information structures that governs equilibrium behaviour in such
games. One structure is \emph{individually sufficient} for another when, in some
combined structure, each player's signal under the first is a sufficient statistic for
the state and for the other players' signals under the second. Their theorem is that
this statistical relation holds precisely when the first structure is \emph{more
incentive constrained} than the second, meaning that it induces a subset of Bayes
correlated equilibrium (BCE) outcomes in every basic game. With a single player the order
collapses to Blackwell's~\cite{Blackwell1953}. With several it is strictly coarser,
because what is being compared is no longer the value of information to a decision
maker but the tightness of the obedience constraints that information imposes on a
mediator.

The result matters because it converts a question about behaviour into a question about
statistics. Bayes correlated equilibrium characterizes what can occur when players
observe at least a given information structure and possibly more~\cite[Thm.~1]{BM2016},
so properties shared by all such equilibria are robust predictions, valid whatever
additional information the analyst has failed to model. The ordering then says exactly
when one modelling assumption is more permissive than another, and it is the engine
behind the information-design program built on
it~\cite{BM2016b,BM2019,KG2011}.

This paper develops the quantum analogue: quantum Bayes correlated equilibrium (QBCE). The motivating situation is one in which the
players' private information is carried by quantum systems: each agent holds a
subsystem of a joint state whose preparation depends on an underlying payoff-relevant
parameter, and each may perform any measurement on their own share before acting. Such
descriptions arise naturally for agents distributed across a quantum network, for
sensing platforms in which measurement choices are strategic, and more generally
wherever the information available to a decision maker is a quantum state rather than a
classical signal. The single-player instance of the comparison problem is settled:
Buscemi's comparison of quantum statistical models~\cite{Buscemi2012} extends the
Blackwell--Sherman--Stein theorem to families of density operators, with the physically
implementable transformations of quantum states in place of the classical garblings, and related formulations appear
in Shmaya~\cite{Shmaya2005}, Petz~\cite{Petz1986} and the resource theory of asymmetric
distinguishability~\cite{WangWilde2019}. The multi-player, obedience-constrained order
is the object we construct here.

\subsection{Why the existing quantum literature does not already answer this}

There is a substantial body of work on quantum equilibria, and it divides in a way that
matters here. One line quantizes games of \emph{complete} information: Meyer's quantum
strategies~\cite{Meyer1999} and the protocols of
Eisert--Wilkens--Lewenstein~\cite{EWL1999} and Marinatto--Weber~\cite{MW2000} enlarge
the strategy sets available in a fixed payoff matrix, and correlated equilibria with
quantum signals~\cite{LaMura2005} and quantum strategic games~\cite{Zhang2012} study
which correlations such enlargements sustain. In these settings there is no state of
nature, hence no information structure, and the question of comparing information does
not arise.

The second line treats games of \emph{incomplete} information, where a payoff-relevant
state exists and the players hold private data about it. It includes the
belief-invariant hierarchy of
Auletta--Ferraioli--Rai--Scarpa--Winter~\cite{AFRSW2021}, welfare improvements from
quantum resources~\cite{Groisman2020,AMP2024}, and conflicting-interest Bayesian
games~\cite{Pappa2015,RaiPaul2017}. Here the comparison question is meaningful, and here a
single convention is in force, which the survey of Khan et al.~\cite{Khan2018} states
plainly: the players may share advice drawn before play, but that advice may not depend
on the state of nature. The shared resource, whether a random variable or an entangled
state, is prepared once and independently of the payoff-relevant parameter, so it can
correlate the players' actions with one another but cannot inform them about the
parameter. Under that restriction the achievable correlations take the
local-hidden-variable form from which Bell inequalities are derived, and quantum
advantage in the setting is a Bell violation. Section~\ref{sub:statecorr} states the
restriction formally: the joint conditional behaviour must factor through shared advice
whose law is fixed independently of the state of nature.

Bayes correlated equilibrium removes exactly this restriction. The BM mediator is
\emph{omniscient}, in that a decision rule assigns action recommendations as a function
of the players' signals and of the state of nature itself, and \cite[Sec.~1.2]{BM2016} is explicit that where the
earlier correlated-equilibrium literature imposed \emph{join feasibility}, under which play
depends only on the players' combined information, ``our different motivation leads us
to allow such unexplained correlation'' with the state of nature. Two consequences
follow. A quantum equilibrium notion built from measurements of an
state-independent shared state is a \emph{belief-invariant} object, not a BCE: it is
a fixed-resource correlation concept rather than an information concept. In addition, any
comparison-of-information program must work with ensembles of states that vary with
the state of nature, together with a mediator that observes it. This is the setting we
adopt, and it explains why the ordering question has remained invisible: the established convention forecloses it.

\subsection{The framework and what it yields}

We therefore take a quantum information structure to be a family of density operators
indexed by the payoff state, one subsystem per player, and we let the mediator observe
that state. A decision rule splits each density operator into one positive operator for
every profile of action recommendations, the pieces summing to the original, so that
the players' subsystems are left intact while the recommendation may be correlated both
with the payoff state and with the ensemble decomposition. Obedience requires the
recommended action to be optimal against every alternative measurement a player might
perform on their own subsystem. Realizability places no constraint at all: every joint
distribution over states and actions with the correct marginal is induced by some
decision rule, so the theory is about incentives and nothing else, exactly as in the
classical case.

The technical engine of the paper is that this obedience condition has a closed
operator form. For each player, recommendation and candidate deviation there is a
Hermitian obedience operator on that player's subsystem, and obedience holds precisely
when the operator attached to the recommended action dominates, in the Loewner order,
the operator attached to every alternative. A quantifier over a continuum of positive
operator-valued measures is thereby replaced by finitely many operator inequalities,
and the resulting object is the exact operator analogue of the coordinatewise
comparison that defines classical obedience. Three consequences are immediate: the
equilibrium set is a spectrahedron, hence convex and compact, so equilibrium
computation and information design are semidefinite programming; the set is nonempty,
since ignoring the quantum systems and playing a state-wise Nash equilibrium is always
obedient; and classical information structures embed \emph{exactly}, the quantum
equilibrium set at an embedded classical structure coinciding with the classical Bayes
correlated equilibrium set rather than merely containing it.

The comparison theorem follows the same route. We define quantum individual sufficiency
by requiring a transfer of decision rules that may depend on the payoff state, together
with a single local map per player that does not, mirroring the two operative features
of the classical definition. The relation is a preorder, it contains local garblings,
garblings assisted by shared randomness and garblings assisted by shared entanglement,
and it restricts to Bergemann--Morris individual sufficiency on classical structures.
Under it, more information shrinks the set of equilibrium outcomes in every basic game
and for every number of players. We establish the converse in two regimes: for
structures simultaneously diagonalizable in a product basis, where the classical
theorem is recovered through the exactness of the embedding, and for every comparison
involving the least informative structure, where the criterion is that each player's
local marginal be independent of the payoff state.

Two further results fix the position of the framework. The access model in which
players receive only classical advice extracted from their quantum systems is
degenerate: its equilibrium set does not depend on the quantum structure at all, so the
comparative content of a quantum information structure resides entirely in the local
quantum deviations it permits. Also, the solution concept has the same foundational
justification as its classical counterpart, being identified with the quantum Bayes Nash
equilibria of a canonical expansion under arbitrary local deviations.

\subsection{Organization}

Section~\ref{sec:related} places the results against the literature and lists them.
Section~\ref{sec:classical} fixes the classical apparatus. Sections~\ref{sec:qis}
and~\ref{sec:qbce} define quantum information structures and quantum Bayes correlated
equilibrium. Section~\ref{sec:loewner} proves the Loewner characterization and its
immediate consequences. Section~\ref{sec:struct} establishes the exact recovery of the
classical theory, the degeneracy of the advice model, and the criterion for a structure
to carry no incentive content. Section~\ref{sec:bm1} gives the expansion theorem,
identifying the equilibria with the quantum Bayes Nash equilibria of a canonical expansion.
Section~\ref{sec:order} then defines quantum individual sufficiency and proves the
comparison theorem, together with the two regimes in which its converse holds.
Section~\ref{sec:example} computes a welfare frontier along a chain of increasingly
noisy quantum structures.

\section{Related work and contributions}\label{sec:related}
This work sits between two literatures that have developed largely independently: the
study of quantum games, and the study of orderings on information structures in
economics. We review them in five groups, in the order in which they bear on what
follows. The first supplies the single-agent boundary condition that our results must
reduce to. The second and third are the quantization protocols for complete-information
games and the Bayesian quantum games built around Bell inequalities, which together
account for the modelling convention identified in Section~\ref{sec:intro}. The fourth
contains the closest relatives in solution concept. The fifth is the classical theory
whose ordering this paper quantizes.
\subsection{Review of the related literature}
\paragraph{Quantum single-agent sufficiency.} Buscemi~\cite{Buscemi2012} settles the
quantum Blackwell order, using tests with a reference system;
Shmaya~\cite{Shmaya2005}, Petz~\cite{Petz1986} and the asymmetric-distinguishability
resource theory~\cite{WangWilde2019} give adjacent formulations. These provide the
single-agent boundary condition (Corollary~\ref{cor:consistency}), not a target. The
older stratum of quantum statistical decision theory on which they rest is the theory
of optimal measurement for discriminating a known family of states, due to
Helstrom~\cite{Helstrom1976}, Holevo~\cite{Holevo1973} and Yuen, Kennedy and
Lax~\cite{YKL1975}; the optimal discrimination bound quoted in
Example~\ref{ex:helstrom} is theirs.

\paragraph{Quantization protocols and their critics.} Meyer~\cite{Meyer1999},
Eisert--Wilkens--Lewenstein~\cite{EWL1999} and Marinatto--Weber~\cite{MW2000} supply
the dominant protocols; van Enk and Pike~\cite{vanEnkPike2002} and the discussion
in~\cite{Khan2018} question how much of the advantage is genuinely quantum. None of
this work orders information structures.

\paragraph{Bayesian quantum games and Bell inequalities.} Brunner and
Linden~\cite{BrunnerLinden2013} connect nonlocality to Bayesian games through the
local-hidden-variable form of shared advice; Cheon and Iqbal~\cite{CheonIqbal2008} and Fine~\cite{Fine1982} supply
the factorizability apparatus; Pappa et al.~\cite{Pappa2015} and Rai and
Paul~\cite{RaiPaul2017} treat conflicting-interest games. The standing assumption
throughout is that advice is independent of the state of nature, which
Definition~\ref{def:qdr} removes.

\paragraph{Quantum correlated and belief-invariant equilibria.} Auletta et
al.~\cite{AFRSW2021} organize communication, belief-invariant and correlated equilibria
for games of incomplete information; this is the closest relative in solution concept,
but with a state-independent shared resource and no order on information structures.
La Mura~\cite{LaMura2005} and Brandenburger and La Mura~\cite{BrandenburgerLaMura2016}
study correlated equilibria and team decisions with quantum signals;
Zhang~\cite{Zhang2012} develops quantum strategic game theory. Groisman et
al.~\cite{Groisman2020} and Abbott--Mhalla--Pocreau~\cite{AMP2024} study welfare under
quantum resources, the latter distinguishing direct quantum access from classical
advice. That distinction is mirrored here at the equilibrium level and shown degenerate in
one direction (Theorem~\ref{thm:ppdegen}). 

\paragraph{Classical information design and orderings.} The BCE apparatus is due to
Bergemann and Morris~\cite{BM2016,BM2016b,BM2019}, building on Aumann~\cite{Aumann1987}
and Forges~\cite{Forges1993,Forges2006}. Adjacent orderings are studied by
Gossner~\cite{Gossner2000} for Bayes Nash equilibrium, Lehrer, Rosenberg and
Shmaya~\cite{LRS2010,LRS2013} for garbling and outcome equivalence, and
Liu~\cite{Liu2015} for belief-invariant BCE; the one-player null-structure case is
Kamenica and Gentzkow~\cite{KG2011}.

\subsection{Summary of contributions}

\begin{enumerate}[leftmargin=1.5em]
  \item \textbf{Definitions} (Sections~\ref{sec:qis}--\ref{sec:qbce}). A quantum
  information structure is a state-correlated ensemble; a quantum decision rule is an
  decomposition of each state-conditional density operator into one positive operator
  per profile of recommendations, the pieces summing to the original;
  QBCE requires obedience against arbitrary local positive operator-valued measure (POVM) deviations. Realizability is
  unconstrained (Lemma~\ref{lem:realize}), so only incentives bind, exactly as for BCE.
  \item \textbf{The Loewner characterization} (Theorem~\ref{thm:loewner}): obedience
  holds if and only if the obedience operator attached to the recommended action
  dominates, in the Loewner order, the operator attached to every alternative. This
  replaces a quantifier over a continuum of POVMs by finitely many operator
  inequalities and drives everything that follows.
  \item \textbf{Well-posedness}: the equilibrium set is a nonempty, compact, convex projected
  spectrahedron and membership is a semidefinite program
  (Corollary~\ref{cor:spectra}, Proposition~\ref{prop:nonempty}).
  \item \textbf{Exact classical recovery} (Theorem~\ref{thm:classical}):
  the quantum equilibrium set at an embedded classical structure coincides with the
  classical one. The coherent quantum mediator produces no outcomes
  beyond the classical ones on classical information.
  \item \textbf{Degeneracy of the advice model} (Theorem~\ref{thm:ppdegen}) with an
  explicit separation (Example~\ref{ex:helstrom}): under classical post-processing the
  equilibrium set does not depend on the quantum structure at all, so only the
  quantum-deviation model can support a comparison theorem.
  \item \textbf{The bottom of the order} (Theorem~\ref{thm:trivial}): a structure is
  incentive-equivalent to the null structure exactly when every local marginal is independent of the state of nature, so all incentive content is
  carried by the way those marginals vary with it.
  \item \textbf{Foundations of the solution concept} (Theorem~\ref{thm:bm1},
  Corollary~\ref{cor:bm1general}): QBCE decision rules are exactly the quantum Bayes
  Nash equilibria of a canonical expansion, under arbitrary local deviations. The
  mediator itself admits no representation as a quantum instrument
  (Proposition~\ref{prop:noinstrument}).
  \item \textbf{Why the relation uses positive maps} (Theorem~\ref{thm:transpose}): an
  ensemble and its transposes are incentive-equivalent yet related by no channel, so the
  incentive ordering cannot be characterized by completely positive maps.
  Definition~\ref{def:qis-order} is calibrated accordingly.
  \item \textbf{The comparison theorem, soundness half}
  (Definition~\ref{def:qis-order}, Theorem~\ref{thm:soundness}): quantum individual
  sufficiency implies inclusion of the equilibrium sets, in every basic game and for
  every number of players.
  The relation is a preorder (Proposition~\ref{prop:preorder}), and local,
  shared-randomness and shared-entanglement garblings, as well as classical individual
  sufficiency, are instances (Propositions~\ref{prop:instances},
  \ref{prop:classicalIS}).
  \item \textbf{Completeness} (Theorems~\ref{thm:classicalcomplete}
  and~\ref{thm:bottomcomplete}): the converse of Theorem~\ref{thm:soundness} holds for
  structures simultaneously diagonalizable in product bases, and for every comparison
  involving the least informative structure.
  \item \textbf{A worked comparison} (Section~\ref{sec:example}): along a depolarizing
  chain of quantum structures the QBCE welfare frontier is computed by semidefinite
  programming and falls monotonically with informativeness, the quantum form of the
  Bergemann--Morris comparative static.
\end{enumerate}

\section{Classical preliminaries}\label{sec:classical}

This section fixes the classical apparatus that the rest of the paper quantizes. Two
results of Bergemann and Morris are recorded, and they play different roles. The first
justifies the solution concept, characterizing Bayes correlated equilibrium as the
behaviour consistent with players observing at least a given information structure and
possibly more; it is the reason the concept is the right one to compare structures by.
The second is the comparison theorem itself, equating a statistical ordering on
information with an incentive ordering on behaviour. Sections~\ref{sec:bm1}
and~\ref{sec:order} construct the quantum analogues in that order.

Fix a finite set $N=\{1,\dots,\Ntot\}$ of players, writing $\Ntot$ for their number, and a finite set $\Omega$ of states of nature $\omega$. We define a basic game as follows:
\begin{definition}[Basic game]
$G=(\Omega,(A_i)_{i\in N},(u_i)_{i\in N},\psi)$ with finite action sets $A_i$,
$A:=\prod_iA_i$, payoffs $u_i:A\times\Omega\to\R$, and a full-support prior
$\psi\in\Delta(\Omega)$.
\end{definition}

For a given a basic game $G$, classical information structures are defined as follows:
\begin{definition}[Classical information structure]
$S=((T_i)_{i\in N},\pi)$ with finite signal sets $T_i$, $T:=\prod_iT_i$, and
$\pi:\Omega\to\Delta(T)$. The \emph{null structure} $S_\emptyset$ has every $T_i$ a
singleton.
\end{definition}

A \emph{decision rule} is $\sigma:T\times\Omega\to\Delta(A)$; its \emph{outcome} is
$\nu\in\Delta(\Omega\times A)$ with
$\nu(\omega,a)=\psi(\omega)\sum_t\pi(t\mid\omega)\sigma(a\mid t,\omega)$.

\begin{definition}[Obedience and BCE, \cite{BM2016}]\label{def:bce}
$\sigma$ is \emph{obedient} for $(G,S)$ if for each $i$, $t_i\in T_i$, $a_i\in A_i$ and
$b_i\in A_i$,
\[
\sum_{\omega,t_{-i},a_{-i}}\psi(\omega)\pi(t_i,t_{-i}\mid\omega)\sigma(a_i,a_{-i}\mid t,\omega)
\big[u_i(a_i,a_{-i},\omega)-u_i(b_i,a_{-i},\omega)\big]\ \ge\ 0,
\]
and is a \emph{Bayes correlated equilibrium} if obedient. $\BCE(G,S)$ denotes the set
of BCE outcomes.
\end{definition}

Two features of Definition~\ref{def:bce} carry the weight of the theory and are worth
isolating, because the quantum framework preserves both.

The mediator is \emph{omniscient}: the decision rule $\sigma(a\mid t,\omega)$ conditions
on the state of nature, not only on the players' signals. Much of the earlier
literature on correlated equilibrium with incomplete information instead imposes
\emph{join feasibility}, requiring $\sigma(a\mid t,\omega)$ to be independent of
$\omega$, so that the mediator knows no more than the players collectively
do~\cite{Forges1993,Forges2006}. A second common restriction is \emph{belief
invariance}, requiring that a player's recommendation not alter their beliefs about
the state and the other players' types~\cite{Forges2006,Liu2015}. Bergemann and Morris
impose neither, and say so explicitly: where the earlier literature restricts the
mediator, ``our different motivation leads us to allow such unexplained
correlation''~\cite[Sec.~1.2]{BM2016}.

Obedience is the \emph{only} constraint. Nothing requires the outcome to be
implementable from the signals alone; a decision rule may induce any joint distribution
over states and actions that the obedience inequalities permit. Information therefore
enters as an incentive constraint rather than as a feasibility constraint, and it is
this asymmetry that makes more information shrink the equilibrium set rather than
enlarge it. In Bayes Nash equilibrium the two roles are conflated, which is why the
corresponding ordering, due to Gossner~\cite{Gossner2000}, ranks structures by
correlation possibilities as well as by information.

Writing $V_i(a_i,b_i,t_i)$ for the expected payoff to player $i$ of type $t_i$ who is
recommended $a_i$ and plays $b_i$, obedience reads $V_i(a_i,a_i,\cdot)\ge
V_i(a_i,b_i,\cdot)$: a \emph{coordinatewise} comparison of two functions of $t_i$.
Section~\ref{sec:loewner} reproduces it with ``coordinatewise'' replaced by
``Loewner''.

An \emph{expansion} of $S=((T_i),\pi)$ is an information structure
$S^\ast=((T_i\times T_i'),\pi^\ast)$ for some $((T_i'),\pi')$, with $\pi^\ast$ a joint
law on $\Omega\times T\times T'$ whose $(\Omega,T)$-marginal is $\pi$; players observe
their signal under $S$ together with an additional signal. A strategy profile
$\beta$ for $(G,S^\ast)$ \emph{induces} the decision rule $\sigma$ for $(G,S)$ obtained
by averaging out the additional signals. We are now ready to state Theorem 1 of BM \cite{BM2016}:
\begin{theorem}[{\cite[Thm.~1]{BM2016}}]\label{thm:bm1c}
A decision rule $\sigma$ is a Bayes correlated equilibrium of $(G,S)$ if and only if
there is an expansion $S^\ast$ of $S$ and a Bayes Nash equilibrium of $(G,S^\ast)$
that induces $\sigma$.
\end{theorem}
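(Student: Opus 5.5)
The proof has two directions, and the ``if'' direction is the one that carries the idea. I would take the expansion whose extra signal is the recommendation itself. Given an obedient $\sigma$, let $T_i'=A_i$ and define $\pi^\ast(t,a\mid\omega)=\pi(t\mid\omega)\sigma(a\mid t,\omega)$. Its $(\Omega,T)$-marginal is $\pi$ because $\sigma(\cdot\mid t,\omega)$ sums to one, so $S^\ast$ is an expansion of $S$. In $(G,S^\ast)$, consider the strategy profile $\beta_i(t_i,a_i)=a_i$, under which each player simply follows the extra signal. Averaging out $T'$ under this profile returns exactly $\sigma(a\mid t,\omega)$, so $\beta$ induces $\sigma$. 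It remains to show that $\beta$ is a Bayes Nash equilibrium. A pure deviation by type $(t_i,a_i)$ to $b_i$ changes the (unnormalized) interim payoff by
\[
\sum_{\omega,t_{-i},a_{-i}}\psi(\omega)\pi(t\mid\omega)\sigma(a\mid t,\omega)\big[u_i(b_i,a_{-i},\omega)-u_i(a_i,a_{-i},\omega)\big],
\]
and this is $\le 0$ by Definition~\ref{def:bce}. Mixed deviations are convex combinations of pure ones, so they are unprofitable too. Types of zero probability impose no constraint, since their interim payoff carries zero weight.

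For the ``only if'' direction, take an expansion $S^\ast$ with law $\pi^\ast$ and a BNE $\beta$, where $\beta_i:T_i\times T_i'\to\Delta(A_i)$. The induced rule is
\[
\sigma(a\mid t,\omega)=\sum_{t'}\frac{\pi^\ast(t,t'\mid\omega)}{\pi(t\mid\omega)}\prod_j\beta_j(a_j\mid t_j,t_j'),
\]
taking any fixed distribution where $\pi(t\mid\omega)=0$, which carries no weight. Fix $i,t_i,a_i,b_i$. For each $t_i'$, define the deviation of type $(t_i,t_i')$ that plays $b_i$ whenever $\beta_i$ would play $a_i$ and otherwise follows $\beta_i$. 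The equilibrium condition makes this unprofitable for every $t_i'$. Weight the corresponding interim inequality at $t_i'$ by the prior mass of $(t_i,t_i')$, which is the same as writing it in unnormalized form, and multiply through by $\beta_i(a_i\mid t_i,t_i')$. Summing over $t_i'$ then yields
\[
\sum_{\omega,t_{-i},t',a_{-i}}\psi(\omega)\pi^\ast(t,t'\mid\omega)\beta_i(a_i\mid t_i,t_i')\prod_{j\ne i}\beta_j(a_j\mid t_j,t_j')\big[u_i(a_i,a_{-i},\omega)-u_i(b_i,a_{-i},\omega)\big]\ge 0.
\]
The sum over $t_{-i}',t_i'$ reconstructs $\pi(t\mid\omega)\sigma(a\mid t,\omega)$, and this is exactly obedience.

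I expect the main obstacle to be the bookkeeping in this second direction. The first point is that the BNE condition is stated interim, conditional on the full type $(t_i,t_i')$, and each such inequality has to be converted into an unnormalized form before the inequalities can be summed across $t_i'$. The second point is that the deviation must be the ``replace $a_i$ by $b_i$'' type-contingent strategy, not an arbitrary one; this is the classical revelation-principle step. Both points are routine once the correct weights are fixed.
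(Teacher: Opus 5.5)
Your proof is correct. It is the standard Bergemann--Morris argument: the paper cites that argument without reproving it, and reuses its key construction (the expansion $T_i'=A_i$ carrying the recommendation, with truthful play) as the canonical expansion in Theorem~\ref{thm:bm1}.

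There are two cosmetic slips. First, your direction labels are swapped: building the expansion from an obedient $\sigma$ is the ``only if'' direction. Second, in the converse, the gain from the replacement deviation already carries the factor $\beta_i(a_i\mid t_i,t_i')$, so you should not multiply by it again. Your displayed inequality, which has a single such factor, is the correct one.
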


Theorem~\ref{thm:bm1c} provides the foundational justification for using BCE to address the comparison of information structures. It says that the Bayes correlated equilibria of $(G,S)$ are exactly the
behaviours consistent with the players having observed \emph{at least} the signals in
$S$, whatever further information the analyst has failed to model. Properties shared by
all of them are therefore robust to that ignorance, and an ordering built on
$\BCE(G,\cdot)$ compares information structures without committing to a particular
account of what else the players know. The omniscience of the mediator is the device
that achieves this: an arbitrary correlating mechanism stands in for arbitrary
unmodelled information. Section~\ref{sec:bm1} proves the quantum analogue.

We turn to the ordering. In the one-player case an information structure is an
experiment, and Blackwell's theorem compares experiments by garbling. The many-player
generalization replaces garbling by a player-by-player conditional independence
requirement inside a combined structure.

\begin{definition}[Individual sufficiency, {\cite[Def.~6]{BM2016}}]\label{def:is}
$S$ is \emph{individually sufficient} for $S'$, written $S\suff S'$, if there is a
combined information structure $\pi^\ast:\Omega\to\Delta(T\times T')$, that is, a law
whose $T$-marginal is $\pi$ and whose $T'$-marginal is $\pi'$, such that for each player
$i$ the conditional probability
\[
\Pr\big(t_i'\mid t_i,t_{-i},\omega\big)
\;=\;\frac{\sum_{t_{-i}'}\pi^\ast\big((t_i,t_{-i}),(t_i',t_{-i}')\mid\omega\big)}
{\pi(t_i,t_{-i}\mid\omega)}
\]
is independent of $t_{-i}$ and $\omega$, whenever the denominator is nonzero. The
denominator is the stated one because summing $\pi^\ast$ over all of $T'$ returns its
$T$-marginal.
\end{definition}

Writing $\varphi_i(t_i'\mid t_i)$ for the common value of that conditional, individual
sufficiency is equivalent to the existence of Markov kernels
$\varphi_i:T_i\to\Delta(T_i')$ satisfying
\begin{equation}\label{eq:phi}
\varphi_i(t_i'\mid t_i)\,\pi(t_i,t_{-i}\mid\omega)
=\sum_{t_{-i}'}\pi^\ast\big((t_i,t_{-i}),(t_i',t_{-i}')\mid\omega\big)
\qquad\text{for all }t_i',t_i,t_{-i},\omega,
\end{equation}
which is the form used by Bergemann and Morris~\cite[eq.~(9)]{BM2016} and the form we
quantize. Note that \eqref{eq:phi} needs no side condition: where
$\pi(t_i,t_{-i}\mid\omega)=0$ both sides vanish, since the right-hand side is bounded
above by the $T$-marginal of $\pi^\ast$.

\begin{theorem}[{\cite[Thm.~2]{BM2016}}]\label{thm:bm}
$S\suff S'$ if and only if $\BCE(G,S)\subseteq\BCE(G,S')$ for every basic game $G$ on
the common basic structure. For $\Ntot=1$ the order $\suff$ is Blackwell's.
\end{theorem}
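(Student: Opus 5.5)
Theorem~\ref{thm:bm} is the classical comparison theorem of Bergemann and Morris, and the paper only quotes it. A self-contained proof would have two halves. The first is \emph{soundness}: $S\suff S'$ implies $\BCE(G,S)\subseteq\BCE(G,S')$ for every $G$. The second is \emph{completeness}: inclusion for every $G$ implies $S\suff S'$. The final clause about $\Ntot=1$ is a specialization to be checked separately.

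For soundness, I would start from an obedient $\sigma$ for $(G,S)$ and a combined law $\pi^\ast$ satisfying \eqref{eq:phi}. Define a decision rule on $S'$ by
\[
\sigma'(a\mid t',\omega)=\frac{\sum_{t}\pi^\ast(t,t'\mid\omega)\,\sigma(a\mid t,\omega)}{\pi'(t'\mid\omega)},
\]
extended arbitrarily where the denominator vanishes. Summing over $t'$ and using that the $T$-marginal of $\pi^\ast$ is $\pi$ shows that $\sigma'$ induces the same outcome $\nu$. For obedience of player $i$ at signal $t_i'$, expand the obedience sum for $\sigma'$ and then sum out $t_{-i}'$. By \eqref{eq:phi} this yields
\[
\sum_{t_i}\varphi_i(t_i'\mid t_i)\,\big[\text{obedience sum for }\sigma\text{ at }t_i\big].
\]
This is a nonnegative combination of nonnegative quantities, so $\sigma'$ is obedient. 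The key point is that $\varphi_i$ depends on neither $t_{-i}$ nor $\omega$, and this is exactly what lets the factor be pulled out of the sum.

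Completeness is the main obstacle. I would argue by contraposition through a separation. Suppose $S\not\suff S'$. The set of combined laws $\pi^\ast$ satisfying \eqref{eq:phi} for some kernels $\varphi_i$ is empty. Equivalently, a finite linear feasibility system in the unknowns $\pi^\ast\ge0$ and $\varphi_i\ge0$ is infeasible, after rewriting the products $\varphi_i\pi$ as linear terms since $\pi$ is fixed. Farkas' lemma then gives a dual certificate. I would read the dual variables as payoff differences $u_i(a_i,\cdot)-u_i(b_i,\cdot)$ in a game with action sets $A_i=T_i'$, so players are asked to ``report'' their $S'$-signal.

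In that game, the decision rule on $S'$ that recommends $a=t'$ truthfully would be obedient. Any BCE of $S$ inducing the same outcome would, via its joint law of $(t,a)$, define a combined structure in which the recommendation plays the role of $t'$. Obedience for that game, built from the Farkas certificate, would then force the conditional-independence conditions. The hard step is engineering the payoffs so that obedience in $S$ is equivalent to feasibility of \eqref{eq:phi}. I would first try BM's device of scoring-rule payoffs: a proper scoring rule in $(\omega,t_{-i}')$ makes the obedient action reveal the posterior exactly, and then matching posteriors yields the required sufficiency.

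For $\Ntot=1$ the $t_{-i}$-condition is vacuous. Individual sufficiency then asks only for a kernel $\varphi$ with $\pi'(t'\mid\omega)=\sum_t\varphi(t'\mid t)\pi(t\mid\omega)$, which is Blackwell garbling.
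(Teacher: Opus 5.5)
Your soundness half is correct. It is the mechanism the paper attributes to Bergemann and Morris: the $S'$-constraints are $\varphi_i$-averages of the $S$-constraints, and the paper only cites the theorem without proving it. Your $\Ntot=1$ reduction is also fine; the converse direction is the combined law $\pi^\ast(t,t'\mid\omega):=\varphi(t'\mid t)\pi(t\mid\omega)$.

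The completeness half, however, runs in the wrong direction. You start from the truthful rule on $S'$, which lies in $\BCE(G,S')$, and argue that if $S\not\suff S'$ then no BCE of $(G,S)$ induces its outcome. That would exhibit $\BCE(G,S')\not\subseteq\BCE(G,S)$. The contrapositive of the target instead needs $\BCE(G,S)\not\subseteq\BCE(G,S')$, and the hypothesis $\BCE(G,S)\subseteq\BCE(G,S')$ never lets you pass from an $S'$-equilibrium to an $S$-equilibrium. The lemma you are aiming at is in fact false. Take $|\Omega|\ge2$, $S=S_\emptyset$ and $S'$ the full-information structure. Then $S\not\suff S'$, yet your own soundness half gives $\BCE(G,S')\subseteq\BCE(G,S)$ for every $G$. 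The mismatch also shows inside the argument. In a game with $A_i=T_i'$, obedience under $S$ constrains player $i$'s belief over $(\omega,t'_{-i})$ given $(t_i,t_i')$, summed over $t_{-i}$. So it cannot force a condition on the law of $t_i'$ given $(t_i,t_{-i},\omega)$, which is what individual sufficiency requires.

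The repair is to swap the roles.
\begin{itemize}
\item Let player $i$'s actions include $T_i$, the signals of $S$, and use the truthful rule $\sigma(a\mid t,\omega)=\delta_{a,t}$ on $S$.
\item If this rule is obedient in $(G,S)$, the hypothesis puts its outcome in $\BCE(G,S')$. So some rule $\sigma'$ on $S'$ induces it, and $\pi^\ast(t,t'\mid\omega):=\pi'(t'\mid\omega)\sigma'(t\mid t',\omega)$ is a combined structure with marginals $\pi$ and $\pi'$.
\item Obedience of $\sigma'$ at $(t_i',a_i=t_i)$ now concerns $m_{t_i,t_i'}(t_{-i},\omega):=\psi(\omega)\sum_{t'_{-i}}\pi^\ast(t,t'\mid\omega)$. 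This is exactly the object in \eqref{eq:phi}.
\end{itemize}

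Your scoring-rule step would not close the argument even then. With finitely many actions, a proper scoring rule only places the refined belief in the best-response cell of the recommended report, not at the posterior itself. What pins the belief down is an averaging-to-zero device.
\begin{itemize}
\item Score each report $s_i\in T_i$ by a strictly proper scoring rule evaluated at the $S$-posterior of $s_i$.
\item Add extra actions $b$ with $u_i(b,\cdot)=u_i(t_i,\cdot)-\varepsilon d_b$, where the $d_b$ span the orthogonal complement of $m_{t_i}(t_{-i},\omega):=\psi(\omega)\pi(t\mid\omega)$ in $\R^{T_{-i}\times\Omega}$.
\item For small $\varepsilon>0$ the truthful rule stays obedient in $(G,S)$.
\item The $S'$-constraints $\varepsilon\langle m_{t_i,t_i'},d_b\rangle\ge0$ sum over $t_i'$ to $\varepsilon\langle m_{t_i},d_b\rangle=0$, so each of them vanishes.
\end{itemize}
Hence $m_{t_i,t_i'}=c(t_i,t_i')\,m_{t_i}$ for scalars $c(t_i,t_i')\ge0$ summing to one over $t_i'$. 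This is \eqref{eq:phi} with $\varphi_i(t_i'\mid t_i):=c(t_i,t_i')$, and no Farkas certificate is needed.
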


Theorem~\ref{thm:bm} is the result whose quantum analogue this paper constructs, so it
is worth being explicit about which of its features the construction relies on and
which it does not.

The soundness half rests on two properties of the transfer $\varphi_i$, and both survive
quantization. The incentive constraints under $S'$ are $\varphi_i$-averages of those
under $S$~\cite[eq.~(10)]{BM2016}, so obedience is transported by averaging rather than
by any construction peculiar to classical signals; and the kernels in \eqref{eq:phi}
depend on neither $\omega$ nor $t_{-i}$, which is the whole content of individual
sufficiency. Section~\ref{sec:order} reproduces both. Averaging becomes the application
of a positive linear map, which is the operation that preserves the order in which
quantum obedience is expressed; and the conditional independence requirement becomes
the demand that one local map serve for every state of nature.

The equivalence in Theorem~\ref{thm:bm} is exact, the statistical and incentive
orderings coinciding on every pair of structures. In quantum systems, we established that the soundness half holds in general (Theorem \ref{thm:soundness}), and the converse is established on two classes of pairs (Theorems \ref{thm:classicalcomplete}-\ref{thm:bottomcomplete}).

One feature of the classical order is easy to misread and is worth stating plainly,
because the quantum framework inherits it. Individual sufficiency ranks structures by
beliefs and higher-order beliefs about the state alone. A structure and its expansion
by a correlating device that conveys nothing about $\Omega$ are therefore equivalent in
the order, even though the expansion allows the players to coordinate in ways the
original does not~\cite[Example~1]{BM2016}. Correlation possibilities are invisible to
the ordering because Bayes correlated equilibrium already grants the mediator arbitrary
correlating power, so adding more changes no obedience constraint. The quantum
counterpart of this observation is recorded in Section~\ref{sec:struct}, where a
structure is shown to be equivalent to the uninformative one exactly when each player's
local marginal does not vary with the state.

\section{Quantum information structures}\label{sec:qis}

For a finite-dimensional Hilbert space $\Hil$ let $\Bop(\Hil)$ denote the bounded
operators and $\Dop(\Hil)$ the density operators. A \emph{POVM} with outcomes in a
finite set $Y$ is a family $\{M^y\}_{y\in Y}$ with $M^y\succeq0$ and $\sum_yM^y=\1$. A
linear map $\Phi:\Bop(\Hil)\to\Bop(\Hil')$ is \emph{positive} if $\Phi(\xi)\succeq0$
whenever $\xi\succeq0$, and \emph{trace preserving} if $\Tr\Phi(\xi)=\Tr\xi$ for all
$\xi$. It is \emph{completely positive} if $\Phi\otimes\mathrm{id}_{\Bop(\C^d)}$ is
positive for every $d\ge1$, that is, if $\Phi$ remains positive when applied to one
half of a system whose other half is left alone. A \emph{channel} is a completely positive
trace-preserving map; these are the transformations a physical process can implement.
Every channel is thus a positive trace-preserving map, but not conversely: transposition
in a fixed basis is positive and trace preserving without being completely positive, so
it is not a channel. The comparison relation of Section~\ref{sec:order} is defined using
positive trace-preserving maps, and not the smaller class of channels.

\begin{definition}[Quantum information structure]\label{def:qis}
A quantum information structure $Q$ is defined as $Q=\big((\Hil_i)_{i\in N},(\rho_\omega)_{\omega\in\Omega}\big)$ with
$\Hil=\bigotimes_i\Hil_i$ and $\rho_\omega\in\Dop(\Hil)$, player $i$ holding
$\Hil_i$. Equivalently $Q$ is the classical--quantum state
$\rho^Q=\sum_\omega\psi(\omega)\proj{\omega}_\Omega\otimes\rho_\omega$, the
$\Omega$-register being held by the analyst and inaccessible to the players. Write
$\rho_{i,\omega}:=\Tr_{-i}\rho_\omega$ for the local marginals.
\end{definition}
To demonstrate how this generalizes the classical framework, we show how a standard classical information structure translates into this notation.
\begin{example}[Embedding of classical structures]\label{ex:embed}
A classical structure $S=((T_i),\pi)$ embeds into the quantum framework via the mapping $\iota(S)$ with $\Hil_i=\C^{T_i}$ and
$\rho_\omega=\sum_{t\in T}\pi(t\mid\omega)\bigotimes_i\proj{t_i}$.
\end{example}

\subsection{State correlation}\label{sub:statecorr}

The dependence of $\rho_\omega$ on $\omega$ is what distinguishes
Definition~\ref{def:qis} from the objects
of~\cite{AFRSW2021,AMP2024,BrunnerLinden2013}, where a \emph{fixed} state is
shared and entanglement correlates actions without carrying information about nature.

The same restriction governs classical advice in that literature, and it can now be
stated exactly. Let $\mathcal{L}$ be a finite set of advice values, let
$p\in\Delta(\mathcal{L})$ be a distribution on it chosen independently of $\omega$, and
consider two players with signals $t_1\in T_1$, $t_2\in T_2$ and actions $a_1\in A_1$,
$a_2\in A_2$. Advice of this kind constrains the joint conditional behaviour to the form
\begin{equation}\label{eq:lhv}
P(a_1,a_2\mid t_1,t_2)=\sum_{\lambda\in\mathcal{L}}p(\lambda)\,
P(a_1\mid t_1,\lambda)\,P(a_2\mid t_2,\lambda),
\end{equation}
following Brunner and Linden~\cite{BrunnerLinden2013}. Equation~\eqref{eq:lhv} is the
local-hidden-variable ansatz from which Bell inequalities are derived: the advice
$\lambda$ is drawn once, before play, and its law $p$ carries no dependence on the
state of nature. Belief-invariant structures are the degenerate case
$\rho_\omega\equiv\rho$ of Definition~\ref{def:qis}. Only with $\omega$-dependence, 
``more informative'' is a meaningful relation between quantum structures.

\subsection{No-signalling is not at issue}\label{sub:nosignalling}

Because \eqref{eq:lhv} is the hypothesis of Bell's theorem, it is worth saying exactly
what is and is not being relaxed. In the nonlocal-game reading of \eqref{eq:lhv} the
players' signals $t_1,t_2$ play the role of measurement settings chosen at spacelike
separation, and the ansatz encodes two physical constraints on that arrangement:
measurement independence, since the law $p$ of the advice does not depend on
$t_1,t_2$; and locality, since $P(a_i\mid t_i,\lambda)$ does not depend on the other
player's setting. Under those constraints a violation of the resulting inequalities is
a physical statement about the world.

Neither constraint is a law governing the objects compared here. In a game of
incomplete information the payoff state $\omega$ is realized before play and is
causally upstream of the signals, which are drawn from $\pi(\cdot\mid\omega)$; it is
not a freely chosen setting, so conditioning a preparation on it is ordinary state
preparation conditioned on a classical parameter, the same object that a quantum
statistical model $\{\rho_\omega\}$ already is~\cite{Buscemi2012}. Nor are the players
assumed to be spacelike separated. Bayes correlated equilibrium answers the question of
what behaviour is consistent with the players observing \emph{at least} the given
information structure and possibly more, and Section~\ref{sec:bm1} makes this precise:
every equilibrium of the concept defined below is realized as a Bayes Nash equilibrium
of an expansion whose state is prepared conditionally on the payoff state and whose
subsystems are then distributed. That is the implementable content of the
mediator, and it involves no transmission of any kind during play.

If instead one wishes to model agents who are spacelike separated and hold exactly the
information in $Q$, the appropriate restriction is belief invariance, which by
Section~\ref{sub:statecorr} is the degenerate case $\rho_\omega\equiv\rho$ of
Definition~\ref{def:qis} and is the setting of~\cite{AFRSW2021,Liu2015}. The present
framework contains that case rather than contradicting it, and the two answer different
questions: the belief-invariant concept fixes the information and varies the
correlating device, while the concept used here fixes the correlating device as
unrestricted and varies the information, which is what makes information comparable.

\subsection{Classical actions and the scope of the linearity hypothesis}\label{sub:classicalactions}

Players measure their subsystems and play classical actions. It is worth being precise
about what this assumes, since linearity is used throughout.

Payoffs enter the analysis only through the outcome $\nu\in\Delta(\Omega\times A)$, as
$\nu\mapsto\sum_{\omega,a}u_i(a,\omega)\nu(\omega,a)$, and $\nu$ is linear in the
decision rule because the trace is linear. The linearity we use is therefore exactly
the linearity of expected utility in the induced distribution over outcomes: no further
restriction is imposed on $u_i$, which may be an arbitrary real function on
$A\times\Omega$. Consequently the framework covers every finite basic game with von
Neumann--Morgenstern payoffs and every finite-dimensional quantum information
structure, which is precisely the scope of~\cite{BM2016} on the classical side, as
Section~\ref{sec:struct} confirms, by showing that the quantum equilibrium set at an
embedded classical structure is exactly the classical one. Convexity and semidefinite representability are
consequences of this, not additional hypotheses.

What lies outside is the class of quantization protocols in which the payoff is a
nonlinear function of the pre-measurement state, as in the strategy spaces of
Eisert--Wilkens--Lewenstein~\cite{EWL1999} and its descendants. That exclusion is not a
loss of generality so much as a separation from a class where the solution concept
itself is fragile: the survey~\cite{Khan2018} records that when payoffs fail to be
linear in the strategy variables even existence of Nash equilibrium can fail. Since the
present object is an ordering defined through equilibrium sets, a setting in which
those sets may be empty is not a viable domain for it.

\section{Quantum Bayes correlated equilibrium}\label{sec:qbce}

\subsection{Decision rules}

The BM mediator observes $\omega$ and issues a private recommendation to each player.
Its quantum counterpart does not touch the players' subsystems; it splits the ensemble.

\begin{definition}[Quantum decision rule]\label{def:qdr}
A \emph{quantum decision rule} for $(G,Q)$ is a family
$\{\rho_\omega^a\}_{\omega\in\Omega,a\in A}$ of positive semidefinite operators on
$\Hil$ with
\[
\sum_{a\in A}\rho_\omega^a=\rho_\omega\qquad\text{for every }\omega\in\Omega .
\]
Player $i$ privately receives the classical recommendation $a_i$ and retains $\Hil_i$.
The induced \emph{outcome} is $\nu(\omega,a)=\psi(\omega)\Tr[\rho_\omega^a]$.
\end{definition}

Conditional on $(a,\omega)$ the players' joint state is
$\rho_\omega^a/\Tr[\rho_\omega^a]$. The recommendation may be correlated both with
$\omega$, as BCE requires, and with the ensemble decomposition of $\rho_\omega$, while
the marginal $\rho_\omega$ is left intact. Note $\rho_\omega^a\preceq\rho_\omega$, so
$\supp\rho_\omega^a\subseteq\supp\rho_\omega$.

\begin{lemma}[Realizability is unconstrained]\label{lem:realize}
For every $\nu\in\Delta(\Omega\times A)$ whose $\Omega$-marginal is $\psi$ there is a
quantum decision rule with outcome $\nu$, namely the \emph{proportional rule}
$\rho_\omega^a:=\big(\nu(\omega,a)/\psi(\omega)\big)\rho_\omega$.
\end{lemma}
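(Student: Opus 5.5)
The plan is to verify directly that the proportional rule $\rho_\omega^a:=\big(\nu(\omega,a)/\psi(\omega)\big)\rho_\omega$ meets the two requirements of Definition~\ref{def:qdr} and then to compute its outcome. Since the prior $\psi$ has full support by the definition of a basic game, $\psi(\omega)>0$ for every $\omega$, so the coefficient $\nu(\omega,a)/\psi(\omega)$ is well defined. It is also nonnegative, because $\nu$ is a probability distribution.

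The first step is positivity. Each $\rho_\omega^a$ is a nonnegative scalar multiple of the density operator $\rho_\omega\succeq0$, so $\rho_\omega^a\succeq0$.

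The second step is the decomposition constraint. Summing over $a\in A$ gives
\[
\sum_a\rho_\omega^a=\frac{\sum_a\nu(\omega,a)}{\psi(\omega)}\,\rho_\omega=\rho_\omega .
\]
This uses the hypothesis that the $\Omega$-marginal of $\nu$ is $\psi$, i.e.\ $\sum_a\nu(\omega,a)=\psi(\omega)$.

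The third step is the outcome computation. Since $\Tr\rho_\omega=1$, we get
\[
\psi(\omega)\Tr[\rho_\omega^a]=\psi(\omega)\cdot\frac{\nu(\omega,a)}{\psi(\omega)}=\nu(\omega,a).
\]
So the induced outcome is exactly $\nu$.

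There is no substantive obstacle in any of these steps. The only points needing care are the full-support assumption on $\psi$, which justifies the division, and the marginal condition on $\nu$, which is used exactly once, in the summation identity. It is also worth remarking that the rule ignores the quantum structure entirely. This is why realizability imposes no constraint, and why all content of the theory lies in obedience.
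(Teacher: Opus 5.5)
Your proof is correct and follows the paper's argument essentially step for step. Full support of $\psi$ justifies the ratio, and nonnegative scaling of $\rho_\omega$ gives positivity. The marginal condition yields $\sum_a\rho_\omega^a=\rho_\omega$, and $\Tr\rho_\omega=1$ gives the outcome $\nu$.
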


\begin{proof}
Each $\rho_\omega^a$ is a nonnegative multiple of a density operator, hence positive
semidefinite; $\psi$ has full support, so the ratio is well defined; and
$\sum_a\rho_\omega^a=\big(\sum_a\nu(\omega,a)/\psi(\omega)\big)\rho_\omega=\rho_\omega$
because $\sum_a\nu(\omega,a)=\psi(\omega)$. The outcome is
$\psi(\omega)\Tr[\rho_\omega^a]=\nu(\omega,a)$.
\end{proof}

Lemma~\ref{lem:realize} is the counterpart of the fact that in BCE only obedience
binds. It makes the comparison problem a question about \emph{incentives} rather than
\emph{attainability}, and it is available only because the mediator sees $\omega$.

\subsection{Obedience}

\begin{definition}[Obedience operator]\label{def:Xop}
For a decision rule $\{\rho_\omega^a\}$, a player $i$, a recommendation $a_i$ and a
candidate action $b_i$, set
\begin{equation}\label{eq:Xop}
X_i^{a_i,b_i}\;:=\;\sum_{\omega\in\Omega}\ \sum_{a_{-i}\in A_{-i}}
\psi(\omega)\,u_i(b_i,a_{-i},\omega)\ \Tr_{-i}\!\big[\rho_\omega^{(a_i,a_{-i})}\big]
\ \in\ \Bop(\Hil_i).
\end{equation}
\end{definition}

Each $X_i^{a_i,b_i}$ is Hermitian, because $\rho_\omega^a$ is positive semidefinite,
the partial trace preserves Hermiticity, and the coefficients are real. If player $i$
obeys, the contribution to $i$'s expected payoff conditional on the recommendation
$a_i$ is $\Tr[X_i^{a_i,a_i}]$; if instead $i$ measures $\Hil_i$ with a POVM
$\{M^{b_i}\}_{b_i\in A_i}$ and plays the outcome while all others obey, the
contribution is $\sum_{b_i}\Tr[M^{b_i}X_i^{a_i,b_i}]$.

\begin{definition}[QBCE]\label{def:qbce}
A quantum decision rule $X_i^{a_i,b_i}$ is a \emph{quantum Bayes correlated equilibrium} of $(G,Q)$ if
for every $i\in N$, every $a_i\in A_i$ and every POVM $\{M^{b_i}\}_{b_i\in A_i}$ on
$\Hil_i$,
\begin{equation}\label{eq:qobed}
\sum_{b_i\in A_i}\Tr\!\big[M^{b_i}X_i^{a_i,b_i}\big]\ \le\ \Tr\!\big[X_i^{a_i,a_i}\big].
\end{equation}
$\QBCE(G,Q)\subseteq\Delta(\Omega\times A)$ denotes the set of QBCE outcomes.
\end{definition}

Choosing $M^{a_i}=\1$ and $M^b=0$ otherwise shows the left side of \eqref{eq:qobed}
always attains $\Tr[X_i^{a_i,a_i}]$; the condition says this choice is optimal.

\begin{definition}[Advice model]\label{def:qbcepp}
$\QBCE_\pp(G,Q)$ is defined by restricting \eqref{eq:qobed} to deterministic
relabellings of the recommendation, i.e.\ to the constraints
$\Tr[X_i^{a_i,b_i}]\le\Tr[X_i^{a_i,a_i}]$ for all $b_i\in A_i$.
\end{definition}

Definitions~\ref{def:qbce} and~\ref{def:qbcepp} are the equilibrium-level images of the
two resource models of Abbott--Mhalla--Pocreau~\cite{AMP2024}: direct quantum access
versus classical advice extracted from a quantum device. Since the relabelling
constraints are a subset of the POVM constraints,
$\QBCE(G,Q)\subseteq\QBCE_\pp(G,Q)$ always. The inclusion is strict, and more than
that: Section~\ref{sec:struct} shows that \\$\QBCE_\pp(G,Q)$ does not depend on $Q$ at
all, so the advice model assigns every quantum information structure the same
equilibrium set and cannot order structures. This is why
Definition~\ref{def:qbce}, and not Definition~\ref{def:qbcepp}, is the concept used
throughout.

\section{The Loewner characterization}\label{sec:loewner}

\begin{definition}[Loewner order]\label{def:loewner}
For Hermitian operators $Z,W$ on a finite-dimensional Hilbert space, write
$Z\succeq W$, and say that $Z$ \emph{Loewner dominates} $W$, if $Z-W$ is positive
semidefinite, that is, if $\bra{v}(Z-W)\ket{v}\ge0$ for every vector $\ket{v}$. The
relation $\succeq$ is a partial order on the real vector space of Hermitian operators,
reflexive, transitive and antisymmetric, and it is the natural order induced by the
cone of positive semidefinite operators. On operators that are diagonal in a common
basis it reduces to comparison of the diagonal entries one by one, so for
$1\times1$ matrices it is the order on the reals.
\end{definition}

Two properties are used repeatedly below and are recorded here. First, if $Z\succeq W$
then $\Tr[FZ]\ge\Tr[FW]$ for every positive semidefinite $F$, since
$\Tr[F(Z-W)]\ge0$ whenever both factors are positive semidefinite. Second, the order is
preserved by every positive linear map $\Phi$, since $Z-W\succeq0$ gives
$\Phi(Z)-\Phi(W)=\Phi(Z-W)\succeq0$; this is what transports obedience in
Section~\ref{sec:order}. The order is partial rather than total: two Hermitian
operators need not be comparable, and Section~\ref{sec:struct} exploits exactly this
in showing that a decision rule can fail obedience because a difference of obedience
operators has eigenvalues of both signs.

\begin{theorem}[Obedience is Loewner domination]\label{thm:loewner}
A quantum decision rule $X_i^{a_i,b_i}$ satisfies \eqref{eq:qobed} for the pair $(i,a_i)$ if and only if
\begin{equation}\label{eq:loewner}
X_i^{a_i,a_i}\ \succeq\ X_i^{a_i,b_i}\qquad\text{for every }b_i\in A_i,
\end{equation}
where $\succeq$ is the Loewner order. Consequently a rule is a QBCE if and only if
\eqref{eq:loewner} holds for every $i$ and every $a_i$.
\end{theorem}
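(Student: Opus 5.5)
The plan is to rewrite \eqref{eq:qobed} as a single sum of terms, one per candidate action, and then argue each direction separately. The key observation is that the right-hand side of \eqref{eq:qobed} can be spread over any POVM. Since $\sum_{b_i}M^{b_i}=\1$, we have $\Tr[X_i^{a_i,a_i}]=\sum_{b_i}\Tr[M^{b_i}X_i^{a_i,a_i}]$. Hence, for the fixed pair $(i,a_i)$, condition \eqref{eq:qobed} is equivalent to
\[
\sum_{b_i\in A_i}\Tr\!\big[M^{b_i}\big(X_i^{a_i,b_i}-X_i^{a_i,a_i}\big)\big]\ \le\ 0
\qquad\text{for every POVM }\{M^{b_i}\}_{b_i\in A_i}\text{ on }\Hil_i .
\]
Both directions will be read off from this form.

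\emph{Sufficiency.} Assume \eqref{eq:loewner}. Each difference $X_i^{a_i,a_i}-X_i^{a_i,b_i}$ is positive semidefinite, and each $M^{b_i}$ is positive semidefinite. By the first property recorded after Definition~\ref{def:loewner}, every summand $\Tr[M^{b_i}(X_i^{a_i,b_i}-X_i^{a_i,a_i})]$ is then $\le0$. The sum is therefore nonpositive for every POVM, which is \eqref{eq:qobed}.

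\emph{Necessity.} Assume \eqref{eq:qobed} holds for every POVM. Fix $b_i\neq a_i$ (the case $b_i=a_i$ is trivial) and a unit vector $\ket{v}\in\Hil_i$, and set $P=\proj{v}$. I will test against the two-outcome POVM with $M^{b_i}=P$, $M^{a_i}=\1-P$, and $M^{c}=0$ for every other $c$. This is a valid POVM because $0\preceq P\preceq\1$. In the rewritten sum the $a_i$-term vanishes identically and only the $b_i$-term survives. That term gives
\[
\bra{v}\big(X_i^{a_i,a_i}-X_i^{a_i,b_i}\big)\ket{v}\ \ge\ 0 .
\]
Since $\ket{v}$ was arbitrary, this is exactly $X_i^{a_i,a_i}\succeq X_i^{a_i,b_i}$ in the sense of Definition~\ref{def:loewner}. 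The final sentence of the theorem follows by quantifying the pairwise equivalence over all $i$ and $a_i$, matching Definition~\ref{def:qbce}.

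There is no serious obstacle. The only step needing care is the choice of test POVM in the necessity direction: a rank-one projector must be paired with its complement on the recommended action $a_i$, so that the operators sum to $\1$ while the unwanted terms cancel. Hermiticity of the $X_i^{a_i,b_i}$, noted after Definition~\ref{def:Xop}, guarantees that the Loewner comparison is meaningful.
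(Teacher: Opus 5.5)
Your proof is correct and follows essentially the same route as the paper: sufficiency comes from the termwise inequality $\Tr[M^{b_i}(X_i^{a_i,a_i}-X_i^{a_i,b_i})]\ge0$, and necessity from testing against the two-outcome POVM with $M^{b_i}=\proj v$ and $M^{a_i}=\1-\proj v$. The only cosmetic difference is that you let $\ket v$ range over all unit vectors and read off the quadratic-form condition directly. The paper instead argues by contradiction, taking an eigenvector for a positive eigenvalue of $X_i^{a_i,b^\ast}-X_i^{a_i,a_i}$.
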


\begin{proof}
($\Leftarrow$) Assume \eqref{eq:loewner} and let $\{M^{b_i}\}$ be a POVM on $\Hil_i$.
For positive semidefinite $F$ and $H$ one has $\Tr[FH]\ge0$; taking $F=M^{b_i}$ and
$C=X_i^{a_i,a_i}-X_i^{a_i,b_i}$ gives
$\Tr[M^{b_i}X_i^{a_i,b_i}]\le\Tr[M^{b_i}X_i^{a_i,a_i}]$ for each $b_i$. Summing and
using $\sum_{b_i}M^{b_i}=\1$,
\[
\sum_{b_i}\Tr\big[M^{b_i}X_i^{a_i,b_i}\big]\le
\Tr\Big[\Big(\sum_{b_i}M^{b_i}\Big)X_i^{a_i,a_i}\Big]=\Tr\big[X_i^{a_i,a_i}\big].
\]

($\Rightarrow$) Assume \eqref{eq:qobed} and suppose \eqref{eq:loewner} fails for some
$b^\ast$, necessarily $b^\ast\ne a_i$. Then $W:=X_i^{a_i,b^\ast}-X_i^{a_i,a_i}$ is
Hermitian with a positive eigenvalue; let $\ket v$ be a corresponding unit eigenvector
and $\delta:=\bra{v}W\ket{v}>0$. Set $M^{b^\ast}:=\proj v$,
$M^{a_i}:=\1-\proj v$ and $M^b:=0$ for $b\notin\{a_i,b^\ast\}$, a POVM. Then
\[
\sum_{b_i}\Tr\big[M^{b_i}X_i^{a_i,b_i}\big]
=\bra{v}X_i^{a_i,b^\ast}\ket{v}+\Tr\big[X_i^{a_i,a_i}\big]-\bra{v}X_i^{a_i,a_i}\ket{v}
=\Tr\big[X_i^{a_i,a_i}\big]+\delta,
\]
contradicting \eqref{eq:qobed}.
\end{proof}

Theorem~\ref{thm:loewner} is the technical hinge of the paper. It replaces a quantifier
over a continuum of POVMs by finitely many operator inequalities, and it identifies
quantum obedience as the exact operator analogue of the classical coordinatewise
condition of Section~\ref{sec:classical}. Three consequences are immediate.

\begin{corollary}[QBCE is a spectrahedron]\label{cor:spectra}
Fix $G$ and $Q$. The set of QBCE decision rules is
\[
\Big\{\{\rho_\omega^a\}:\ \rho_\omega^a\succeq0\ \forall\omega,a;\quad
\textstyle\sum_a\rho_\omega^a=\rho_\omega\ \forall\omega;\quad
X_i^{a_i,a_i}-X_i^{a_i,b_i}\succeq0\ \forall i,a_i,b_i\Big\},
\]
cut out by finitely many linear matrix inequalities and linear equalities in the
variables $\{\rho_\omega^a\}$. It is convex and compact; optimizing a linear objective
over it is a semidefinite program; and $\QBCE(G,Q)$ is convex and compact, being its
image under the linear outcome map.
\end{corollary}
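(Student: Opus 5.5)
The plan is to apply Theorem~\ref{thm:loewner} directly and then verify, one property at a time, that the set it produces has the stated structure. By that theorem, a family $\{\rho_\omega^a\}$ is a QBCE decision rule exactly when three conditions hold: it is a quantum decision rule (positivity and the sum constraint of Definition~\ref{def:qdr}), and $X_i^{a_i,a_i}\succeq X_i^{a_i,b_i}$ for all $i,a_i,b_i$. This gives the displayed description of the set. What remains is to check that every constraint is a linear matrix inequality or a linear equality in the variables.

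First, I would note that $\rho\mapsto\rho_\omega^a$ is a coordinate projection, so $\rho_\omega^a\succeq0$ is an LMI. The constraint $\sum_a\rho_\omega^a=\rho_\omega$ is a linear equality with fixed right-hand side. Next, each $X_i^{a_i,b_i}$ in \eqref{eq:Xop} is a real linear combination of the partial traces $\Tr_{-i}[\rho_\omega^{(a_i,a_{-i})}]$, and the partial trace is linear. Hence $X_i^{a_i,a_i}-X_i^{a_i,b_i}$ is a linear, Hermitian-valued function of the variables, and requiring it to be $\succeq0$ is an LMI. There are finitely many such constraints because $N$, $\Omega$ and $A$ are finite. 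Stacking the constraints block-diagonally gives a single LMI together with an affine subspace, which is a spectrahedron.

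For convexity, I would use that the PSD cone is convex and that preimages of convex sets under affine maps are convex; intersecting these sets preserves convexity. For closedness, the PSD cone is closed and all the maps involved are continuous. For boundedness, $0\preceq\rho_\omega^a\preceq\rho_\omega$ gives $\Tr\rho_\omega^a\le1$, so $\|\rho_\omega^a\|\le1$ in trace norm and the feasible set sits in a bounded region of a finite-dimensional space. Closed and bounded then gives compactness. Optimizing a linear functional over a set cut out by LMIs and linear equalities is by definition a semidefinite program.

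Finally, the outcome map $\{\rho_\omega^a\}\mapsto\nu$, with $\nu(\omega,a)=\psi(\omega)\Tr\rho_\omega^a$, is linear and continuous. The image of a convex set under a linear map is convex, and the image of a compact set under a continuous map is compact, so $\QBCE(G,Q)$ is convex and compact.

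No step here is a real obstacle. The only point requiring care is boundedness, which has to come from the operator inequality $\rho_\omega^a\preceq\rho_\omega$ rather than from the obedience constraints. Nonemptiness is deliberately not claimed here; it is deferred to Proposition~\ref{prop:nonempty}.
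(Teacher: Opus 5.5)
Your proposal is correct and follows essentially the same route as the paper: invoke Theorem~\ref{thm:loewner} to identify the constraint set, note that each $X_i^{a_i,b_i}$ is linear in the variables so every constraint is a linear matrix inequality or linear equality, obtain boundedness from $0\preceq\rho_\omega^a\preceq\rho_\omega$, and pass convexity and compactness through the linear outcome map. Your trace-norm bound $\Tr\rho_\omega^a\le1$ is an equally valid substitute for the paper's norm bound.
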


\begin{proof}
By Theorem~\ref{thm:loewner} these are exactly the equilibrium conditions. Each
$X_i^{a_i,b_i}$ is a linear function of $\{\rho_\omega^a\}$ by \eqref{eq:Xop}, since
the partial trace is linear and the coefficients $\psi(\omega)u_i(b_i,a_{-i},\omega)$
are constants; hence each constraint is a linear matrix inequality. A set defined by
finitely many linear matrix inequalities and linear equalities is a spectrahedron,
hence closed and convex, and it is bounded because
$0\preceq\rho_\omega^a\preceq\rho_\omega$ forces $\|\rho_\omega^a\|\le1$. Linear images
of convex compact sets are convex and compact.
\end{proof}

\begin{proposition}[QBCE is nonempty]\label{prop:nonempty}
For every basic game $G$ and every quantum information structure $Q$ we have
$\QBCE(G,Q)\ne\emptyset$. Explicitly, choose for each $\omega$ a mixed Nash equilibrium
$\varsigma_\omega=\bigotimes_i\varsigma_{\omega,i}\in\Delta(A)$ of the
complete-information game $(A,u(\cdot,\omega))$, which exists by Nash's theorem since
$A$ is finite. Then $\rho_\omega^a:=\varsigma_\omega(a)\rho_\omega$ is a QBCE, with
outcome $\nu(\omega,a)=\psi(\omega)\varsigma_\omega(a)$.
\end{proposition}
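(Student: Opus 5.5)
The plan is to check directly that the proposed family is a quantum decision rule and then verify obedience through Theorem~\ref{thm:loewner}, which reduces the condition to Loewner inequalities between obedience operators. Once the rule is shown to be a QBCE, nonemptiness of $\QBCE(G,Q)$ follows because its outcome is an element of that set.

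\textbf{Step 1: the family is a decision rule with the stated outcome.} Each $\rho_\omega^a=\varsigma_\omega(a)\rho_\omega$ is a nonnegative multiple of a density operator, hence positive semidefinite. Summing over $a$ gives $\sum_a\rho_\omega^a=\rho_\omega$, since $\varsigma_\omega$ is a probability distribution on $A$. The outcome is $\psi(\omega)\Tr[\rho_\omega^a]=\psi(\omega)\varsigma_\omega(a)$, as in Lemma~\ref{lem:realize}.

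\textbf{Step 2: compute the obedience operators.} By \eqref{eq:Xop}, $\Tr_{-i}\rho_\omega^{(a_i,a_{-i})}=\varsigma_{\omega,i}(a_i)\,\varsigma_{\omega,-i}(a_{-i})\,\rho_{i,\omega}$. Substituting,
\[
X_i^{a_i,b_i}=\sum_\omega\psi(\omega)\,\varsigma_{\omega,i}(a_i)\,U_i(b_i,\omega)\,\rho_{i,\omega},
\qquad U_i(b_i,\omega):=\sum_{a_{-i}}\varsigma_{\omega,-i}(a_{-i})\,u_i(b_i,a_{-i},\omega),
\]
so that
\[
X_i^{a_i,a_i}-X_i^{a_i,b_i}=\sum_\omega\psi(\omega)\,\varsigma_{\omega,i}(a_i)\big[U_i(a_i,\omega)-U_i(b_i,\omega)\big]\rho_{i,\omega}.
\]

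\textbf{Step 3: each term of the difference is positive semidefinite.} This is the only substantive step, and it rests on the Nash indifference property. If $\varsigma_{\omega,i}(a_i)>0$, then $a_i$ is a best response to $\varsigma_{\omega,-i}$ in the game at state $\omega$, so the bracket is nonnegative. If $\varsigma_{\omega,i}(a_i)=0$, the term vanishes. In either case the term is a nonnegative scalar times the positive semidefinite operator $\rho_{i,\omega}$. The sum is therefore positive semidefinite, and Theorem~\ref{thm:loewner} yields obedience for every $i$ and $a_i$.

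The product structure of $\varsigma_\omega$ is essential: it is what lets the partial trace factor as in Step 2. No other obstacle arises.
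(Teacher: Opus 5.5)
Your proof is correct and is essentially the paper's argument: you show the proportional rule is a decision rule, and you write $X_i^{a_i,a_i}-X_i^{a_i,b_i}$ as a combination of the marginals $\rho_{i,\omega}$ with coefficients $\psi(\omega)\varsigma_{\omega,i}(a_i)$ times the payoff gain of $a_i$ over $b_i$ against $\varsigma_{\omega,-i}$; these coefficients are nonnegative because actions in the support of a Nash equilibrium are best responses, and Theorem~\ref{thm:loewner} concludes. A minor quibble with your closing remark: the partial trace $\Tr_{-i}\rho_\omega^a=\varsigma_\omega(a)\rho_{i,\omega}$ needs no product structure, which serves only to split off $\varsigma_{\omega,i}(a_i)$ and invoke the best-response property (not indifference), and it is not actually essential, since a state-wise correlated equilibrium $\varsigma_\omega$ would make each coefficient nonnegative just as well.
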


\begin{proof}
This is a decision rule by Lemma~\ref{lem:realize}. By \eqref{eq:Xop} and
$\Tr_{-i}\rho_\omega=\rho_{i,\omega}$,
\[
X_i^{a_i,a_i}-X_i^{a_i,b_i}
=\sum_\omega\psi(\omega)\,\varsigma_{\omega,i}(a_i)\,\kappa_i(\omega;a_i,b_i)\,\rho_{i,\omega},
\]
where
\[
\kappa_i(\omega;a_i,b_i):=\sum_{a_{-i}}\big(u_i(a_i,a_{-i},\omega)-u_i(b_i,a_{-i},\omega)\big)\varsigma_{\omega,-i}(a_{-i}),
\]
using $\varsigma_\omega(a_i,a_{-i})=\varsigma_{\omega,i}(a_i)\varsigma_{\omega,-i}(a_{-i})$.
If $\varsigma_{\omega,i}(a_i)>0$ then $a_i$ is a best response to
$\varsigma_{\omega,-i}$ in the complete-information game, so
$\kappa_i(\omega;a_i,b_i)\ge0$; if $\varsigma_{\omega,i}(a_i)=0$ the term vanishes.
Hence the displayed operator is a nonnegative combination of the positive semidefinite
operators $\rho_{i,\omega}$, so it is positive semidefinite, and
Theorem~\ref{thm:loewner} applies.
\end{proof}

The rule constructed in Proposition~\ref{prop:nonempty} operates independently of the players' quantum subsystems and
relying solely on a state-wise Nash equilibrium; it is the counterpart of the observation that a
Bayes Nash equilibrium of $(G,S)$ is a BCE of $(G,S)$.

\begin{remark}[Computation]
Corollary~\ref{cor:spectra} places the apparatus inside semidefinite programming:
computing the QBCE-optimal value of a linear welfare objective, or deciding membership
of an outcome, is a single semidefinite program (SDP) whose size is polynomial in $|\Omega|$, $|A|$ and the
dimensions $\dim\Hil_i$. Section~\ref{sec:example} carries this out in a two-player example.
\end{remark}

\section{Recovery of the classical theory, and the advice model}\label{sec:struct}

\begin{theorem}[Exact classical recovery]\label{thm:classical}
For every classical information structure $S$ and every basic game $G$,
\[
\QBCE\big(G,\iota(S)\big)\;=\;\BCE(G,S).
\]
Moreover a diagonal quantum decision rule satisfies \eqref{eq:loewner} if and only if
the associated classical decision rule is obedient in the sense of
Definition~\ref{def:bce}.
\end{theorem}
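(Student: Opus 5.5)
The proof works by reducing an arbitrary quantum decision rule at $\iota(S)$ to a diagonal one and then reading off the classical obedience inequalities. Let $\{\ket{t_i}\}_{t_i\in T_i}$ be the computational basis of $\Hil_i=\C^{T_i}$. Let $D_i(\xi)=\sum_{t_i}\proj{t_i}\xi\proj{t_i}$ be the dephasing (pinching) map on $\Bop(\Hil_i)$, and let $D=\bigotimes_i D_i$ be the product dephasing on $\Hil$. I will prove the second claim first, for diagonal rules, and then derive the set equality from it using $D$.

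\emph{Diagonal rules.} Suppose $\{\rho_\omega^a\}$ is diagonal in the product basis, $\rho_\omega^a=\sum_t r(a,t,\omega)\bigotimes_i\proj{t_i}$ with $r\ge0$. The decomposition $\sum_a\rho_\omega^a=\rho_\omega$ reads $\sum_a r(a,t,\omega)=\pi(t\mid\omega)$.
\begin{itemize}
\item Where $\pi(t\mid\omega)>0$, set $\sigma(a\mid t,\omega):=r(a,t,\omega)/\pi(t\mid\omega)$. Where $\pi(t\mid\omega)=0$ all $r(a,t,\omega)$ vanish, so $\sigma(\cdot\mid t,\omega)$ may be chosen arbitrarily, say uniform.
\item Conversely, every classical $\sigma$ gives the diagonal rule $r=\pi\sigma$.
\end{itemize}
The outcomes coincide: $\psi(\omega)\Tr[\rho_\omega^a]=\psi(\omega)\sum_t\pi(t\mid\omega)\sigma(a\mid t,\omega)$. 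By \eqref{eq:Xop}, the partial trace of a diagonal product-basis operator is diagonal in $\{\ket{t_i}\}$. The $(t_i,t_i)$ entry of $X_i^{a_i,b_i}$ is exactly
\[
\sum_{\omega,t_{-i},a_{-i}}\psi(\omega)\pi(t\mid\omega)\sigma(a\mid t,\omega)u_i(b_i,a_{-i},\omega).
\]
Since the Loewner order on commuting diagonal operators is entrywise comparison (Definition~\ref{def:loewner}), \eqref{eq:loewner} is literally the family of inequalities in Definition~\ref{def:bce}. This proves the ``moreover'' clause. It also gives $\BCE(G,S)\subseteq\QBCE(G,\iota(S))$, because every BCE embeds as an obedient diagonal rule, which is a QBCE by Theorem~\ref{thm:loewner}.

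\emph{The reverse inclusion: reducing to diagonal rules.} This is the main obstacle. A QBCE at $\iota(S)$ need not be diagonal: $0\preceq\rho_\omega^a\preceq\rho_\omega$ permits off-diagonal entries inside $\supp\rho_\omega$, so the mediator could in principle exploit coherences. The plan is to show that replacing $\rho_\omega^a$ by $D(\rho_\omega^a)$ preserves everything relevant.
\begin{enumerate}
\item $D$ is positive, so $D(\rho_\omega^a)\succeq0$.
\item $D(\rho_\omega)=\rho_\omega$ because $\rho_\omega$ is diagonal, so the decomposition is preserved.
\item $D$ is trace preserving, so the outcome $\nu$ is unchanged.
\item Because $D$ is a product of trace-preserving local maps, $\Tr_{-i}\circ D=D_i\circ\Tr_{-i}$. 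By linearity the new obedience operators are $D_i(X_i^{a_i,b_i})$.
\end{enumerate}
The $D_i$ are positive linear maps, so the Loewner inequalities \eqref{eq:loewner} are transported from the original rule to the dephased one, as recorded after Definition~\ref{def:loewner}. Hence the dephased rule is a diagonal QBCE with the same outcome. By the diagonal case it corresponds to a BCE of $(G,S)$ with that outcome, giving $\QBCE(G,\iota(S))\subseteq\BCE(G,S)$.

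The only point needing care is the commutation $\Tr_{-i}\circ D=D_i\circ\Tr_{-i}$. It follows from the fact that $D_j$ is trace preserving for $j\neq i$, checked on product operators and extended by linearity. The rest is bookkeeping.
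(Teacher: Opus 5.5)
Your proposal is correct and follows essentially the same route as the paper. It sets up a correspondence between diagonal rules and classical rules, under which each $X_i^{a_i,b_i}$ is diagonal with entries $V_i(a_i,b_i,t_i)$ and the Loewner order becomes coordinatewise. It then uses the product pinching $D=\bigotimes_i D_i$, with $\Tr_{-i}\circ D=D_i\circ\Tr_{-i}$, to reduce an arbitrary QBCE to a diagonal one with the same outcome. The only difference is cosmetic: you establish the diagonal-to-classical direction up front, whereas the paper does it at the end of its Step~2.
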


\begin{proof}
\emph{Step 1: diagonal rules correspond exactly.} Let
$\sigma:T\times\Omega\to\Delta(A)$ and set
$\rho_\omega^a:=\sum_{t}\pi(t\mid\omega)\sigma(a\mid t,\omega)\bigotimes_i\proj{t_i}$,
a decision rule for $\iota(S)$ whose outcome is that of $\sigma$. Each $\rho_\omega^a$
is diagonal in the product basis, hence by \eqref{eq:Xop} so is each $X_i^{a_i,b_i}$;
writing $X_i^{a_i,b_i}=\sum_{t_i}x_{b_i}(t_i)\proj{t_i}$ we obtain
\[
x_{b_i}(t_i)=\sum_{\omega,t_{-i},a_{-i}}
\psi(\omega)\pi(t_i,t_{-i}\mid\omega)\sigma(a_i,a_{-i}\mid t,\omega)u_i(b_i,a_{-i},\omega)
=V_i(a_i,b_i,t_i).
\]
For diagonal Hermitian operators the Loewner order is coordinatewise domination, so
\eqref{eq:loewner} reads $x_{a_i}(t_i)\ge x_{b_i}(t_i)$ for all $t_i$ and $b_i$, which
is Definition~\ref{def:bce}. With Theorem~\ref{thm:loewner} this proves the second
assertion and gives $\BCE(G,S)\subseteq\QBCE(G,\iota(S))$.

\emph{Step 2: general rules reduce to diagonal ones.} Let $\{\rho_\omega^a\}$ be any
QBCE of $(G,\iota(S))$ with outcome $\nu$. Let
$\mathcal{P}_i(\xi):=\sum_{t_i}\proj{t_i}\xi\proj{t_i}$ be the pinching channel and
$\mathcal{P}:=\bigotimes_i\mathcal{P}_i$, and set $\tilde\rho_\omega^a:=\mathcal{P}(\rho_\omega^a)$. Then
$\tilde\rho_\omega^a\succeq0$, and
$\sum_a\tilde\rho_\omega^a=\mathcal{P}(\rho_\omega)=\rho_\omega$ because $\rho_\omega$ is
diagonal and $\mathcal{P}$ fixes diagonal operators; so this is a decision rule, with outcome
$\nu$ since $\mathcal{P}$ is trace preserving, and it is diagonal.

Its obedience operators are $\mathcal{P}_i(X_i^{a_i,b_i})$. Indeed
$\Tr_{-i}\circ\mathcal{P}=\mathcal{P}_i\circ\Tr_{-i}$: by linearity it suffices to check this on product
operators $\xi_i\otimes\xi_{-i}$, where both sides equal $\mathcal{P}_i(\xi_i)\Tr[\xi_{-i}]$
because $\mathcal{P}_{-i}:=\bigotimes_{j\ne i}D_j$ is trace preserving; substituting into
\eqref{eq:Xop} and using linearity of $\mathcal{P}_i$ gives the claim. Since $\mathcal{P}_i$ is positive,
$X_i^{a_i,a_i}\succeq X_i^{a_i,b_i}$ implies
$\mathcal{P}_i(X_i^{a_i,a_i})\succeq \mathcal{P}_i(X_i^{a_i,b_i})$, so $\{\tilde\rho_\omega^a\}$ is a QBCE.
Being diagonal it arises from a classical rule $\sigma$ via
$\sigma(a\mid t,\omega):=\bra{t}\tilde\rho_\omega^a\ket{t}/\pi(t\mid\omega)$ where
$\pi(t\mid\omega)>0$ and arbitrarily elsewhere; by Step 1 that $\sigma$ is a BCE of
$(G,S)$ with outcome $\nu$. Hence $\QBCE(G,\iota(S))\subseteq\BCE(G,S)$.
\end{proof}

A coherent mediator, meaning one that splits a diagonal $\rho_\omega$ into non-diagonal
components, which is possible, for instance
$\tfrac12\1=\tfrac14\begin{psmallmatrix}1&1\\1&1\end{psmallmatrix}+\tfrac14\begin{psmallmatrix}1&-1\\-1&1\end{psmallmatrix}$, therefore
produces no outcomes beyond the classical ones. Coherence in the recommendation buys
extra realizations and costs exactly as much in extra obedience constraints. The
embedding $\iota$ is thus \emph{exact}, and the quantum theory is a conservative
extension of BM. The completeness results of Section~\ref{sec:complete} depend on this
exactness.

\begin{theorem}[The advice model is degenerate]\label{thm:ppdegen}
For every basic game $G$ and every quantum information structure $Q$,
\[
\QBCE_\pp(G,Q)=\Big\{\nu:\ \Xi_i(\nu;a_i,b_i)\ge0\ \ \forall i,a_i,b_i\Big\}=\BCE(G,S_\emptyset),
\]
where $\nu$ ranges over distributions on $\Omega\times A$ with $\Omega$-marginal $\psi$
and
\[
\Xi_i(\nu;a_i,b_i):=\sum_{\omega,a_{-i}}\nu(\omega,a_i,a_{-i})\big[u_i(a_i,a_{-i},\omega)-u_i(b_i,a_{-i},\omega)\big].
\]
In particular $\QBCE_\pp(G,Q)$ does not depend on $Q$, so the order it induces on
quantum information structures is the trivial one in which all structures are
equivalent.
\end{theorem}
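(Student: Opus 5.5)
The proof computes the traces of the obedience operators, checks that they depend on the decision rule only through its outcome $\nu$, and then invokes Lemma~\ref{lem:realize} to show that every $\nu$ is attained.

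\emph{First equality.} Take the trace of \eqref{eq:Xop}. Since $\Tr\circ\Tr_{-i}=\Tr$ and $\nu(\omega,a)=\psi(\omega)\Tr[\rho_\omega^a]$,
\[
\Tr\big[X_i^{a_i,b_i}\big]=\sum_{\omega,a_{-i}}\nu(\omega,a_i,a_{-i})\,u_i(b_i,a_{-i},\omega),
\]
and therefore $\Tr[X_i^{a_i,a_i}]-\Tr[X_i^{a_i,b_i}]=\Xi_i(\nu;a_i,b_i)$. So a decision rule satisfies the constraints of Definition~\ref{def:qbcepp} exactly when its outcome satisfies $\Xi_i(\nu;a_i,b_i)\ge0$ for all $i,a_i,b_i$. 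This gives $\QBCE_\pp(G,Q)\subseteq\{\nu:\Xi_i\ge0\}$. Every outcome automatically has $\Omega$-marginal $\psi$, because $\sum_a\Tr[\rho_\omega^a]=\Tr[\rho_\omega]=1$. For the reverse inclusion, take any $\nu$ with marginal $\psi$ and all $\Xi_i\ge0$. Lemma~\ref{lem:realize} supplies the proportional rule $\rho_\omega^a=(\nu(\omega,a)/\psi(\omega))\rho_\omega$, whose outcome is $\nu$. Its obedience traces are the displayed expressions, so the rule lies in the advice model and $\nu\in\QBCE_\pp(G,Q)$.

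\emph{Second equality.} Theorem~\ref{thm:classical} is not needed here; Definition~\ref{def:bce} can be applied directly to $S_\emptyset$. Each $T_i$ is a singleton, so $\pi(t\mid\omega)=1$ and a decision rule is a map $\sigma:\Omega\to\Delta(A)$ with outcome $\nu(\omega,a)=\psi(\omega)\sigma(a\mid\omega)$. The obedience inequality for $(i,t_i,a_i,b_i)$ is then literally $\Xi_i(\nu;a_i,b_i)\ge0$. Conversely, any $\nu$ with marginal $\psi$ arises from $\sigma(a\mid\omega):=\nu(\omega,a)/\psi(\omega)$, which is well defined because $\psi$ has full support. Hence $\BCE(G,S_\emptyset)=\{\nu:\Xi_i\ge0\}$.

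The final assertion is immediate: the middle set depends only on $G$, so $\QBCE_\pp(G,Q)=\QBCE_\pp(G,Q')$ for all $Q,Q'$ and every $G$, and the induced inclusion order makes all structures equivalent. No step is a real obstacle. The only point needing care is that the outcome map may be many-to-one, so the reverse inclusion needs a realizing rule, and Lemma~\ref{lem:realize} provides one. Convexity or compactness plays no role.
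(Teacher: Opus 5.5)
Your proposal is correct and follows essentially the same route as the paper. You compute $\Tr[X_i^{a_i,b_i}]$ via $\Tr\circ\Tr_{-i}=\Tr$ to identify the relabelling constraints with $\Xi_i(\nu;a_i,b_i)\ge0$, realize the reverse inclusion with the proportional rule of Lemma~\ref{lem:realize}, and read off the second equality from Definition~\ref{def:bce} with singleton type sets; your explicit check that every $\nu$ with $\Omega$-marginal $\psi$ arises from $\sigma(a\mid\omega)=\nu(\omega,a)/\psi(\omega)$ fills in a detail the paper leaves implicit.
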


\begin{proof}
Composing $\Tr_{-i}$ with $\Tr$ gives the full trace, so by \eqref{eq:Xop}
\[
\Tr\big[X_i^{a_i,b_i}\big]
=\sum_{\omega,a_{-i}}\psi(\omega)u_i(b_i,a_{-i},\omega)\Tr\big[\rho_\omega^{(a_i,a_{-i})}\big]
=\sum_{\omega,a_{-i}}u_i(b_i,a_{-i},\omega)\,\nu(\omega,a_i,a_{-i}),
\]
which depends on the rule only through its outcome. Hence the constraints of
Definition~\ref{def:qbcepp} are exactly $\Xi_i(\nu;a_i,b_i)\ge0$, so every
$\QBCE_\pp$ outcome lies in the displayed set. Conversely, if $\nu$ lies in that set,
Lemma~\ref{lem:realize} realizes it on $Q$ and the identity just proved shows the
constraints hold. For the second equality take $S=S_\emptyset$ in
Definition~\ref{def:bce}: each $T_i$ is a singleton,
$\nu(\omega,a)=\psi(\omega)\sigma(a\mid\omega)$, and the BCE inequalities reduce to
$\Xi_i(\nu;a_i,b_i)\ge0$.
\end{proof}

Under classical post-processing the players never use their quantum systems, so those
systems impose no incentive constraints and carry no comparative content. Any quantum
analogue of Bergemann--Morris must therefore use Definition~\ref{def:qbce}. We record
the separation of the two models concretely.

\begin{example}[Helstrom separation]\label{ex:helstrom}
Let $\Ntot=1$, $\Omega=A=\{0,1\}$, $\psi$ uniform, $u(a,\omega)=\1[a=\omega]$,
$\Hil=\C^2$, and
\[
\rho_0=\proj0,\qquad\rho_1=\proj+,\qquad \ket+=\tfrac1{\sqrt2}(\ket0+\ket1),
\]
a state-correlated ensemble. For fixed $q\in\Delta(A)$ the rule
$\rho_\omega^a:=q(a)\rho_\omega$, whose outcome $\nu(\omega,a)=\tfrac12q(a)$ carries no
information, lies in $\QBCE_\pp$ but not in $\QBCE$ whenever $q(a)>0$.
\end{example}

\begin{proof}
Here $X^{a,b}=\sum_\omega\psi(\omega)u(b,\omega)q(a)\rho_\omega=\tfrac{q(a)}2\rho_b$, so
$\Tr[X^{a,b}]=\tfrac{q(a)}2$ for both $b$ and all relabelling constraints hold with
equality; thus $\nu\in\QBCE_\pp$, as Theorem~\ref{thm:ppdegen} requires. By
Theorem~\ref{thm:loewner}, membership in $\QBCE$ would require
$\tfrac{q(a)}2\rho_a\succeq\tfrac{q(a)}2\rho_b$ for both ordered pairs, i.e.\
$\rho_0\succeq\rho_1$ and $\rho_1\succeq\rho_0$. But
\[
\rho_0-\rho_1=\tfrac12\begin{pmatrix}1&-1\\-1&-1\end{pmatrix}
\qquad\text{has eigenvalues }\pm\tfrac1{\sqrt2},
\]
so neither domination holds. Quantitatively, the best deviation exceeds the obedience
bound by $\tfrac{q(a)}2\lambda_{\max}(\rho_0-\rho_1)=\tfrac{q(a)}{2\sqrt2}$;
equivalently the Helstrom success probability
$\tfrac12+\tfrac14\|\rho_0-\rho_1\|_1=\tfrac{2+\sqrt2}4\approx0.8536$ strictly exceeds
both the value $\tfrac12$ of ignoring the system and the value $\tfrac34$ of the
computational-basis measurement.
\end{proof}

\subsection{When the quantum structure carries no incentive content}\label{sec:trivial}

Before turning to the ordering we identify exactly which structures sit at its bottom.
Write $\rho_{i,\omega}:=\Tr_{-i}\rho_\omega$ for the local marginals, and call $Q$
\emph{locally flat} if $\rho_{i,\omega}$ is independent of $\omega$ for every $i$,
with common value $\rho_i$.

\begin{theorem}[Incentive-trivial structures]\label{thm:trivial}
$\QBCE(G,Q)=\BCE(G,S_\emptyset)$ for every basic game $G$ if and only if $Q$ is locally
flat.
\end{theorem}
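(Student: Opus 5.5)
The plan is to exploit the sandwich that is already available: since the relabelling constraints are a subset of the POVM constraints, Theorem~\ref{thm:ppdegen} gives
\[
\QBCE(G,Q)\subseteq\QBCE_\pp(G,Q)=\BCE(G,S_\emptyset)
\]
for every $G$ and every $Q$. The statement therefore reduces to showing that the reverse inclusion $\BCE(G,S_\emptyset)\subseteq\QBCE(G,Q)$ holds for all $G$ exactly when $Q$ is locally flat. I would prove the two directions separately, using Theorem~\ref{thm:loewner} throughout to turn obedience into Loewner domination.

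\emph{Sufficiency.} Assume $Q$ is locally flat with marginals $\rho_i$, and take $\nu\in\BCE(G,S_\emptyset)$. I realize $\nu$ by the proportional rule of Lemma~\ref{lem:realize}, $\rho_\omega^a=(\nu(\omega,a)/\psi(\omega))\rho_\omega$. Substituting into \eqref{eq:Xop}, and using $\Tr_{-i}\rho_\omega=\rho_i$ for every $\omega$, gives
\[
X_i^{a_i,b_i}=\Big(\sum_{\omega,a_{-i}}\nu(\omega,a_i,a_{-i})\,u_i(b_i,a_{-i},\omega)\Big)\rho_i .
\]
Hence $X_i^{a_i,a_i}-X_i^{a_i,b_i}=\Xi_i(\nu;a_i,b_i)\,\rho_i$. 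By Theorem~\ref{thm:ppdegen}, $\Xi_i(\nu;a_i,b_i)\ge0$, and $\rho_i\succeq0$, so this difference is positive semidefinite. Theorem~\ref{thm:loewner} then shows the rule is a QBCE, so $\nu\in\QBCE(G,Q)$.

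\emph{Necessity.} Suppose $Q$ is not locally flat: there are a player $i$ and states $\omega_0\neq\omega_1$ with $\rho_{i,\omega_0}\neq\rho_{i,\omega_1}$. I construct a single game $G$ and an outcome in $\BCE(G,S_\emptyset)\setminus\QBCE(G,Q)$. Give every player $j\neq i$ a singleton action set and zero payoff. Give player $i$ the actions $A_i=\{0,1\}$ and payoffs
\[
u_i(1,\omega)=0,\qquad u_i(0,\omega)=f(\omega),
\]
where $f(\omega_0)=1/\psi(\omega_0)$, $f(\omega_1)=-1/\psi(\omega_1)$, and $f=0$ at all other states. Let $\nu$ be the outcome in which player $i$ always plays $0$, that is $\nu(\omega,0)=\psi(\omega)$. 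It lies in $\BCE(G,S_\emptyset)$ because $\Xi_i(\nu;0,1)=\sum_\omega\psi(\omega)f(\omega)=0$.

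The step I expect to be the crux is that the inclusion must fail for \emph{every} decision rule realizing $\nu$, not merely for the proportional one. The choice of a deterministic outcome handles this. Since $\Tr\rho_\omega^{1}=0$ and $\rho_\omega^{1}\succeq0$, we get $\rho_\omega^1=0$, and therefore $\rho_\omega^0=\rho_\omega$ is forced. Consequently
\[
X_i^{0,0}-X_i^{0,1}=\sum_\omega\psi(\omega)f(\omega)\rho_{i,\omega}=\rho_{i,\omega_0}-\rho_{i,\omega_1}.
\]
This operator is a nonzero traceless Hermitian operator, so it has a strictly negative eigenvalue and is not positive semidefinite. By Theorem~\ref{thm:loewner}, no rule with outcome $\nu$ is a QBCE, so $\nu\notin\QBCE(G,Q)$ and the equality fails for this $G$.
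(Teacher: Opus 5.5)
Your proof is correct. The opening sandwich $\QBCE(G,Q)\subseteq\QBCE_{\pp}(G,Q)=\BCE(G,S_\emptyset)$ and the sufficiency argument are the same as the paper's. The proportional rule makes every obedience operator a multiple of $\rho_i$, with coefficient difference $\Xi_i(\nu;a_i,b_i)$.

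For necessity you take a different route from the paper. The paper gives player $i$ the $\{0,1\}$-valued payoffs $u_i(b_i,\omega)=\1[b_i=0,\omega=\omega_0]+\1[b_i=1,\omega=\omega_1]$, chooses a prior with $\psi(\omega_0)=\psi(\omega_1)$, and targets the mixed outcome $\nu(\omega,b_i)=\tfrac12\psi(\omega)$. That outcome has many realizing rules, including coherent splittings, so the paper argues for an arbitrary one:
\begin{itemize}
\item it writes $R^a(\omega)=\Tr_{-i}\rho_\omega^a$;
\item obedience at both recommendations gives $R^0(\omega_0)\succeq R^0(\omega_1)$ and $R^1(\omega_1)\succeq R^1(\omega_0)$;
\item the equal traces $\tfrac12$ force equality in both;
\item summing over $a$ then contradicts $\rho_{i,\omega_0}\neq\rho_{i,\omega_1}$.
\end{itemize}

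You instead pick a deterministic outcome. There, positivity plus zero trace, using full support of $\psi$, forces $\rho_\omega^1=0$, so the mediator has no freedom at all. Only the single difference $X_i^{0,0}-X_i^{0,1}=\rho_{i,\omega_0}-\rho_{i,\omega_1}$ needs to be examined, and it is indefinite because it is a nonzero traceless Hermitian operator.

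Your version has two advantages:
\begin{itemize}
\item it avoids quantifying over realizations entirely;
\item by scaling the payoffs with $1/\psi$, it works for any given full-support prior, whereas the paper uses the freedom to choose $\psi(\omega_0)=\psi(\omega_1)$. This would matter if the prior were held fixed as part of the common basic structure.
\end{itemize}
The paper's version keeps the payoffs independent of $\psi$. It also shows the obstruction persists even when the mediator may split $\rho_\omega$ arbitrarily. For the theorem as stated, the two arguments are equally complete.
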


\begin{proof}
($\Leftarrow$) Since $\QBCE\subseteq\QBCE_\pp$ and
$\QBCE_\pp(G,Q)=\BCE(G,S_\emptyset)$ by Theorem~\ref{thm:ppdegen}, only the inclusion
$\BCE(G,S_\emptyset)\subseteq\QBCE(G,Q)$ needs proof. Let
$\nu\in\BCE(G,S_\emptyset)$, so $\Xi_i(\nu;a_i,b_i)\ge0$ for all $i,a_i,b_i$, and take
the proportional rule $\rho_\omega^a=(\nu(\omega,a)/\psi(\omega))\rho_\omega$ of
Lemma~\ref{lem:realize}, whose outcome is $\nu$. By \eqref{eq:Xop} and local flatness,
\[
X_i^{a_i,b_i}
=\sum_{\omega,a_{-i}}\psi(\omega)u_i(b_i,a_{-i},\omega)\frac{\nu(\omega,a_i,a_{-i})}{\psi(\omega)}\,\rho_i
=\Big(\sum_{\omega,a_{-i}}u_i(b_i,a_{-i},\omega)\nu(\omega,a_i,a_{-i})\Big)\rho_i ,
\]
so $X_i^{a_i,a_i}-X_i^{a_i,b_i}=\Xi_i(\nu;a_i,b_i)\,\rho_i\succeq0$, and
Theorem~\ref{thm:loewner} applies.

($\Rightarrow$) Suppose $\rho_{i,\omega_0}\ne\rho_{i,\omega_1}$ for some player $i$
and states $\omega_0\ne\omega_1$. Take the basic game in which every player $j\ne i$
has a singleton action set, $A_i=\{0,1\}$, $u_j\equiv0$ for $j\ne i$,
\[
u_i(b_i,\omega):=\1\big[b_i=0,\ \omega=\omega_0\big]+\1\big[b_i=1,\ \omega=\omega_1\big],
\]
and $\psi$ any full-support prior with $\psi(\omega_0)=\psi(\omega_1)=:c$. The outcome
$\nu(\omega,b_i)=\tfrac12\psi(\omega)$ satisfies $\Xi_i\equiv0$, so it lies in
$\BCE(G,S_\emptyset)$. Let $\{\rho_\omega^{a}\}$ be any decision rule with this
outcome and put $R^{a}(\omega):=\Tr_{-i}[\rho_\omega^{a}]\succeq0$, so
$\sum_aR^a(\omega)=\rho_{i,\omega}$ and $\Tr R^a(\omega)=\tfrac12$ for all
$\omega,a$. Suppressing the singleton $a_{-i}$, \eqref{eq:Xop} gives
$X_i^{a,0}=c\,R^a(\omega_0)$ and $X_i^{a,1}=c\,R^a(\omega_1)$, so obedience
\eqref{eq:loewner} requires $R^0(\omega_0)\succeq R^0(\omega_1)$ at $a=0$ and
$R^1(\omega_1)\succeq R^1(\omega_0)$ at $a=1$. But
$\Tr[R^0(\omega_0)-R^0(\omega_1)]=\tfrac12-\tfrac12=0$, and a positive semidefinite
operator with zero trace vanishes, so $R^0(\omega_0)=R^0(\omega_1)$; symmetrically
$R^1(\omega_0)=R^1(\omega_1)$. Summing over $a$ gives
$\rho_{i,\omega_0}=\rho_{i,\omega_1}$, a contradiction. Hence no rule with this
outcome is obedient and $\QBCE(G,Q)\ne\BCE(G,S_\emptyset)$.
\end{proof}

All incentive content of a quantum information structure is therefore carried by the
$\omega$-dependence of its local marginals: obedience is an interim condition on what
player $i$ holds, and the only trace of the global state entering \eqref{eq:Xop} is
$\Tr_{-i}$ of the decision rule's components. When the marginals do not vary with
$\omega$, the proportional rule makes every obedience operator a scalar multiple of one
fixed $\rho_i$, collapsing the Loewner order to the scalar order. Restricted to
classical structures via Theorem~\ref{thm:classical}, the criterion says that $S$ is
BCE-equivalent to $S_\emptyset$ exactly when each player's marginal signal distribution
is $\omega$-independent, i.e.\ when every type of every player holds the prior belief
and this is common knowledge, which is the condition that $S$ has the canonical representation
of $S_\emptyset$, as in~\cite[Example~1]{BM2016}.

\section{Foundations of the solution concept}\label{sec:bm1}

Theorem~\ref{thm:bm1c} characterizes classical Bayes correlated equilibrium as the
behaviour arising in Bayes Nash equilibrium of an expansion. We
prove the quantum analogue, and then show that the mediator is not a physical operation
on the players' systems.

\begin{definition}[Expansion, split strategies]\label{def:expansion}
An \emph{expansion} of $Q$ is a structure $Q^\ast$ with $\Hil_i^\ast=\Hil_i\otimes E_i$
and $\Tr_E\rho_\omega^\ast=\rho_\omega$ for all $\omega$. Given an orthonormal basis
$\{\ket{e}\}$ of $E_i$, a \emph{split strategy} for player $i$ is the sequential
procedure: measure $E_i$ in that basis, obtain $e$, then measure $\Hil_i$ with a POVM
$\{N_e^{b_i}\}_{b_i\in A_i}$ and play the outcome. Because the two measurements act on
different tensor factors they commute, and the induced POVM on $\Hil_i^\ast$ has
elements $\sum_eN_e^{b_i}\otimes\proj e$.
\end{definition}

\begin{lemma}[Deviations collapse on a block-diagonal expansion]\label{lem:blockdiag}
Let $Q^\ast$ be an expansion whose ancillas carry orthonormal bases $\{\ket{a_i}\}$ and
suppose $\rho_\omega^\ast$ is block diagonal with respect to $\{\proj a\}_{a\in A}$,
$\proj a:=\bigotimes_i\proj{a_i}$, say
$\rho_\omega^\ast=\sum_a\varrho_\omega^a\otimes\proj a$. Fix $i$ and let every player
$j\ne i$ measure $E_j$ in $\{\ket{a_j}\}$ and play the outcome. Then every POVM
$\{M^{b_i}\}$ on $\Hil_i\otimes E_i$ induces the same joint distribution over
$(b_i,a_{-i},\omega)$ as the split strategy with second-stage POVMs
$M_{a_i}^{b_i}:=(\1\otimes\bra{a_i})M^{b_i}(\1\otimes\ket{a_i})$.
\end{lemma}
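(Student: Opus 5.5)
The plan is to compute the joint law of $(b_i,a_{-i},\omega)$ directly in both cases and compare the two expressions term by term. Under the stated strategies, the probability of $(\omega,b_i,a_{-i})$ when $i$ uses $\{M^{b_i}\}$ is
\[
\psi(\omega)\,\Tr\!\Big[\rho_\omega^\ast\,\big(M^{b_i}\otimes\textstyle\bigotimes_{j\ne i}(\1_{\Hil_j}\otimes\proj{a_j})\big)\Big],
\]
with the tensor factors suitably reordered so that $M^{b_i}$ acts on $\Hil_i\otimes E_i$ and the projectors act on the ancillas $E_j$. Substituting the block-diagonal form $\rho_\omega^\ast=\sum_{a'}\varrho_\omega^{a'}\otimes\proj{a'}$, the projectors $\proj{a_j}$ for $j\ne i$ select exactly the terms with $a'_{-i}=a_{-i}$. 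This leaves a sum over $a_i'\in A_i$ of terms of the form
\[
\Tr\!\big[(\varrho_\omega^{(a_i',a_{-i})}\otimes\proj{a_i'})\,(M^{b_i}\otimes\1)\big].
\]
Here the tensor order has been rearranged so that $M^{b_i}$ acts on $\Hil_i\otimes E_i$ and $\1$ on the remaining $\Hil_{-i}$.

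The key step is to evaluate each such term by taking the partial trace over $E_i$. We have $\Tr_{E_i}[(\1\otimes\proj{a_i'})M^{b_i}]=(\1\otimes\bra{a_i'})M^{b_i}(\1\otimes\ket{a_i'})=M^{b_i}_{a_i'}$, where the first identity is just cyclicity of the partial trace applied to the rank-one projector on $E_i$. So each term becomes
\[
\Tr\!\big[\varrho_\omega^{(a_i',a_{-i})}\,(M^{b_i}_{a_i'}\otimes\1_{\Hil_{-i}})\big].
\]
Only the diagonal block $M^{b_i}_{a_i'}$ of $M^{b_i}$ survives; the off-diagonal blocks $(\1\otimes\bra{a})M^{b_i}(\1\otimes\ket{a'})$ with $a\ne a'$ are annihilated because $\rho^\ast_\omega$ has no coherences between distinct $\ket{a_i}$.

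Next, I would compute the same probability under the split strategy of Definition~\ref{def:expansion}, whose POVM on $\Hil_i\otimes E_i$ is $\sum_{e}M^{b_i}_{e}\otimes\proj e$. Inserting this into the same trace and applying the same block selection gives exactly the same sum over $a_i'$, since $\Tr_{E_i}[(\1\otimes\proj{a_i'})(M_e^{b_i}\otimes\proj e)]=\delta_{e,a_i'}M^{b_i}_{e}$. The two distributions therefore coincide for every $(\omega,b_i,a_{-i})$.

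It remains to check that the split strategy is a legitimate strategy, that is, that $\{M^{b_i}_{a_i}\}_{b_i}$ is a POVM on $\Hil_i$ for each $a_i$. Positivity holds because compression $X\mapsto V^\dagger XV$ with $V=\1\otimes\ket{a_i}$ preserves positive semidefiniteness. Normalization holds because $\sum_{b_i}M^{b_i}_{a_i}=V^\dagger\1V=\1_{\Hil_i}$.

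The main obstacle is bookkeeping rather than any conceptual step: one has to keep track of the reordering of tensor factors between $\bigotimes_j(\Hil_j\otimes E_j)$ and $\Hil\otimes\bigotimes_j E_j$, and be precise about the fact that the other players' projective measurements commute with $M^{b_i}$ because they act on different factors. I would handle this by fixing the ordering $\Hil\otimes E$ once at the start, writing every operator in that ordering, and then using linearity so that the block calculation only ever needs to be checked on product operators.
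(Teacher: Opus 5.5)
Your proposal is correct and takes essentially the same route as the paper: expand $\rho^\ast_\omega$ in its block-diagonal form, let the other players' projectors force $a'_{-i}=a_{-i}$, note that pairing with $\proj{a_i'}$ on $E_i$ picks out only the diagonal block $M^{b_i}_{a_i'}$, and check that these compressions form POVMs on $\Hil_i$. Your partial-trace identity $\Tr_{E_i}[(\1\otimes\proj{a_i'})M^{b_i}]=M^{b_i}_{a_i'}$ is the same computation as the paper's block expansion $\Tr[M^{b_i}(X\otimes\proj{a_i'})]=\Tr[M^{b_i}_{a_i'a_i'}X]$, written in different notation.
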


\begin{proof}
Write $M^{b_i}=\sum_{a_i,a_i'}M^{b_i}_{a_ia_i'}\otimes\ket{a_i}\!\bra{a_i'}$; the
diagonal blocks are the $M_{a_i}^{b_i}$. For each $a_i$ these form a POVM on $\Hil_i$:
a diagonal block of a positive semidefinite operator is positive semidefinite, and
$\sum_{b_i}M^{b_i}=\1$ gives $\sum_{b_i}M_{a_i}^{b_i}=\1_{\Hil_i}$ blockwise.

The probability of $(b_i,a_{-i})$ in state $\omega$ is
$\Tr\big[\big(M^{b_i}\otimes\bigotimes_{j\ne i}\proj{a_j}\big)\rho_\omega^\ast\big]$.
Expanding $\rho_\omega^\ast$, the factors
$\braket{a_j}{a_j'}\braket{a_j'}{a_j}$ force $a_{-i}'=a_{-i}$, while on
$\Hil_i\otimes E_i$,
\[
\Tr\big[M^{b_i}\big(X\otimes\proj{a_i'}\big)\big]
=\sum_{a_i,a_i''}\Tr\big[M^{b_i}_{a_ia_i''}X\big]\braket{a_i''}{a_i'}\braket{a_i'}{a_i}
=\Tr\big[M^{b_i}_{a_i'a_i'}X\big],
\]
so only the diagonal blocks contribute, giving the expression computed for the split
strategy.
\end{proof}

\begin{theorem}[Quantum analogue of BM Theorem 1]\label{thm:bm1}
Let $\{\rho_\omega^a\}$ be a quantum decision rule for $(G,Q)$, let $E_i:=\C^{A_i}$, and
let $Q^\ast$ be the \emph{canonical expansion}
$\rho_\omega^\ast:=\sum_{a\in A}\rho_\omega^a\otimes\proj a$. Then $Q^\ast$ is an
expansion of $Q$, and the \emph{truthful} profile, in which each player measures $E_i$ in the
basis $\{\ket{a_i}\}$ and plays the outcome, is a quantum Bayes Nash equilibrium of
$(G,Q^\ast)$ under split deviations if and only if $\{\rho_\omega^a\}$ is a QBCE of
$(G,Q)$. The truthful profile induces $\{\rho_\omega^a\}$.
\end{theorem}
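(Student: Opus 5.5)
The plan is to verify the two routine claims first, and then reduce the equilibrium condition to the Loewner characterization of Theorem~\ref{thm:loewner} using Lemma~\ref{lem:blockdiag}.

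\emph{Routine claims.} First, $Q^\ast$ is an expansion of $Q$. The operator $\rho_\omega^\ast$ is a sum of positive semidefinite operators, so it is positive semidefinite. Tracing out $E=\bigotimes_iE_i$ gives $\Tr_E\rho_\omega^\ast=\sum_a\rho_\omega^a=\rho_\omega$, which also shows that $\rho_\omega^\ast$ has unit trace. Second, the truthful profile induces the rule: when every player measures $E_i$ in $\{\ket{a_i}\}$, the post-measurement (unnormalized) state on $\Hil$ for outcome $a$ is $(\1\otimes\bra a)\rho_\omega^\ast(\1\otimes\ket a)=\rho_\omega^a$. The joint law of $(\omega,a)$ is therefore $\psi(\omega)\Tr[\rho_\omega^a]$, and the conditional state of the $\Hil$-systems is exactly that of the decision rule.

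\emph{Equilibrium condition.} Fix a player $i$ and let all others play truthfully. A split deviation for $i$ is a family of POVMs $\{N_{a_i}^{b_i}\}_{b_i}$ on $\Hil_i$, one for each ancilla outcome $a_i$. Its induced POVM is $\sum_{a_i}N_{a_i}^{b_i}\otimes\proj{a_i}$, and its expected payoff is
\[
\sum_{\omega,a}\psi(\omega)\sum_{b_i}u_i(b_i,a_{-i},\omega)\Tr\big[(N_{a_i}^{b_i}\otimes\1)\rho_\omega^{a}\big]
=\sum_{a_i}\sum_{b_i}\Tr\big[N_{a_i}^{b_i}X_i^{a_i,b_i}\big].
\]
The equality uses $\Tr[(N\otimes\1)\xi]=\Tr[N\Tr_{-i}\xi]$ and the definition \eqref{eq:Xop}. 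The truthful strategy is the special case $N_{a_i}^{a_i}=\1$, whose payoff is $\sum_{a_i}\Tr[X_i^{a_i,a_i}]$.

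\emph{Decoupling over $a_i$.} The deviation payoff is a sum of terms, each depending only on the POVM chosen for one $a_i$. Truthful play is therefore a best response to all split deviations if and only if, for every $a_i$ separately and every POVM $\{M^{b_i}\}$, the obedience inequality \eqref{eq:qobed} holds. In one direction, summing the per-$a_i$ inequalities gives the global one. In the other, a deviation that changes the POVM only at a single $a_i$ isolates that term. Definition~\ref{def:qbce} is exactly this family of inequalities, which settles the equivalence for split deviations. Theorem~\ref{thm:loewner} then rewrites the condition as \eqref{eq:loewner}.

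\emph{Arbitrary deviations.} To extend the result from split deviations to all POVMs on $\Hil_i\otimes E_i$, I would invoke Lemma~\ref{lem:blockdiag}. The canonical expansion is block diagonal in the form the lemma requires, so any POVM on $\Hil_i\otimes E_i$ induces the same joint law of $(b_i,a_{-i},\omega)$, and hence the same payoff, as some split strategy. This matches the paper's claim of robustness under arbitrary local deviations.

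\emph{Main obstacle.} The delicate point is the decoupling step. I need to check that the payoff really splits additively across recommendations $a_i$, and that the per-$a_i$ deviations can be chosen independently, so that the global best-response condition is equivalent to the condition for each $(i,a_i)$. Beyond that, the proof is bookkeeping with partial traces.
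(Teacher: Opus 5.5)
Your proof is correct and follows the paper's argument essentially step for step: you check the expansion via $\Tr_E\rho_\omega^\ast=\sum_a\rho_\omega^a$, derive the payoff identity $\sum_{a_i}\sum_{b_i}\Tr[N_{a_i}^{b_i}X_i^{a_i,b_i}]$ for split deviations, and decouple over $a_i$ to recover Definition~\ref{def:qbce}. Your final paragraph on arbitrary POVMs via Lemma~\ref{lem:blockdiag} is not needed for this statement, but it is exactly how the paper proves Corollary~\ref{cor:bm1general}.
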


\begin{proof}
$\Tr_E\rho_\omega^\ast=\sum_a\rho_\omega^a=\rho_\omega$, so $Q^\ast$ is an expansion.
Suppose all players $j\ne i$ play truthfully and player $i$ uses the split strategy
with second-stage POVMs $\{M_{a_i}^{b_i}\}$. The probability of the joint action
profile $(b_i,a_{-i})$ in state $\omega$ is
\[
\sum_{a'\in A}\Tr\Big[\Big(\big(\textstyle\sum_{a_i}M_{a_i}^{b_i}\otimes\proj{a_i}\big)
\otimes\bigotimes_{j\ne i}\proj{a_j}\Big)\big(\rho_\omega^{a'}\otimes\proj{a'}\big)\Big]
\]
\[
\;=\;\sum_{a_i}\Tr\big[(M_{a_i}^{b_i}\otimes\1_{-i})\rho_\omega^{(a_i,a_{-i})}\big],
\]
since the ancilla projectors force $a'=(a_i,a_{-i})$. Player $i$'s expected payoff is
therefore
\[
\sum_{\omega,b_i,a_{-i}}\psi(\omega)u_i(b_i,a_{-i},\omega)\sum_{a_i}\Tr\big[(M_{a_i}^{b_i}\otimes\1)\rho_\omega^{(a_i,a_{-i})}\big]
=\sum_{a_i}\sum_{b_i}\Tr\big[M_{a_i}^{b_i}X_i^{a_i,b_i}\big]
\]
by \eqref{eq:Xop}, and the truthful choice $M_{a_i}^{b_i}=\delta_{a_ib_i}\1$ gives
$\sum_{a_i}\Tr[X_i^{a_i,a_i}]$. The constraint decouples over $a_i$, the families
$\{M_{a_i}^{b_i}\}_{b_i}$ being independent POVMs, so truthfulness is optimal if and
only if for each $a_i$ and every POVM $\{M^{b_i}\}$ on $\Hil_i$,
$\sum_{b_i}\Tr[M^{b_i}X_i^{a_i,b_i}]\le\Tr[X_i^{a_i,a_i}]$, which is
Definition~\ref{def:qbce}. Taking $M_{a_i}^{b_i}=\delta_{a_ib_i}\1$ in the displayed
probability shows the truthful profile induces $\{\rho_\omega^a\}$.
\end{proof}

\begin{corollary}[Arbitrary deviations]\label{cor:bm1general}
In Theorem~\ref{thm:bm1} the deviation class may be enlarged to \emph{all} POVMs on
$\Hil_i\otimes E_i$, including those entangling the recommendation register with the
payoff-relevant subsystem.
\end{corollary}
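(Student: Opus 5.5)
Since every split strategy is a POVM on $\Hil_i\otimes E_i$, enlarging the deviation class can only add constraints. So one direction is immediate: if the truthful profile is optimal against all POVMs on $\Hil_i\otimes E_i$, it is in particular optimal against split deviations, and Theorem~\ref{thm:bm1} then gives that $\{\rho_\omega^a\}$ is a QBCE. The content of the corollary is the converse, namely that no general POVM does better than the best split strategy when the state is the canonical expansion.

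For the converse I would invoke Lemma~\ref{lem:blockdiag} directly.

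\begin{enumerate}
\item The canonical expansion $\rho_\omega^\ast=\sum_a\rho_\omega^a\otimes\proj a$ is block diagonal with respect to $\{\proj a\}_{a\in A}$, with blocks $\varrho_\omega^a=\rho_\omega^a$, which is exactly the hypothesis of the lemma.
\item Fix player $i$ and suppose all $j\ne i$ play truthfully, i.e.\ measure $E_j$ in $\{\ket{a_j}\}$ and play the outcome.
\item By the lemma, any POVM $\{M^{b_i}\}$ on $\Hil_i\otimes E_i$ induces the same joint law of $(b_i,a_{-i},\omega)$ as the split strategy with second-stage POVMs $M_{a_i}^{b_i}=(\1\otimes\bra{a_i})M^{b_i}(\1\otimes\ket{a_i})$.
\item Player $i$'s expected payoff depends only on $\psi$, $u_i$ and this joint law. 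So the general deviation and its associated split strategy yield equal payoffs.
\end{enumerate}

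Thus the supremum of player $i$'s payoff over all POVMs equals the supremum over split strategies. The latter is at most the truthful payoff precisely when the QBCE conditions hold, as shown in the proof of Theorem~\ref{thm:bm1}. Hence the truthful profile is an equilibrium against arbitrary deviations if and only if it is one against split deviations, which the theorem equates with QBCE.

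The only point needing care is that the lemma's computation holds for each fixed $\omega$ and for every $a_{-i}$ simultaneously. In particular, off-diagonal ancilla blocks $M^{b_i}_{a_ia_i'}$, which are what make a deviation entangle the recommendation register with $\Hil_i$, contribute nothing because $\rho_\omega^\ast$ has no off-diagonal coherence in the $E_i$ register. That vanishing is already established in Lemma~\ref{lem:blockdiag}, so I expect no real obstacle beyond checking that the block-diagonal hypothesis is met, which it is by construction of the canonical expansion.
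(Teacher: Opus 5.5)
Your proposal is correct and follows the paper's own argument. The canonical expansion is block diagonal, so Lemma~\ref{lem:blockdiag} makes every POVM on $\Hil_i\otimes E_i$ payoff-equivalent to a split strategy, and Theorem~\ref{thm:bm1} then finishes the proof. Your explicit treatment of the trivial direction, that split strategies are themselves POVMs, is a harmless addition that the paper leaves implicit.
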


\begin{proof}
The canonical expansion state is block diagonal with respect to $\{\proj a\}$, so by
Lemma~\ref{lem:blockdiag} every POVM on $\Hil_i\otimes E_i$ is payoff-equivalent to a
split strategy. The supremum of player $i$'s payoff over arbitrary deviations therefore
equals the supremum over split deviations, and Theorem~\ref{thm:bm1} applies.
\end{proof}

A block-diagonal state is blind to the off-diagonal blocks of a measurement operator:
coherence between distinct recommendations is simply not present in
$\rho_\omega^\ast$, so no deviation can exploit it. Measurement disturbance, which
might be expected to obstruct the argument, does not.

\begin{proposition}[The mediator is not a quantum instrument]\label{prop:noinstrument}
There are a quantum information structure $Q$ with $\Ntot=1$, a decision problem $D$,
and a QBCE decision rule of $(D,Q)$ for which no family of positive maps, and in particular no
family of completely positive maps, $\mathcal{I}_a$ satisfies
$\mathcal{I}_a(\rho_\omega)=\rho_\omega^a$ for all $\omega$ and $a$.
\end{proposition}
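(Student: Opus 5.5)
The plan is to exhibit an explicit counterexample, exploiting linearity. If a positive (or completely positive) map $\mathcal{I}_a$ satisfies $\mathcal{I}_a(\rho_\omega)=\rho_\omega^a$ for all $\omega$, then any linear relation among the $\rho_\omega$ must be inherited by the $\rho_\omega^a$. It suffices to find a structure in which some $\rho_\omega$ is a convex combination of the others, together with a QBCE rule that breaks this relation. Since $\Omega$ is finite and the prior only weights states, I will take $\Ntot=1$, $\Hil=\C^2$, and three states $\omega\in\{0,1,2\}$ with $\psi$ uniform. Set $\rho_0=\proj0$, $\rho_1=\proj1$ and $\rho_2=\tfrac12\1=\tfrac12(\rho_0+\rho_1)$. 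This is already a classical embedded structure, which keeps the obedience check transparent. Any linear $\mathcal{I}_a$ then forces
\[
\rho_2^a=\tfrac12\bigl(\rho_0^a+\rho_1^a\bigr).
\]
Positivity of the maps is never actually used, so the conclusion holds for arbitrary linear maps, and a fortiori for positive and completely positive ones.

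For the rule, the key freedom is that the mediator observes $\omega$. Take $A=\{0,1\}$ and a decision problem in which the action is payoff-irrelevant, e.g.\ $u\equiv0$. Then every $X^{a,b}=0$, so by Theorem~\ref{thm:loewner} every quantum decision rule is a QBCE. Choose $\rho_0^a=\tfrac12\rho_0$ and $\rho_1^a=\tfrac12\rho_1$ for both $a$, but at $\omega=2$ use the state-dependent split $\rho_2^0=\rho_2$ and $\rho_2^1=0$. All pieces are positive semidefinite and sum to $\rho_\omega$, so this is a decision rule. However, $\tfrac12(\rho_0^0+\rho_1^0)=\tfrac14\1\neq\tfrac12\1=\rho_2^0$, which contradicts the displayed identity.

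If a trivial game is thought too degenerate, the same argument works with a nontrivial one. Keep the split at $\omega=2$ as the only state-dependent feature, and verify the Loewner conditions \eqref{eq:loewner} through Theorem~\ref{thm:classical}'s diagonal reduction. Alternatively, use the state-wise Nash rule of Proposition~\ref{prop:nonempty}, with payoffs making different actions optimal at $\omega=2$ than the average of those at $\omega=0,1$. For instance, let $u(a,\omega)$ reward $a=0$ only at $\omega=2$ and be constant in $a$ at $\omega=0,1$. Then $\varsigma_2=\delta_0$ while $\varsigma_0=\varsigma_1$ is uniform, and the rule $\rho_\omega^a=\varsigma_\omega(a)\rho_\omega$ is a QBCE violating linearity by the same computation.

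The main obstacle is only conceptual. One must make sure the contradiction uses nothing beyond linearity, namely that $\mathcal{I}_a$ is applied to all $\omega$ simultaneously. One must also confirm that the chosen rule passes obedience, which is immediate in either version from Theorem~\ref{thm:loewner} or Proposition~\ref{prop:nonempty}.
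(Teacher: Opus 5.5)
Your proof is correct, but it takes a different route from the paper's.

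\textbf{The paper's route.} The paper takes four qubit states $\proj0,\proj1,\proj+,\proj{{+}i}$ whose Bloch vectors affinely span $\R^3$. These form a basis of the Hermitian operators, so any assignment $\omega\mapsto\rho_\omega^0$ is realized by a \emph{unique} linear map. For the QBCE rule ``recommend $0$ exactly when $\omega=1$'' the paper computes this map in the Pauli basis. It then exhibits a pure state sent to $\tfrac12(1-\sqrt3)\proj0$. The linear map therefore exists, and the obstruction is genuinely a failure of positivity.

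\textbf{Your route.} You build in a linear dependence, $\rho_2=\tfrac12(\rho_0+\rho_1)$, and use a state-dependent split that breaks it, so no linear map exists at all. Positivity is never invoked and no Bloch-sphere computation is needed. Obedience is vacuous for $u\equiv0$. Your alternative via Proposition~\ref{prop:nonempty} yields the same rule for a non-vacuous decision problem.

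\textbf{What each buys.}
\begin{itemize}
\item Your version is shorter, and it shows the phenomenon already on an embedded classical structure. There it is simply the familiar fact that a Bergemann--Morris decision rule may depend on $\omega$ beyond the players' signals.
\item The paper's version buys a stronger qualitative point. The mediator fails to be an instrument even when the ensemble is informationally complete. In that case every state-wise split is linearly realizable, so the failure cannot be attributed to linear dependence among the $\rho_\omega$.
\end{itemize}
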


\begin{proof}
Let $\Hil=\C^2$, $\Omega=\{1,2,3,4\}$, $\psi$ uniform, and $\rho_1=\proj0$,
$\rho_2=\proj1$, $\rho_3=\proj+$, $\rho_4=\proj{{+}i}$ with
$\ket{{+}i}=(\ket0+i\ket1)/\sqrt2$; their Bloch vectors $e_z,-e_z,e_x,e_y$ affinely
span $\R^3$, so $\{\rho_\omega\}$ spans the real space of Hermitian operators on
$\C^2$. Let $A=\{0,1\}$ and take
$\rho_\omega^0:=\1[\omega=1]\rho_\omega$, $\rho_\omega^1:=\1[\omega\ne1]\rho_\omega$, a
decision rule. It is a QBCE for the decision problem with
$u(a,\omega):=\1[a=0,\omega=1]+\1[a=1,\omega\ne1]$: by Theorem~\ref{thm:loewner},
obedience at $a=0$ requires
$\sum_\omega\psi(\omega)[u(0,\omega)-u(1,\omega)]\rho_\omega^0=\psi(1)\rho_1\succeq0$
and at $a=1$ requires
$\sum_{\omega\ne1}\psi(\omega)\rho_\omega\succeq0$, both of which hold.

Suppose $\mathcal{I}_0$ were linear with $\mathcal{I}_0(\rho_\omega)=\rho_\omega^0$ for
all $\omega$. Since the $\rho_\omega$ span the Hermitian operators, $\mathcal{I}_0$ is
uniquely determined; solving in the Pauli basis from $\mathcal{I}_0(\proj0)=\proj0$ and
$\mathcal{I}_0(\rho_\omega)=0$ for $\omega\ne1$ gives
$\mathcal{I}_0(\1)=\mathcal{I}_0(\sigma_z)=\proj0$ and
$\mathcal{I}_0(\sigma_x)=\mathcal{I}_0(\sigma_y)=-\proj0$, so for
$\rho(r)=\tfrac12(\1+r\cdot\vec\tau)$,
\[
\mathcal{I}_0\big(\rho(r)\big)=\tfrac12\big(1+r_z-r_x-r_y\big)\proj0 .
\]
At the unit Bloch vector $r=(1,1,-1)/\sqrt3$ the coefficient is $1-\sqrt3<0$, so
$\mathcal{I}_0(\rho(r))$ is not positive semidefinite. Hence $\mathcal{I}_0$ is not
positive, and a fortiori not completely positive.
\end{proof}

Proposition~\ref{prop:noinstrument} is not a technical obstruction to be worked around;
it records that the mediator is not a physical device, and that this is deliberate. An
instrument is an operation applied to the players' systems by someone who does not know
$\omega$. The mediator does know $\omega$, and splits the ensemble conditionally on it.
A rule such as ``recommend $0$ exactly when $\omega=1$'' is therefore available to the
mediator and unavailable to any channel, since no physical map can read $\omega$ off
states it cannot distinguish.

This is what the omniscience in Definition~\ref{def:qdr} buys, and
Theorem~\ref{thm:bm1} is what makes it legitimate. The equilibria are exactly the
behaviours realizable in an expansion whose state $\rho_\omega^\ast$ is prepared
conditionally on $\omega$, and preparing a state conditionally on a classical parameter
is an ordinary physical operation. What is not physical is the attempt to compress that
preparation into a single $\omega$-independent operation on $Q$, and
Proposition~\ref{prop:noinstrument} shows that no such compression exists.

\section{Quantum individual sufficiency and the comparison theorem}\label{sec:order}

We now build the order on quantum information structures and prove the comparison
theorem. The classical template is Definition~\ref{def:is}, whose two operative
features are a transfer of decision rules that may depend on $\omega$ and a single
local kernel per player depending on neither $\omega$ nor the other players' data. One
question has to be settled before the quantum definition can be written down, namely
which class of maps the transfer should range over. The obvious candidate is the class
of channels, since those are the physically implementable transformations and the
single-agent theory of~\cite{Buscemi2012} is built from them. The following theorem
shows that this candidate cannot work.

\begin{theorem}[Transposition]\label{thm:transpose}
Let $Q=\big((\Hil_i),(\rho_\omega)\big)$ be a quantum information structure and let
$Q^{\mathsf T}:=\big((\Hil_i),(\rho_\omega^{\mathsf T})\big)$, the transpose being taken
in a fixed product basis of $\bigotimes_i\Hil_i$. Then
\begin{enumerate}[label=(\roman*),leftmargin=2em]
  \item $\QBCE(G,Q)=\QBCE(G,Q^{\mathsf T})$ for every basic game $G$ and every number of
  players, so $Q$ and $Q^{\mathsf T}$ are indistinguishable by the incentive ordering;
  \item there are structures $Q$ for which no channel carries the ensemble of $Q$ to
  that of $Q^{\mathsf T}$, nor conversely.
\end{enumerate}
Consequently no relation defined through completely positive maps can characterize the
incentive ordering, even for a single player.
\end{theorem}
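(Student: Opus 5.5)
Part (i) is proved by transporting decision rules through the transpose map $\mathsf T$. Part (ii) is proved by choosing an ensemble rich enough that $\mathsf T$ is the \emph{only} linear map carrying it to its transpose.

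For (i), fix $G$ and let $\{\rho_\omega^a\}$ be a QBCE of $(G,Q)$. I will show that $\{(\rho_\omega^a)^{\mathsf T}\}$ is a QBCE of $(G,Q^{\mathsf T})$ with the same outcome.
\begin{enumerate}[label=(\alph*),leftmargin=2em]
\item \emph{It is a decision rule.} Transposition is linear and positive, since it preserves spectra, so each $(\rho_\omega^a)^{\mathsf T}\succeq0$ and $\sum_a(\rho_\omega^a)^{\mathsf T}=\rho_\omega^{\mathsf T}$.
\item \emph{The outcome is unchanged.} Transposition is trace preserving, so $\psi(\omega)\Tr[(\rho_\omega^a)^{\mathsf T}]=\nu(\omega,a)$.
\item \emph{Partial trace commutes with transposition.} Because the transpose is taken in a product basis, it factorizes as $\bigotimes_j\mathsf T_j$. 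Checking on product operators $\xi_i\otimes\xi_{-i}$ as in Step~2 of Theorem~\ref{thm:classical}, and using $\Tr[\xi_{-i}^{\mathsf T}]=\Tr[\xi_{-i}]$, gives $\Tr_{-i}[\xi^{\mathsf T}]=(\Tr_{-i}\xi)^{\mathsf T_i}$.
\item \emph{Obedience is preserved.} By \eqref{eq:Xop} and linearity, the new obedience operators are $(X_i^{a_i,b_i})^{\mathsf T_i}$. Since $\mathsf T_i$ is a positive linear map, it preserves the Loewner order, so \eqref{eq:loewner} transfers and Theorem~\ref{thm:loewner} applies.
\end{enumerate}
This gives $\QBCE(G,Q)\subseteq\QBCE(G,Q^{\mathsf T})$. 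Because $(Q^{\mathsf T})^{\mathsf T}=Q$, applying the same argument to $Q^{\mathsf T}$ gives the reverse inclusion. Nothing in the argument depends on the number of players.

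For (ii), I take $\Ntot=1$ and $\Hil=\C^2$ with the four states of Proposition~\ref{prop:noinstrument}: $\proj0$, $\proj1$, $\proj+$ and $\proj{{+}i}$. Their Bloch vectors affinely span $\R^3$, so the states span the real space of Hermitian operators, and hence by complexification all of $\Bop(\C^2)$. The transposes are $\proj0$, $\proj1$, $\proj+$ and $\proj{{-}i}$. Any linear map $\Phi$ with $\Phi(\rho_\omega)=\rho_\omega^{\mathsf T}$ for all $\omega$ is therefore determined on a spanning set, and since $\mathsf T$ itself satisfies these equations, $\Phi=\mathsf T$. Transposition is not completely positive: applied to one half of a maximally entangled state it yields the swap operator divided by $2$, which has eigenvalue $-\tfrac12$. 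So no channel carries $Q$ to $Q^{\mathsf T}$. The converse direction is identical, since the family $\{\rho_\omega^{\mathsf T}\}$ also spans and the only candidate map is again $\mathsf T$.

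For the consequence, suppose some relation defined through channels characterized the incentive ordering. By (i), $Q$ and $Q^{\mathsf T}$ are incentive-equivalent, so the relation would hold between them in at least one direction. That would require a channel between the two ensembles, which (ii) forbids, and this already happens with $\Ntot=1$.

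The main obstacle is making this last step precise for an arbitrary relation ``defined through completely positive maps''. The argument covers relations that, whenever they hold, supply a channel mapping one ensemble to the other, possibly after discarding ancillas or combining with other maps. I will state that structural hypothesis explicitly rather than attempt a fully general formalization. If the intended relation allows ancillary systems, I would first try to show that any channel $\Phi$ with an ancilla, after the ancilla is traced out, restricts to a linear map on $\Bop(\C^2)$ sending $\rho_\omega\mapsto\rho_\omega^{\mathsf T}$. The same spanning argument would then force that restriction to equal $\mathsf T$, which is not completely positive.
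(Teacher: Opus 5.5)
Your proposal is correct and follows the paper's proof essentially step for step. In part (i) you transport decision rules through the product-basis transpose, using that it commutes with partial traces and preserves the Loewner order; in part (ii) you use the same four-state qubit ensemble, with your linear-span uniqueness argument in place of the paper's affine Bloch-vector computation and the swap operator showing non-complete-positivity in both. Your explicit hypothesis for the final clause, that the relation must supply a single channel between the ensembles, usefully sharpens a step the paper leaves implicit: a relation using only $\omega$-dependent channels would hold trivially via replacement maps.
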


\begin{proof}
(i) Transposition in a product basis commutes with every partial trace, that is,
$\Tr_{-i}[\xi^{\mathsf T}]=(\Tr_{-i}\xi)^{\mathsf T}$: by linearity it suffices to check
this on product operators $\xi_i\otimes\xi_{-i}$, where both sides equal
$\xi_i^{\mathsf T}\Tr[\xi_{-i}]$.

Let $\{\rho_\omega^a\}$ be a decision rule for $Q$. Transposition is linear and
preserves positive semidefiniteness, so the operators $(\rho_\omega^a)^{\mathsf T}$ are
positive semidefinite and sum to $\rho_\omega^{\mathsf T}$, making
$\{(\rho_\omega^a)^{\mathsf T}\}$ a decision rule for $Q^{\mathsf T}$; and it preserves
traces, so the induced outcome is unchanged. By the commutation just proved and
\eqref{eq:Xop}, the obedience operators of the transposed rule are
$(X_i^{a_i,b_i})^{\mathsf T}$. Transposition preserves the Loewner order, since
$Z\succeq0$ if and only if $Z^{\mathsf T}\succeq0$, so by Theorem~\ref{thm:loewner} the
transposed rule is a quantum Bayes correlated equilibrium exactly when the original one
is. The assignment is an involution, so the two outcome sets coincide.

(ii) Take $\Ntot=1$, $\Hil=\C^2$, $\Omega=\{1,2,3,4\}$ and
\[
\rho_1=\proj0,\quad\rho_2=\proj1,\quad\rho_3=\proj+,\quad\rho_4=\proj{{+}i},
\qquad \ket{{+}i}:=\tfrac1{\sqrt2}(\ket0+i\ket1).
\]
Let $e_x,e_y,e_z$ be the standard basis of $\R^3$ and $\vec\tau$ the vector of Pauli
matrices, and write a qubit state as $\rho(r)=\tfrac12(\1+r\cdot\vec\tau)$ for
$r\in\R^3$ with $\|r\|\le1$. Every channel on a qubit acts on Bloch vectors as an
affine map $r\mapsto\Theta r+c$. The Bloch vectors of $\rho_1,\dots,\rho_4$ are
$e_z,-e_z,e_x,e_y$, whose affine hull is all of $\R^3$, so $\Theta$ and $c$ are
determined by the four images. Transposition is complex conjugation in the
computational basis and acts as $(x,y,z)\mapsto(x,-y,z)$, so the images are
$e_z,-e_z,e_x,-e_y$. From the images of $\pm e_z$ we get $\Theta e_z+c=e_z$ and
$-\Theta e_z+c=-e_z$, whence $c=0$ and $\Theta e_z=e_z$; then $\Theta e_x=e_x$ and
$\Theta e_y=-e_y$, so $\Theta=\operatorname{diag}(1,-1,1)$ and the map is transposition
itself. Its Choi operator is
$\sum_{jk}\ket j\!\bra k\otimes(\ket j\!\bra k)^{\mathsf T}$, the swap operator, whose
spectrum is $\{-1,1,1,1\}$; it is not positive semidefinite, so transposition is not
completely positive and no channel realizes the required action. Exchanging the roles of
$Q$ and $Q^{\mathsf T}$ gives the other direction.

The final claim follows: by (i) any relation characterizing the incentive ordering must
relate $Q$ and $Q^{\mathsf T}$ in both directions, and by (ii) no relation defined
through channels does.
\end{proof}

The mechanism is that QBCE is defined with \emph{classical} actions: for any POVM
$\{M^a\}$ the family $\{(M^a)^{\mathsf T}\}$ is again a POVM, and
$\Tr[M^a\rho_\omega^{\mathsf T}]=\Tr[(M^a)^{\mathsf T}\rho_\omega]$, so no
classical-action test can distinguish an ensemble from its transpose. Buscemi's
characterization of completely positive sufficiency~\cite{Buscemi2012} uses tests on
$\rho_\omega$ entangled with a reference system, which are exactly the tests this
framework excludes by construction (Section~\ref{sub:classicalactions}). The incentive
ordering is therefore strictly coarser than the order of~\cite{Buscemi2012}.

Transposition is nevertheless positive and trace preserving, and this is the property
the definition should retain. We therefore quantize Definition~\ref{def:is} using
positive trace-preserving maps rather than channels.

\begin{definition}[Quantum individual sufficiency]\label{def:qis-order}
$Q\qis Q'$ if there exist
\begin{enumerate}[label=(\alph*),leftmargin=1.8em]
  \item positive trace-preserving maps $\mathcal{T}_\omega:\Bop(\Hil)\to\Bop(\Hil')$ with
  $\mathcal{T}_\omega(\rho_\omega)=\rho'_\omega$ for every $\omega$, and
  \item for each $i$ a positive trace-preserving map
  $\Phi_i:\Bop(\Hil_i)\to\Bop(\Hil_i')$, \emph{independent of $\omega$},
\end{enumerate}
such that the \emph{local compatibility} identity
\begin{equation}\label{eq:loccomp}
\Tr_{-i}\big[\mathcal{T}_\omega(\xi)\big]\;=\;\Phi_i\big(\Tr_{-i}[\xi]\big)
\end{equation}
holds for every $\omega$, every $i$, and every $\xi\in\Bop(\Hil)$ supported on
$\supp\rho_\omega$. We write $Q\inc Q'$ for the \emph{incentive ordering},
$\QBCE(G,Q)\subseteq\QBCE(G,Q')$ for every basic game $G$.
\end{definition}

The $\omega$-independence of $\Phi_i$ is the exact counterpart of BM's requirement that
$\Pr(t_i'\mid t_i,t_{-i},\omega)$ depend on neither $\omega$ nor $t_{-i}$; the
$\omega$-dependence permitted in $\mathcal{T}_\omega$ mirrors the fact that BM's transfer
\cite[eq.~(8)]{BM2016} is built from $\pi^\ast(\cdot\mid\omega)$. Restricting
\eqref{eq:loccomp} to $\supp\rho_\omega$ is the counterpart of BM's ``whenever the
denominator is nonzero'', and is harmless because every decision rule satisfies
$\rho_\omega^a\preceq\rho_\omega$.

\begin{proposition}[$\qis$ and $\inc$ are preorders]\label{prop:preorder}
Both relations are reflexive and transitive.
\end{proposition}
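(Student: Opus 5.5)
The incentive ordering $\inc$ is immediate: it is defined by set inclusion $\QBCE(G,Q)\subseteq\QBCE(G,Q')$ for every basic game $G$. Inclusion of sets is reflexive and transitive, and quantifying over all $G$ preserves both properties. So all the work concerns $\qis$, and I would verify its two properties directly from Definition~\ref{def:qis-order}.

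\emph{Reflexivity.} Take $\mathcal{T}_\omega:=\mathrm{id}_{\Bop(\Hil)}$ for every $\omega$ and $\Phi_i:=\mathrm{id}_{\Bop(\Hil_i)}$. These maps are positive and trace preserving. Condition (a) reads $\rho_\omega=\rho_\omega$, and \eqref{eq:loccomp} reads $\Tr_{-i}\xi=\Tr_{-i}\xi$ for every $\xi$, a fortiori for $\xi$ supported on $\supp\rho_\omega$.

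\emph{Transitivity.} Suppose $Q\qis Q'$ via $(\mathcal{T}_\omega,\Phi_i)$ and $Q'\qis Q''$ via $(\mathcal{T}'_\omega,\Phi'_i)$. I would set
\[
\mathcal{T}''_\omega:=\mathcal{T}'_\omega\circ\mathcal{T}_\omega,\qquad \Phi''_i:=\Phi'_i\circ\Phi_i .
\]
Compositions of positive trace-preserving maps are positive and trace preserving. Each $\Phi''_i$ is still independent of $\omega$, and $\mathcal{T}''_\omega(\rho_\omega)=\mathcal{T}'_\omega(\rho'_\omega)=\rho''_\omega$. For local compatibility, fix $\xi$ supported on $\supp\rho_\omega$. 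The first relation gives $\Tr_{-i}\mathcal{T}_\omega(\xi)=\Phi_i(\Tr_{-i}\xi)$. If $\eta:=\mathcal{T}_\omega(\xi)$ is supported on $\supp\rho'_\omega$, the second relation gives
\[
\Tr_{-i}\mathcal{T}'_\omega(\eta)=\Phi'_i(\Tr_{-i}\eta)=\Phi'_i\Phi_i(\Tr_{-i}\xi),
\]
which is the required identity.

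\emph{Main obstacle: the support condition.} I must show that $\mathcal{T}_\omega$ maps operators supported on $\supp\rho_\omega$ to operators supported on $\supp\rho'_\omega$. First take $\xi\succeq0$ with $\supp\xi\subseteq\supp\rho_\omega$. Then $\xi\preceq c\rho_\omega$ for some $c>0$, and positivity gives $0\preceq\mathcal{T}_\omega(\xi)\preceq c\rho'_\omega$. An operator squeezed between $0$ and $c\rho'_\omega$ has support inside $\supp\rho'_\omega$. A general $\xi$ supported on $\supp\rho_\omega$ (meaning $\xi=P\xi P$, where $P$ is the support projection) is a complex combination of four such positive operators, by the Hermitian/anti-Hermitian split followed by the positive/negative parts. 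This relies on $\mathcal{T}_\omega$ being complex-linear; its Hermiticity preservation, which follows from positivity, then keeps the image supported on $\supp\rho'_\omega$. Linearity closes the argument, and transitivity follows.
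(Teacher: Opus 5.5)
Your proof is correct and follows the paper's argument essentially step for step. You use identity witnesses for reflexivity, composed witnesses $(\mathcal{T}'_\omega\circ\mathcal{T}_\omega,\ \Phi'_i\circ\Phi_i)$ for transitivity, and the same sandwich $0\preceq\mathcal{T}_\omega(\xi)\preceq c\rho'_\omega$ to show that $\mathcal{T}_\omega$ preserves the support condition needed to chain the two local compatibility identities. Your decomposition into four positive parts is also slightly more careful than the paper's Hermitian-only difference argument, since Definition~\ref{def:qis-order} quantifies over all $\xi\in\Bop(\Hil)$ supported on $\supp\rho_\omega$; linearity alone then finishes it, so the appeal to Hermiticity preservation is unnecessary.
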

\begin{proof}
$\inc$ is a preorder because set inclusion is. For $\qis$, reflexivity holds with
$\mathcal{T}_\omega=\mathrm{id}$ and $\Phi_i=\mathrm{id}$. For transitivity let
$(\mathcal{T}_\omega,\Phi_i)$ witness $Q\qis Q'$ and $(\mathcal{T}'_\omega,\Phi'_i)$ witness $Q'\qis Q''$,
and set $\mathcal{T}''_\omega:=\mathcal{T}'_\omega\circ \mathcal{T}_\omega$, $\Phi''_i:=\Phi'_i\circ\Phi_i$.
Compositions of positive maps are positive and compositions of trace-preserving maps
are trace preserving; $\mathcal{T}''_\omega(\rho_\omega)=\mathcal{T}'_\omega(\rho'_\omega)=\rho''_\omega$;
and each $\Phi''_i$ is $\omega$-independent.

For \eqref{eq:loccomp} we need that $\mathcal{T}_\omega$ maps operators supported on
$\supp\rho_\omega$ to operators supported on $\supp\rho'_\omega$. Let $\xi\succeq0$
with $\supp\xi\subseteq\supp\rho_\omega$; then $\xi\preceq c\rho_\omega$ for some
$c>0$, so positivity and linearity give $\mathcal{T}_\omega(c\rho_\omega-\xi)\succeq0$, i.e.\
$0\preceq \mathcal{T}_\omega(\xi)\preceq c\rho'_\omega$, whence
$\supp \mathcal{T}_\omega(\xi)\subseteq\supp\rho'_\omega$. A general Hermitian $\xi$ supported on
$\supp\rho_\omega$ is a difference of two such positive operators, so the conclusion
extends by linearity. Therefore
\[
\Tr_{-i}\big[\mathcal{T}''_\omega(\xi)\big]=\Phi'_i\big(\Tr_{-i}[\mathcal{T}_\omega\xi]\big)
=\Phi'_i\big(\Phi_i(\Tr_{-i}[\xi])\big)=\Phi''_i\big(\Tr_{-i}[\xi]\big).
\]
\end{proof}

\begin{theorem}[Soundness: more information shrinks the equilibrium set]\label{thm:soundness}
If $Q\qis Q'$ then $Q\inc Q'$. That is, for every basic game $G$ on the common basic
structure and every number of players,
\[
\QBCE(G,Q)\ \subseteq\ \QBCE(G,Q').
\]
\end{theorem}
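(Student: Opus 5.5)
The plan is to transport a QBCE of $(G,Q)$ to one of $(G,Q')$ with the same outcome, using the witnesses $(\mathcal{T}_\omega,\Phi_i)$ of Definition~\ref{def:qis-order}. Fix $G$ and a QBCE $\{\rho_\omega^a\}$ of $(G,Q)$ with outcome $\nu$. The candidate rule is
\[
\rho'^{\,a}_\omega:=\mathcal{T}_\omega(\rho_\omega^a).
\]
First I would check that this is a quantum decision rule for $Q'$. Each $\rho'^{\,a}_\omega$ is positive semidefinite because $\mathcal{T}_\omega$ is positive. By linearity,
\[
\sum_a\rho'^{\,a}_\omega=\mathcal{T}_\omega(\rho_\omega)=\rho'_\omega .
\]
Since $\mathcal{T}_\omega$ is trace preserving, $\Tr\rho'^{\,a}_\omega=\Tr\rho_\omega^a$, so the outcome is again $\nu$.

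Next I would compute the obedience operators of the new rule. Every $\rho_\omega^a\preceq\rho_\omega$ is supported on $\supp\rho_\omega$, so the local compatibility identity \eqref{eq:loccomp} applies with $\xi=\rho_\omega^a$. It gives
\[
\Tr_{-i}\big[\rho'^{\,a}_\omega\big]=\Phi_i\big(\Tr_{-i}[\rho_\omega^a]\big).
\]
Substituting this into \eqref{eq:Xop}, and using that $\Phi_i$ is linear and independent of $\omega$, lets $\Phi_i$ be pulled out of the sum over $\omega$ and $a_{-i}$. Hence
\[
X'^{\,a_i,b_i}_i=\Phi_i\big(X_i^{a_i,b_i}\big).
\]
This step is where the $\omega$-independence of $\Phi_i$ is essential. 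If $\Phi_i$ depended on $\omega$, the weighted sum could not be factored through a single map, which mirrors the role of BM's condition that the kernels depend on neither $\omega$ nor $t_{-i}$.

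Finally, Theorem~\ref{thm:loewner} says the original rule satisfies $X_i^{a_i,a_i}-X_i^{a_i,b_i}\succeq0$ for all $i,a_i,b_i$. Positive linear maps preserve the Loewner order, so
\[
X'^{\,a_i,a_i}_i-X'^{\,a_i,b_i}_i=\Phi_i\big(X_i^{a_i,a_i}-X_i^{a_i,b_i}\big)\succeq0 .
\]
By Theorem~\ref{thm:loewner} again, the transported rule is a QBCE of $(G,Q')$ with outcome $\nu$. This proves $\QBCE(G,Q)\subseteq\QBCE(G,Q')$.

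I expect the only delicate point to be the second step: justifying that \eqref{eq:loccomp} may be applied to $\rho_\omega^a$ and that $\Phi_i$ commutes with the $\omega$-weighted sum. Both follow directly from the support restriction in the definition and from linearity, so I anticipate no real obstruction.
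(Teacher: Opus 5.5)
Your proposal is correct and matches the paper's proof step for step. You transport the rule by $\mathcal{T}_\omega$, using positivity, linearity and trace preservation. You then apply the local compatibility identity to $\rho_\omega^a\preceq\rho_\omega$ to obtain $X_i'^{\,a_i,b_i}=\Phi_i(X_i^{a_i,b_i})$, and conclude by the positivity of $\Phi_i$ together with Theorem~\ref{thm:loewner}.
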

\begin{proof}
Let $(\mathcal{T}_\omega),(\Phi_i)$ witness $Q\qis Q'$, let $\nu\in\QBCE(G,Q)$ be realized by a
QBCE decision rule $\{\rho_\omega^a\}$, and set $\rho_\omega'^{\,a}:=\mathcal{T}_\omega(\rho_\omega^a)$.

\emph{Step 1: a decision rule for $Q'$ with outcome $\nu$.} Positivity gives
$\rho_\omega'^{\,a}\succeq0$; linearity gives
$\sum_a\rho_\omega'^{\,a}=\mathcal{T}_\omega(\rho_\omega)=\rho'_\omega$; and trace preservation
gives $\Tr[\rho_\omega'^{\,a}]=\Tr[\rho_\omega^a]$, so the outcome is again $\nu$.

\emph{Step 2: obedience operators transform by $\Phi_i$.} Since
$\rho_\omega^a\preceq\rho_\omega$, identity \eqref{eq:loccomp} applies with
$\xi=\rho_\omega^a$ and yields
$\Tr_{-i}[\rho_\omega'^{\,a}]=\Phi_i\big(\Tr_{-i}[\rho_\omega^a]\big)$. Substituting
into \eqref{eq:Xop} and using linearity of $\Phi_i$ together with the fact that the
coefficients $\psi(\omega)u_i(b_i,a_{-i},\omega)$ are real scalars,
\[
X_i'^{\,a_i,b_i}
=\sum_{\omega,a_{-i}}\psi(\omega)u_i(b_i,a_{-i},\omega)\,\Phi_i\big(\Tr_{-i}[\rho_\omega^{(a_i,a_{-i})}]\big)
=\Phi_i\big(X_i^{a_i,b_i}\big).
\]

\emph{Step 3: obedience transfers.} By Theorem~\ref{thm:loewner},
$X_i^{a_i,a_i}-X_i^{a_i,b_i}\succeq0$ for all $i,a_i,b_i$. Since $\Phi_i$ is positive
and linear,
$X_i'^{\,a_i,a_i}-X_i'^{\,a_i,b_i}=\Phi_i\big(X_i^{a_i,a_i}-X_i^{a_i,b_i}\big)\succeq0$.
By Theorem~\ref{thm:loewner} again, $\{\rho_\omega'^{\,a}\}$ is a QBCE of $(G,Q')$ with
outcome $\nu$, so $\nu\in\QBCE(G,Q')$.
\end{proof}

It is worth reading the proof back to see which hypotheses it consumed, because the
answer determines the shape of the theory.

Step 1 succeeded because realizability is free (Lemma~\ref{lem:realize}). The coarser
structure can still \emph{produce} the outcome, since the mediator conditions on
$\omega$ rather than on the degraded quantum data, so nothing is lost in the transfer
that would have to be recovered. This is where an alternative definition would fail. If
recommendations were instead generated by measuring $\rho_\omega$, as under a
belief-invariant reading, then a garbled structure would in general be unable to
reproduce the outcome at all, because data processing strictly reduces
distinguishability, and no soundness theorem of the present form would be available.
The omniscience of the mediator is thus not a convenience but the hypothesis that makes
the comparison possible.

Steps 2 and 3 consumed only two properties of the maps: linearity, which carries the
transfer through the definition \eqref{eq:Xop} of the obedience operators, and
positivity, which preserves the Loewner order. Nothing else about
$\mathcal{T}_\omega$ or $\Phi_i$ was used. Two consequences follow.

The first is that the hypothesis on $\mathcal{T}_\omega$ can be weakened. Step 1
applies it only to the operators $\rho_\omega^a$, which lie in the order interval
$[0,\rho_\omega]$, so positivity on that interval suffices and
Theorem~\ref{thm:soundness} remains valid under the weaker requirement. The same
weakening is not available for $\Phi_i$, since Step 3 applies it to the differences
$X_i^{a_i,a_i}-X_i^{a_i,b_i}$, which are not confined to any such interval. The
asymmetry is structural: $\mathcal{T}_\omega$ acts on the mediator's splitting of a
fixed state, whereas $\Phi_i$ acts on incentive differences.

The second concerns complete positivity, which the proof also never used. One may
therefore strengthen the hypothesis and require the maps of
Definition~\ref{def:qis-order} to be channels. Call the resulting relation
$\succeq_{\mathrm{CP}}$. Since every channel is a positive trace-preserving map,
$\succeq_{\mathrm{CP}}\subseteq\qis$, so Theorem~\ref{thm:soundness} applies to it
verbatim; and every witness constructed in this paper happens to be completely
positive, so nothing proved here would be lost by adopting it. What would be lost is
the converse. By Theorem~\ref{thm:transpose} the incentive ordering identifies an
ensemble with its transpose, and no channel relates the two, so $\succeq_{\mathrm{CP}}$
cannot characterize the incentive ordering. This is why
Definition~\ref{def:qis-order} is stated with positive maps and not with channels, which are completely positive trace-preserving maps.

\subsection{The order is inhabited}

\begin{definition}[Three garbling relations]\label{def:garb}
Let $Q,Q'$ be quantum information structures.
\begin{enumerate}[label=(\alph*),leftmargin=1.8em]
  \item $Q\garb Q'$ (\emph{local garbling}) if there are channels
  $\Lambda_i:\Bop(\Hil_i)\to\Bop(\Hil_i')$ with
  $\rho'_\omega=\big(\bigotimes_i\Lambda_i\big)(\rho_\omega)$ for all $\omega$.
  \item $Q\garbsr Q'$ (\emph{shared randomness}) if $\Hil_i'=\Hil_i''\otimes\C^{\mathcal{L}}$
  and there are $p\in\Delta(\mathcal{L})$ and channels $\Lambda_i^\lambda$ with
  $\rho'_\omega=\sum_\lambda p(\lambda)\big(\bigotimes_i\Lambda_i^\lambda\big)(\rho_\omega)\otimes\big(\bigotimes_i\proj\lambda\big)$.
  \item $Q\garbe Q'$ (\emph{shared entanglement}) if there are auxiliary spaces $K_i$,
  a state $\tau$ on $\bigotimes_iK_i$ independent of $\omega$, and channels
  $\Lambda_i:\Bop(\Hil_i\otimes K_i)\to\Bop(\Hil_i')$ with
  $\rho'_\omega=\big(\bigotimes_i\Lambda_i\big)(\rho_\omega\otimes\tau)$.
\end{enumerate}
\end{definition}

\begin{proposition}[Instances]\label{prop:instances}
Each of $\garb$, $\garbsr$ and $\garbe$ implies $\qis$, with $\omega$-independent
witnesses $\mathcal{T}_\omega=T$ and
\[
\text{(a) } \mathcal{T}=\bigotimes_i\Lambda_i,\ \Phi_i=\Lambda_i;\quad
\text{(b) } \Phi_i(Z)=\sum_\lambda p(\lambda)\Lambda_i^\lambda(Z)\otimes\proj\lambda;\quad
\text{(c) } \Phi_i(Z)=\Lambda_i(Z\otimes\tau_i),
\]
where $\tau_i:=\Tr_{K_{-i}}\tau$. In each case $\Phi_i$ is a channel.
\end{proposition}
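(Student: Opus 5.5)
In each case I would take $\mathcal{T}_\omega:=\mathcal{T}$ to be the map that produces $\rho'_\omega$ from $\rho_\omega$, with no dependence on $\omega$. Three things then need checking: that $\mathcal{T}$ and each $\Phi_i$ are channels, hence positive and trace preserving; that $\mathcal{T}(\rho_\omega)=\rho'_\omega$; and that the local compatibility identity \eqref{eq:loccomp} holds. I would prove \eqref{eq:loccomp} for every $\xi\in\Bop(\Hil)$, not only for $\xi$ supported on $\supp\rho_\omega$, which is stronger than Definition~\ref{def:qis-order} requires. The second property holds by definition of the garbling relations. Both sides of \eqref{eq:loccomp} are linear in $\xi$, so it suffices to verify the identity on product operators $\xi=\bigotimes_j\xi_j$, since these span $\Bop(\Hil)$. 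This is the same reduction used in the proof of Theorem~\ref{thm:classical} and in Theorem~\ref{thm:transpose}(i).

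\emph{Case (a).} Here $\mathcal{T}=\bigotimes_i\Lambda_i$ is a tensor product of channels, so it is a channel. On a product operator,
\[
\Tr_{-i}\Big[\bigotimes_j\Lambda_j(\xi_j)\Big]=\Lambda_i(\xi_i)\prod_{j\ne i}\Tr[\Lambda_j(\xi_j)]=\Lambda_i(\xi_i)\prod_{j\ne i}\Tr\xi_j=\Lambda_i(\Tr_{-i}\xi),
\]
because each $\Lambda_j$ is trace preserving.

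\emph{Case (b).} Set $\mathcal{T}(\xi):=\sum_\lambda p(\lambda)\big(\bigotimes_j\Lambda_j^\lambda\big)(\xi)\otimes\bigotimes_j\proj\lambda$. This is a convex combination of channels, each followed by appending a fixed state, so it is a channel. Take the partial trace over all factors except $\Hil_i''\otimes\C^{\mathcal{L}}$ of player $i$. The other players' $\proj\lambda$ factors each have trace $1$, and their $\Lambda_j^\lambda(\xi_j)$ factors contribute $\Tr\xi_j$. This leaves $\sum_\lambda p(\lambda)\Lambda_i^\lambda(\Tr_{-i}\xi)\otimes\proj\lambda=\Phi_i(\Tr_{-i}\xi)$. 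The map $\Phi_i$ is a channel by the same argument as $\mathcal{T}$.

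\emph{Case (c).} This is the step that needs care, because $\tau$ may be entangled, so the product-operator reduction is not immediate. Set $\mathcal{T}(\xi):=\big(\bigotimes_j\Lambda_j\big)(\xi\otimes\tau)$, which is a channel as a composition of appending a state with a tensor product of channels. Linearity in $\xi$ still allows me to take $\xi=\bigotimes_j\xi_j$. The operator $\tau$ cannot be factorized, but it can be expanded as $\tau=\sum_k\bigotimes_j\tau_j^{(k)}$ over product operators on the $K_j$. The map $\mathcal{T}$ is then linear in this decomposition, and the trace-preservation argument from case (a) applies term by term:
\[
\Tr_{-i}\mathcal{T}(\xi)=\sum_k\Lambda_i\big(\xi_i\otimes\tau_i^{(k)}\big)\prod_{j\ne i}\Tr\big[\xi_j\otimes\tau_j^{(k)}\big]=\Lambda_i\Big(\xi_i\prod_{j\ne i}\Tr\xi_j\otimes\Tr_{K_{-i}}\tau\Big).
\]
The right-hand side equals $\Lambda_i(\Tr_{-i}\xi\otimes\tau_i)=\Phi_i(\Tr_{-i}\xi)$. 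Finally, $\Phi_i=\Lambda_i\circ(\,\cdot\otimes\tau_i)$ is a channel, because appending the state $\tau_i$ is completely positive and trace preserving.

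Every witness is a channel, so Proposition~\ref{prop:instances} holds and, with Theorem~\ref{thm:soundness}, each of the three garbling relations implies inclusion of the QBCE sets.
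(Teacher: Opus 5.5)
Your proof is correct and follows the paper's argument: reduce \eqref{eq:loccomp} by linearity to product operators, then use trace preservation of the other players' channels. The only difference is cosmetic and comes in case (c). You expand $\tau$ into product terms explicitly, while the paper applies the same partial-trace identity on the enlarged factors $\Hil_j\otimes K_j$ directly to $\xi\otimes\tau$ and then factorizes $\Tr_{(\Hil\otimes K)_{-i}}[\xi\otimes\tau]=\Tr_{-i}[\xi]\otimes\tau_i$; both are the same linearity argument.
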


\begin{proof}
We use the identity
\begin{equation}\label{eq:tracecommute}
\Tr_{-i}\Big[\Big(\bigotimes_j\Gamma_j\Big)(\xi)\Big]=\Gamma_i\big(\Tr_{-i}[\xi]\big)
\end{equation}
for channels $\Gamma_j$ acting on the respective factors: by linearity it suffices to
check it on product operators $\xi=\xi_i\otimes\xi_{-i}$, where the left side is
$\Gamma_i(\xi_i)\Tr[\Gamma_{-i}(\xi_{-i})]=\Gamma_i(\xi_i)\Tr[\xi_{-i}]$ because
$\Gamma_{-i}:=\bigotimes_{j\ne i}\Gamma_j$ is trace preserving.

(a) Take $\mathcal{T}=\bigotimes_i\Lambda_i$; then \eqref{eq:loccomp} is
\eqref{eq:tracecommute}. 

(b) Set
$\mathcal{T}(\xi):=\sum_\lambda p(\lambda)\big(\bigotimes_i\Lambda_i^\lambda\big)(\xi)\otimes\big(\bigotimes_i\proj\lambda\big)$,
completely positive as a convex combination of completely positive maps and trace
preserving since each summand contributes $p(\lambda)\Tr\xi$. Tracing out the factors
$j\ne i$ and applying \eqref{eq:tracecommute} within each $\lambda$-term gives
$\Tr_{-i}[\mathcal{T}(\xi)]=\sum_\lambda p(\lambda)\Lambda_i^\lambda(\Tr_{-i}[\xi])\otimes\proj\lambda$,
and $\Tr\Phi_i(Z)=\sum_\lambda p(\lambda)\Tr Z=\Tr Z$. 

(c) Set
$\mathcal{T}(\xi):=\big(\bigotimes_i\Lambda_i\big)(\xi\otimes\tau)$, a channel. Player $j$ holds
$\Hil_j\otimes K_j$, so by \eqref{eq:tracecommute}
$\Tr_{-i}[\mathcal{T}(\xi)]=\Lambda_i\big(\Tr_{(\Hil\otimes K)_{-i}}[\xi\otimes\tau]\big)$; since
$\xi$ lives on $\Hil$ and $\tau$ on $K$, the partial trace factorizes as
$\Tr_{-i}[\xi]\otimes\tau_i$, giving $\Phi_i$ as displayed, with
$\Tr\Phi_i(Z)=\Tr[Z]\Tr[\tau_i]=\Tr Z$.
\end{proof}

\begin{proposition}[Classical individual sufficiency is an instance]\label{prop:classicalIS}
If $S\suff S'$ in the sense of Definition~\ref{def:is}, then $\iota(S)\qis\iota(S')$.
\end{proposition}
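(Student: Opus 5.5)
The plan is to build the witnesses of Definition~\ref{def:qis-order} directly from the combined law $\pi^\ast$ and the kernels $\varphi_i$ of \eqref{eq:phi}. In each case the map first pinches in the product basis and then applies a classical Markov kernel. Such maps are channels, hence positive and trace preserving.

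\emph{The local maps.} For each player set
\[
\Phi_i(Z):=\sum_{t_i,t_i'}\varphi_i(t_i'\mid t_i)\,\bra{t_i}Z\ket{t_i}\,\proj{t_i'} .
\]
This does not depend on $\omega$. It is positive because the diagonal entries of a positive semidefinite $Z$ are nonnegative. It is trace preserving because each $\varphi_i(\cdot\mid t_i)$ is a probability vector.

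\emph{The state-wise maps.} For each $\omega$ define a kernel $\kappa_\omega(t'\mid t):=\pi^\ast(t,t'\mid\omega)/\pi(t\mid\omega)$ whenever $\pi(t\mid\omega)>0$. Where $\pi(t\mid\omega)=0$, set $\kappa_\omega(t'\mid t):=\prod_i\varphi_i(t_i'\mid t_i)$; any probability vector would do here. Then put
\[
\mathcal{T}_\omega(\xi):=\sum_{t,t'}\kappa_\omega(t'\mid t)\,\bra t\xi\ket t\,\bigotimes_i\proj{t_i'} .
\]
This is positive and trace preserving for the same reasons as $\Phi_i$. Since $\rho_\omega$ is diagonal with entries $\pi(t\mid\omega)$, we get $\mathcal{T}_\omega(\rho_\omega)=\sum_{t'}\big(\sum_t\pi^\ast(t,t'\mid\omega)\big)\bigotimes_i\proj{t_i'}$. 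The $T'$-marginal of $\pi^\ast$ is $\pi'$, so this equals $\rho'_\omega$.

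\emph{Local compatibility.} Let $\xi$ be supported on $\supp\rho_\omega=\operatorname{span}\{\ket t:\pi(t\mid\omega)>0\}$. Then $\bra t\xi\ket t=0$ whenever $\pi(t\mid\omega)=0$. Tracing out all factors other than $i$ in $\mathcal{T}_\omega(\xi)$ gives the diagonal operator with $t_i'$-entry
\[
\sum_{t:\,\pi(t\mid\omega)>0}\ \bra t\xi\ket t\sum_{t'_{-i}}\frac{\pi^\ast\big(t,(t_i',t'_{-i})\mid\omega\big)}{\pi(t\mid\omega)} .
\]
By \eqref{eq:phi} the inner sum equals $\varphi_i(t_i'\mid t_i)$. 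The entry therefore becomes $\sum_{t_i}\varphi_i(t_i'\mid t_i)\sum_{t_{-i}}\bra t\xi\ket t$, where the restriction $\pi(t\mid\omega)>0$ can be dropped because the omitted terms vanish. The inner sum is the $t_i$-diagonal entry of $\Tr_{-i}\xi$. This is exactly the $t_i'$-entry of $\Phi_i(\Tr_{-i}\xi)$, which proves \eqref{eq:loccomp}.

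The main point needing care is the zero-denominator states. There $\kappa_\omega$ is arbitrary and does not satisfy \eqref{eq:phi}. The support restriction in Definition~\ref{def:qis-order} disposes of this, because such diagonal entries of $\xi$ vanish. Off-diagonal entries of $\xi$ never enter, since both maps pinch first.
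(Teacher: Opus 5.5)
Your proposal is correct and follows essentially the same route as the paper: pinch in the product basis, use the state-wise kernel $\pi^\ast(t,t'\mid\omega)/\pi(t\mid\omega)$ for $\mathcal{T}_\omega$ and the kernel $\varphi_i$ for $\Phi_i$, then invoke \eqref{eq:phi} on the support of $\rho_\omega$ to obtain \eqref{eq:loccomp}. The differences are cosmetic: you fill the zero-probability rows with $\prod_i\varphi_i(t_i'\mid t_i)$ where the paper leaves them arbitrary, and you write the partial-trace computation out entrywise.
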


\begin{proof}
Let $\pi^\ast$ witness $S\suff S'$ and let $\varphi_i$ satisfy \eqref{eq:phi}. For each
$\omega$ define the Markov kernel
$k_\omega(t'\mid t):=\pi^\ast(t,t'\mid\omega)/\pi(t\mid\omega)$ when
$\pi(t\mid\omega)>0$, and arbitrarily otherwise; let $\mathcal{T}_\omega$ be the induced channel
on diagonal operators, extended to $\Bop(\Hil)$ by first pinching in the product basis.
Then
$\mathcal{T}_\omega(\rho_\omega)=\sum_{t'}\big(\sum_t\pi(t\mid\omega)k_\omega(t'\mid t)\big)\proj{t'}
=\sum_{t'}\pi'(t'\mid\omega)\proj{t'}=\rho'_\omega$, using that $\pi^\ast$ has marginal
$\pi'$.

Let $\Phi_i$ be the channel induced in the same way by $\varphi_i$; it is trace
preserving because $\sum_{t_i'}\varphi_i(t_i'\mid t_i)=1$, positive because
$\varphi_i\ge0$, and independent of $\omega$. Let $\xi$ be supported on
$\supp\rho_\omega$; after pinching we may write $\xi=\sum_tc_t\proj t$ with $c_t=0$
unless $\pi(t\mid\omega)>0$. For such $t$, equation \eqref{eq:phi} gives
$\sum_{t_{-i}'}k_\omega(t'\mid t)=\varphi_i(t_i'\mid t_i)$, which depends on neither
$t_{-i}$ nor $\omega$; hence
\[
\Tr_{-i}\big[\mathcal{T}_\omega(\xi)\big]
=\sum_{t_i'}\Big(\sum_{t_i}\Big(\sum_{t_{-i}}c_t\Big)\varphi_i(t_i'\mid t_i)\Big)\proj{t_i'}
=\Phi_i\big(\Tr_{-i}[\xi]\big).
\]
\end{proof}

\begin{corollary}[Consistency with the classical and single-agent theories]\label{cor:consistency}
\begin{enumerate}[label=(\roman*),leftmargin=*]
  \item If $S\suff S'$ then \\$\BCE(G,S)\subseteq\BCE(G,S')$ for every basic game $G$;
  the soundness half of Theorem~\ref{thm:bm} is recovered.
  \item For $\Ntot=1$, if $Q$ is Buscemi sufficient for $Q'$, that is, if there is a channel
  $\Lambda$ with $\rho'_\omega=\Lambda(\rho_\omega)$ for all $\omega$, then
  $\QBCE(D,Q)\subseteq\QBCE(D,Q')$ for every decision problem $D$.
\end{enumerate}
\end{corollary}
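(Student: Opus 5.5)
Both parts follow by chaining results already proved, so the plan is to assemble them rather than to argue directly about obedience.

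For (i), assume $S\suff S'$. Proposition~\ref{prop:classicalIS} gives $\iota(S)\qis\iota(S')$. Theorem~\ref{thm:soundness} then yields $\QBCE(G,\iota(S))\subseteq\QBCE(G,\iota(S'))$ for every basic game $G$. Finally, Theorem~\ref{thm:classical} identifies each side with the corresponding classical set, giving $\BCE(G,S)\subseteq\BCE(G,S')$. The chain itself needs no further work.

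The only point I would double-check is Proposition~\ref{prop:classicalIS}, where $\mathcal{T}_\omega$ is built by pinching and then applying the kernel $k_\omega$. Local compatibility \eqref{eq:loccomp} is required for every $\xi$ supported on $\supp\rho_\omega$, not only for diagonal ones. Since $\rho_\omega$ is diagonal, its support is spanned by product basis vectors $\ket t$ with $\pi(t\mid\omega)>0$, so the pinched $\xi$ has diagonal weight only on such $t$. Pinching also commutes with $\Tr_{-i}$, as shown in Step~2 of Theorem~\ref{thm:classical}. Hence both sides of \eqref{eq:loccomp} depend on $\xi$ only through its pinching, and the diagonal computation suffices. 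I would state this explicitly, since the soundness proof is the only place the off-diagonal case could bite.

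For (ii), take $\Ntot=1$ and a channel $\Lambda$ with $\rho'_\omega=\Lambda(\rho_\omega)$. The witnesses are $\mathcal{T}_\omega:=\Lambda$ and $\Phi_1:=\Lambda$, exactly as in Proposition~\ref{prop:instances}(a) with a single factor. With one player, $\Tr_{-1}$ is the identity, so \eqref{eq:loccomp} reduces to $\Lambda(\xi)=\Lambda(\xi)$. A channel is positive and trace preserving, so $Q\qis Q'$, and Theorem~\ref{thm:soundness} gives $\QBCE(D,Q)\subseteq\QBCE(D,Q')$ for every decision problem $D$. There is no real obstacle here; I would only remark that complete positivity of $\Lambda$ is never used, in line with the discussion following Theorem~\ref{thm:soundness}.
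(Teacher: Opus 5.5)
Your proposal is correct and matches the paper's proof: part (i) is the same chain Proposition~\ref{prop:classicalIS} $\to$ Theorem~\ref{thm:soundness} $\to$ Theorem~\ref{thm:classical}, and part (ii) is the one-player case of Proposition~\ref{prop:instances}(a) fed into Theorem~\ref{thm:soundness}. Your extra check on Proposition~\ref{prop:classicalIS} is also correct and is a useful clarification of a step the paper passes over quickly: \eqref{eq:loccomp} holds for non-diagonal $\xi$ supported on $\supp\rho_\omega$ because both sides see $\xi$ only through its pinching.
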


\begin{proof}
(i) By Proposition~\ref{prop:classicalIS}, $\iota(S)\qis\iota(S')$; by
Theorem~\ref{thm:soundness}, $\QBCE(G,\iota(S))\subseteq\QBCE(G,\iota(S'))$; by
Theorem~\ref{thm:classical} these sets are $\BCE(G,S)$ and $\BCE(G,S')$. (ii) Buscemi
sufficiency is $Q\garb Q'$ for $\Ntot=1$; apply
Proposition~\ref{prop:instances}(a) and Theorem~\ref{thm:soundness}.
\end{proof}

\begin{proposition}[Information does not move the top of the equilibrium set]\label{prop:notvalue}
Let $\Ntot=1$, so that the basic game $G$ reduces to a single-agent decision problem $D$. Then for every quantum information structure $Q,$ the maximum payoff over the quantum Bayes correlated equilibrium set satisfies
\[
\max_{\nu\in\QBCE(D,Q)}\ \sum_{\omega,a}u(a,\omega)\,\nu(\omega,a)
\;=\;\sum_\omega\psi(\omega)\max_{a\in A}u(a,\omega)
\]
That is, the maximum attainable payoff
equals the full-information optimum and does not depend on $Q$.
\end{proposition}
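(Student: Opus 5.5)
The argument has two halves: an upper bound valid for every outcome, and an explicit QBCE decision rule attaining it.

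\emph{Upper bound.} Every $\nu\in\QBCE(D,Q)$ has $\Omega$-marginal $\psi$, because $\sum_a\nu(\omega,a)=\psi(\omega)\Tr\rho_\omega=\psi(\omega)$. Hence
\[
\sum_{\omega,a}u(a,\omega)\nu(\omega,a)\le\sum_\omega\psi(\omega)\max_{a}u(a,\omega).
\]
This bound uses no obedience at all, only realizability.

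\emph{Attainment.} For each $\omega$ fix a maximizer $a^\ast(\omega)\in\arg\max_a u(a,\omega)$. Take the rule $\rho_\omega^a:=\1[a=a^\ast(\omega)]\,\rho_\omega$. It is a decision rule by Lemma~\ref{lem:realize}, since it is the proportional rule for the outcome $\nu(\omega,a)=\psi(\omega)\1[a=a^\ast(\omega)]$. That outcome attains the upper bound.

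It remains to check obedience through Theorem~\ref{thm:loewner}. With $\Ntot=1$ there is no partial trace, and \eqref{eq:Xop} gives
\[
X^{a,a}-X^{a,b}=\sum_{\omega:\,a^\ast(\omega)=a}\psi(\omega)\big[u(a,\omega)-u(b,\omega)\big]\rho_\omega .
\]
Each coefficient is nonnegative because $a=a^\ast(\omega)$ maximizes $u(\cdot,\omega)$. Each $\rho_\omega$ is positive semidefinite. So the difference is a nonnegative combination of positive operators and is therefore $\succeq0$ for all $a,b$. The rule is thus a QBCE, and the maximum equals the full-information optimum for every $Q$.

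This is just the single-player case of Proposition~\ref{prop:nonempty}: a state-wise Nash equilibrium of a decision problem is simply an optimal action for each $\omega$.

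Compactness of $\QBCE(D,Q)$ (Corollary~\ref{cor:spectra}) ensures the maximum exists, though attainment is shown explicitly anyway. The only delicate point is making sure the obedience operator difference has nonnegative coefficients even when several states share the same recommended action. This holds termwise, since ties in $\arg\max$ only produce zero coefficients. So no real obstacle is expected.
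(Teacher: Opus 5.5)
Your proposal is correct and matches the paper's proof essentially step for step: the upper bound from the $\Omega$-marginal constraint, then the rule $\rho_\omega^a=\1[a=a^\ast(\omega)]\rho_\omega$, whose obedience-operator differences are nonnegative combinations of the $\rho_\omega$, checked via Theorem~\ref{thm:loewner}. Your remark that this is the single-player, pure-strategy case of Proposition~\ref{prop:nonempty} is also accurate.
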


\begin{proof}
No outcome can exceed the stated right-hand side, since

\begin{equation}
\sum_{\omega,a}u(a,\omega)\nu(\omega,a)\le\sum_\omega\big(\max_au(a,\omega)\big)\sum_a\nu(\omega,a)
=\sum_\omega\psi(\omega)\max_au(a,\omega).
\end{equation}
For attainment, fix a selector $a^\ast(\omega)\in\arg\max_au(a,\omega)$ and take the
decision rule $\rho_\omega^a:=\1[a=a^\ast(\omega)]\,\rho_\omega$, which is a decision
rule by Lemma~\ref{lem:realize} and has outcome
$\nu(\omega,a)=\psi(\omega)\1[a=a^\ast(\omega)]$. With one player the sum over $a_{-i}$
in \eqref{eq:Xop} is empty, so
$X^{a,b}=\sum_\omega\psi(\omega)u(b,\omega)\1[a=a^\ast(\omega)]\rho_\omega$ and
\[
X^{a,a}-X^{a,b}
=\sum_{\omega\,:\,a^\ast(\omega)=a}\psi(\omega)\big[u(a,\omega)-u(b,\omega)\big]\rho_\omega .
\]
Each summand is a nonnegative multiple of the positive semidefinite operator
$\rho_\omega$, because $a=a^\ast(\omega)$ maximizes $u(\cdot,\omega)$ on the index set
of the sum. Hence the difference is positive semidefinite and the rule is a quantum
Bayes correlated equilibrium by Theorem~\ref{thm:loewner}. Its payoff is
$\sum_\omega\psi(\omega)u(a^\ast(\omega),\omega)$, the right-hand side.
\end{proof}

Information therefore does not move the top of the equilibrium set; it deletes elements
elsewhere. With several players and a welfare objective the situation differs, because
the welfare-maximizing profile need not be obedient, and Section~\ref{sec:example}
exhibits that difference.

\subsection{Completeness}\label{sec:complete}

Theorem~\ref{thm:soundness} states that $\qis$ is contained in $\inc$ as a relation on
quantum information structures. In the classical theory the two corresponding relations
are equal~\cite[Thm.~2]{BM2016}. We establish equality here on two classes of pairs.

\begin{theorem}[Classical completeness]\label{thm:classicalcomplete}
Let $Q,Q'$ have ensembles simultaneously diagonalizable in product bases, i.e.\
$Q=(\bigotimes_iU_i)\iota(S)(\bigotimes_iU_i)^\dagger$ and
$Q'=(\bigotimes_iU_i')\iota(S')(\bigotimes_iU_i')^\dagger$ for classical structures
$S,S'$ and local unitaries. Then
\[
Q\inc Q'\quad\Longleftrightarrow\quad Q\qis Q'\quad\Longleftrightarrow\quad S\suff S' ,
\]
so the two relations agree on such pairs at every $\Ntot$.
\end{theorem}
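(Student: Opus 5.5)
The plan is to close a cycle of three implications: $S\suff S'\Rightarrow Q\qis Q'\Rightarrow Q\inc Q'\Rightarrow S\suff S'$. The middle arrow is Theorem~\ref{thm:soundness} and needs nothing further. The other two reduce to the classical case once local unitary conjugations are shown to be harmless for both relations.

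\emph{Step 1 (local unitaries are invisible).} Write $\mathcal{U}=\bigotimes_i\mathcal{U}_i$ with $\mathcal{U}_i(\xi)=U_i\xi U_i^\dagger$. This is a local garbling with inverse $\mathcal{U}^{-1}$, also a local garbling. By Proposition~\ref{prop:instances}(a) we therefore have $Q\qis\iota(S)$ and $\iota(S)\qis Q$, and likewise for $Q'$ and $\iota(S')$. Theorem~\ref{thm:soundness} then gives $\QBCE(G,Q)=\QBCE(G,\iota(S))$ for every $G$, and similarly for the primed pair. Since both $\qis$ and $\inc$ are preorders (Proposition~\ref{prop:preorder}), $Q\qis Q'$ is equivalent to $\iota(S)\qis\iota(S')$, and $Q\inc Q'$ is equivalent to $\iota(S)\inc\iota(S')$. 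So it suffices to treat $U_i=U_i'=\1$.

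\emph{Step 2 (classical sufficiency gives quantum sufficiency).} This is Proposition~\ref{prop:classicalIS}, composed with the unitary witnesses of Step~1 via transitivity.

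\emph{Step 3 (incentive inclusion gives classical sufficiency).} Suppose $\QBCE(G,\iota(S))\subseteq\QBCE(G,\iota(S'))$ for every $G$. By Theorem~\ref{thm:classical} both sides equal the classical sets, so $\BCE(G,S)\subseteq\BCE(G,S')$ for every basic game $G$. The completeness half of Theorem~\ref{thm:bm} then yields $S\suff S'$.

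The main obstacle I anticipate is a quantifier mismatch in Step~3. The inclusion must hold for every basic game $G$ on the common basic structure, meaning every action set, payoff and full-support prior, and I must confirm this matches the scope of games in \cite[Thm.~2]{BM2016}. Since $\inc$ quantifies over all $G$, including all finite action sets, it ranges over at least the games BM use, so invoking their theorem should be legitimate.

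A second check concerns the support restriction in \eqref{eq:loccomp}. The witness of Proposition~\ref{prop:classicalIS} is stated for diagonal structures. When it is conjugated by the local unitaries, supports transform covariantly and $\Tr_{-i}$ commutes with local unitary conjugation by \eqref{eq:tracecommute}. I would argue this directly, so that the composed witness satisfies local compatibility on $\supp\rho_\omega$ and not only on $\supp\iota(S)_\omega$.

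With these in place, the three equivalences close, and the statement holds at every $\Ntot$ because no step depends on the number of players.
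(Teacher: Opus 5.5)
Your proposal is correct and follows the paper's proof essentially step for step. You reduce to $U_i=U_i'=\1$ because local unitary conjugation is a local garbling in both directions, then close the cycle with Proposition~\ref{prop:classicalIS}, Theorem~\ref{thm:soundness}, and Theorem~\ref{thm:classical} combined with Theorem~\ref{thm:bm}; your separate support check is unnecessary, since the transitivity argument in Proposition~\ref{prop:preorder} already shows that each $\mathcal{T}_\omega$ maps operators supported on $\supp\rho_\omega$ to operators supported on $\supp\rho'_\omega$, which is all that composing witnesses requires.
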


\begin{proof}
Conjugation by a profile of local unitaries is a local garbling in both directions
(Definition~\ref{def:garb}(a) with $\Lambda_i=U_i\cdot U_i^\dagger$), so by
Proposition~\ref{prop:instances} and Theorem~\ref{thm:soundness} it preserves both
$\qis$-comparability and QBCE outcome sets; we may therefore assume $Q=\iota(S)$ and
$Q'=\iota(S')$. Now $S\suff S'\Rightarrow Q\qis Q'$ is
Proposition~\ref{prop:classicalIS} and $Q\qis Q'\Rightarrow Q\inc Q'$ is
Theorem~\ref{thm:soundness}. Conversely, if $Q\inc Q'$ then by
Theorem~\ref{thm:classical} $\BCE(G,S)\subseteq\BCE(G,S')$ for every basic game $G$,
i.e.\ $S$ is more incentive constrained than $S'$, so $S\suff S'$ by
Theorem~\ref{thm:bm}.
\end{proof}

What is left unresolved therefore concerns only ensembles that are not simultaneously
diagonalizable in a product basis. Note that the reduction to BM's theorem is
legitimate only because the embedding $\iota$ is \emph{exact}
(Theorem~\ref{thm:classical}); an inclusion would not suffice.

\begin{theorem}[Completeness at the bottom of the order]\label{thm:bottomcomplete}
Let $Q_\emptyset$ denote the trivial structure, $\Hil_i=\C$ and $\rho_\omega=1$ for all
$\omega$. Then for every quantum information structure $Q$,
\[
Q\qis Q_\emptyset \iff Q\inc Q_\emptyset \qquad(\text{both always hold}),
\]
and
\[
Q_\emptyset\qis Q \iff Q_\emptyset\inc Q \iff Q\ \text{is locally flat}.
\]
In particular the two relations agree whenever one of the two structures is
$Q_\emptyset$.
\end{theorem}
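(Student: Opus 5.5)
The first line reduces to a witness construction plus soundness, and the second to Theorem~\ref{thm:trivial} plus the identification $\QBCE(G,Q_\emptyset)=\BCE(G,S_\emptyset)$. Everything will be assembled from Theorem~\ref{thm:soundness}, Theorem~\ref{thm:trivial} and Theorem~\ref{thm:ppdegen}.

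\emph{First line.} I would exhibit a witness for $Q\qis Q_\emptyset$ directly. Take $\mathcal{T}_\omega:=\Tr:\Bop(\Hil)\to\C$ and $\Phi_i:=\Tr:\Bop(\Hil_i)\to\C$. These maps are positive and trace preserving, and $\mathcal{T}_\omega(\rho_\omega)=1$. Local compatibility \eqref{eq:loccomp} reads $\Tr[\xi]=\Tr[\Tr_{-i}\xi]$, which holds identically. Theorem~\ref{thm:soundness} then gives $Q\inc Q_\emptyset$. Alternatively, $\QBCE(G,Q)\subseteq\QBCE_\pp(G,Q)=\BCE(G,S_\emptyset)$ by Theorem~\ref{thm:ppdegen}.

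To identify the right-hand side, note that $Q_\emptyset=\iota(S_\emptyset)$. Theorem~\ref{thm:classical} therefore gives $\QBCE(G,Q_\emptyset)=\BCE(G,S_\emptyset)$, and so both relations in the first line hold for every $Q$.

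\emph{Second line.} I would prove the cycle of implications: $Q_\emptyset\qis Q\Rightarrow Q_\emptyset\inc Q\Rightarrow$ locally flat $\Rightarrow Q_\emptyset\qis Q$.
\begin{itemize}[leftmargin=1.5em]
\item The first implication is Theorem~\ref{thm:soundness}.
\item For the second, $Q_\emptyset\inc Q$ means $\BCE(G,S_\emptyset)\subseteq\QBCE(G,Q)$ for every $G$. Combined with the first line, this gives $\QBCE(G,Q)=\BCE(G,S_\emptyset)$ for all $G$. Theorem~\ref{thm:trivial} ($\Rightarrow$) then yields local flatness.
\item For the third, assume $\rho_{i,\omega}=\rho_i$ for all $\omega$. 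Define the preparation maps $\mathcal{T}_\omega(z):=z\,\rho_\omega$ from $\Bop(\C)=\C$ to $\Bop(\Hil)$, and $\Phi_i(z):=z\,\rho_i$. Both are positive and trace preserving, and $\Phi_i$ is $\omega$-independent. Local compatibility reads $\Tr_{-i}[z\rho_\omega]=z\rho_{i,\omega}=z\rho_i=\Phi_i(\Tr_{-i}z)$, where $\Tr_{-i}$ on the trivial space is the identity. This is exactly where flatness is used.
\end{itemize}

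I do not expect any serious obstacle. The only delicate point is the bookkeeping for one-dimensional spaces: checking that the partial trace on $Q_\emptyset$ is the identity on $\C$, and that the support condition in \eqref{eq:loccomp} is vacuous there. The substantive content has already been discharged by Theorem~\ref{thm:trivial}.
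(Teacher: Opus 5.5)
Your proof is correct and follows the paper's proof closely. It uses the same replacer witness $\mathcal{T}_\omega=\Phi_i=\Tr$, the same preparation witness $\Phi_i(z)=z\rho_i$, and the same appeals to Theorems~\ref{thm:soundness}, \ref{thm:ppdegen}, \ref{thm:classical} and~\ref{thm:trivial}. The only difference is that you close the second line as a cycle, obtaining local flatness from $Q_\emptyset\qis Q$ via soundness and the ($\Rightarrow$) half of Theorem~\ref{thm:trivial}. The paper instead reads flatness directly off \eqref{eq:loccomp}: any $\mathcal{T}_\omega$ out of $\C$ is $\xi\mapsto\xi\rho_\omega$, which forces $\rho_{i,\omega}=\Phi_i(1)$. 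It then uses both halves of Theorem~\ref{thm:trivial} for $\inc$.
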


\begin{proof}
$Q\qis Q_\emptyset$ always, via the replacer $\mathcal{T}_\omega(\xi):=\Tr[\xi]$ and
$\Phi_i(X):=\Tr[X]$, which are positive and trace preserving and satisfy
\eqref{eq:loccomp}; $Q\inc Q_\emptyset$ then follows from
Theorem~\ref{thm:soundness}, and also directly from
$\QBCE(G,Q)\subseteq\QBCE_\pp(G,Q)=\BCE(G,S_\emptyset)=\QBCE(G,Q_\emptyset)$, the last
equality by Theorem~\ref{thm:classical} applied to $S_\emptyset$.

For the second chain, note first that $\Hil=\C$ for $Q_\emptyset$, so $\Tr_{-i}$ is the
identity on $\Bop(\C)=\C$ and any $\mathcal{T}_\omega$ with $\mathcal{T}_\omega(1)=\rho'_\omega$ acts by
$\mathcal{T}_\omega(\xi)=\xi\rho'_\omega$. Condition \eqref{eq:loccomp} then reads
$\xi\,\rho'_{i,\omega}=\Phi_i(\xi)=\xi\,\Phi_i(1)$ for all scalars $\xi$, i.e.\
$\rho'_{i,\omega}=\Phi_i(1)$ for every $\omega$. Such an $\omega$-independent $\Phi_i$
exists if and only if $\rho'_{i,\omega}$ does not depend on $\omega$, that is, if and
only if $Q$ is locally flat; and in that case $\Phi_i(\xi):=\xi\rho'_i$ is positive
and trace preserving. Hence $Q_\emptyset\qis Q$ if and only if $Q$ is locally flat.

Finally $Q_\emptyset\inc Q$ means $\BCE(G,S_\emptyset)\subseteq\QBCE(G,Q)$ for every
$G$; combined with the reverse inclusion, which always holds, this says
$\QBCE(G,Q)=\BCE(G,S_\emptyset)$ for every $G$, which by Theorem~\ref{thm:trivial} is
local flatness. The three conditions therefore agree.
\end{proof}

Theorems~\ref{thm:classicalcomplete} and~\ref{thm:bottomcomplete} therefore establish
equality of $\qis$ and $\inc$ on two natural classes of pairs: those that are
simultaneously diagonalizable in a product basis, and those in which one structure is
least informative. 



\section{Comparing information: a worked example}\label{sec:example}

Having established the foundations of the solution concept and the comparison theorem, we close by computing. The example exhibits a totally ordered family of quantum
information structures, verifies the comparative static of
Theorem~\ref{thm:soundness} on it, and shows what the ordering does to welfare. Every
quantity reported is the value of a semidefinite program in the sense of
Corollary~\ref{cor:spectra}.

\subsection{The basic game}

We use the binary investment game of~\cite[Appendix]{BM2016}, with their parameters, so
that the quantum computation can be read against the classical one. There are two
players, each choosing $a_i\in\{I,N\}$ (invest or not), and two states
$\Omega=\{\omega_B,\omega_G\}$ with $\psi(\omega_G)=\tfrac13$. Payoffs to player $1$
are
\[
\begin{array}{c|cc}
\omega_B & I & N\\\hline
I & z-1+y_B & -1\\
N & z & 0
\end{array}
\qquad\qquad
\begin{array}{c|cc}
\omega_G & I & N\\\hline
I & z+1+y_G & 1\\
N & z & 0
\end{array}
\]
with the first entry indexed by player $1$'s action, and symmetrically for player $2$.
We take $z=2$, $y_G=0$, $y_B=-\tfrac16$. The externality $z$ is large enough that joint
investment maximizes welfare in both states: the first-best is $a=(I,I)$ always, with
average payoff per player
$\tfrac23\cdot\tfrac56+\tfrac13\cdot3=\tfrac{14}9\approx1.5556$.

\subsection{A chain of quantum information structures}

Each player holds one qubit. Let
\[
\eta_{\omega_B}:=\proj0,\qquad \eta_{\omega_G}:=\proj+,
\]
two \emph{non-commuting} states, and let $\Lambda_\gamma$ be the qubit depolarizing channel
$\Lambda_\gamma(\rho)=\gamma\rho+(1-\gamma)\tfrac12\1$ for $\gamma\in[0,1]$. Define the family
\[
Q_\gamma:\qquad \rho^{(\gamma)}_\omega:=\Lambda_\gamma(\eta_\omega)\otimes\Lambda_\gamma(\eta_\omega),
\qquad \omega\in\{\omega_B,\omega_G\},
\]
so that the players hold conditionally independent noisy quantum samples of the state
of nature. At $\gamma=1$ the structure is the pair of pure non-commuting states; at $\gamma=0$
both $\rho^{(0)}_\omega$ equal $\tfrac14\1$ and the structure carries no information.

\begin{proposition}\label{prop:chainexample}
$Q_\gamma\garb Q_{\gamma'}$, and hence $Q_\gamma\qis Q_{\gamma'}$ and $Q_\gamma\inc Q_{\gamma'}$, whenever $1\ge \gamma\ge \gamma'\ge0$.
\end{proposition}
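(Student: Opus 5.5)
The plan is to exhibit an explicit local garbling in the sense of Definition~\ref{def:garb}(a). The natural candidate is the depolarizing channel itself applied on each qubit. The key fact is that depolarizing channels compose multiplicatively:
\[
\Lambda_\alpha\circ\Lambda_\beta=\Lambda_{\alpha\beta},
\]
since $\Lambda_\alpha\big(\beta\rho+(1-\beta)\tfrac12\1\big)=\alpha\beta\rho+\alpha(1-\beta)\tfrac12\1+(1-\alpha)\tfrac12\1$, using $\Lambda_\alpha(\tfrac12\1)=\tfrac12\1$ and $\Tr\rho=1$.

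When $\gamma>0$, I set $\alpha:=\gamma'/\gamma\in[0,1]$ and $\Lambda_i:=\Lambda_\alpha$ for both players. Then
\[
(\Lambda_\alpha\otimes\Lambda_\alpha)(\rho^{(\gamma)}_\omega)=\Lambda_\alpha\Lambda_\gamma(\eta_\omega)\otimes\Lambda_\alpha\Lambda_\gamma(\eta_\omega)=\rho^{(\gamma')}_\omega
\]
for each $\omega$, because a product channel acts factorwise on product operators. When $\gamma=0$ we have $\gamma'=0$ and the identity channel works. I must also check that $\Lambda_\alpha$ is a channel for $\alpha\in[0,1]$. It is a convex combination of the identity and the replacer $\rho\mapsto\Tr[\rho]\tfrac12\1$, both of which are completely positive and trace preserving. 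This establishes $Q_\gamma\garb Q_{\gamma'}$.

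The remaining implications are by citation. Proposition~\ref{prop:instances}(a), with $\mathcal{T}=\Lambda_\alpha\otimes\Lambda_\alpha$ and $\Phi_i=\Lambda_\alpha$, gives $Q_\gamma\qis Q_{\gamma'}$. Theorem~\ref{thm:soundness} then gives $Q_\gamma\inc Q_{\gamma'}$.

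There is no real obstacle in this argument. The only points needing care are the degenerate case $\gamma=0$, where the ratio $\gamma'/\gamma$ is undefined, and confirming that the channel parameter $\alpha=\gamma'/\gamma$ lies in $[0,1]$, which is exactly the hypothesis $\gamma\ge\gamma'$.
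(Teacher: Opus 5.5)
Your proposal is correct and follows essentially the same route as the paper. Both use the composition law $\Lambda_s\circ\Lambda_{s'}=\Lambda_{ss'}$ with $s=\gamma'/\gamma$ applied on each qubit, treat $\gamma=0$ separately, and then invoke Proposition~\ref{prop:instances}(a) and Theorem~\ref{thm:soundness}; your explicit check that $\Lambda_\alpha$ is a channel is a small addition that the paper takes for granted.
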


\begin{proof}
A direct computation gives $\Lambda_{s}\circ\Lambda_{s'}=\Lambda_{ss'}$:
$\Lambda_{s}(\Lambda_{s'}(\rho))=s\big(s'\rho+(1-s')\tfrac12\1\big)+(1-s)\tfrac12\1
=ss'\rho+(1-ss')\tfrac12\1$. For $\gamma'\le\gamma$ put $s:=\gamma'/\gamma\in[0,1]$ if
$\gamma>0$ (and $s$ arbitrary if $\gamma=0$, in which case $\gamma'=0$ and the claim is
trivial); then $\Lambda_{s}\circ\Lambda_\gamma=\Lambda_{\gamma'}$, so applying the
channel $\Lambda_{s}$ to each qubit carries $\rho^{(\gamma)}_\omega$ to
$\rho^{(\gamma')}_\omega$ for every $\omega$. This is a local
garbling in the sense of Definition~\ref{def:garb}(a); apply
Proposition~\ref{prop:instances}(a) and Theorem~\ref{thm:soundness}.
\end{proof}

By Theorem~\ref{thm:soundness} the sets $\QBCE(G,Q_\gamma)$ are therefore \emph{nested
increasing as $\gamma$ decreases}: more information means fewer equilibrium outcomes.

\subsection{The computation}

By Corollary~\ref{cor:spectra}, maximizing average welfare
\begin{equation}
  \sum_{\omega,a}\tfrac12\big(u_1(a,\omega)+u_2(a,\omega)\big)\nu(\omega,a)   
\end{equation}
over $\QBCE(G,Q_\gamma)$ is a semidefinite program in the variables $\{\rho_\omega^a\}$: eight $4\times4$ Hermitian blocks, two trace-normalization equalities, and eight
$2\times2$ Loewner constraints from Theorem~\ref{thm:loewner}, four per player. Solving
it gives the curve in Figure~\ref{fig:welfare}, computed on a grid of $41$ values of
$\gamma$.

\begin{figure}[h]
\centering
\includegraphics[width=\textwidth]{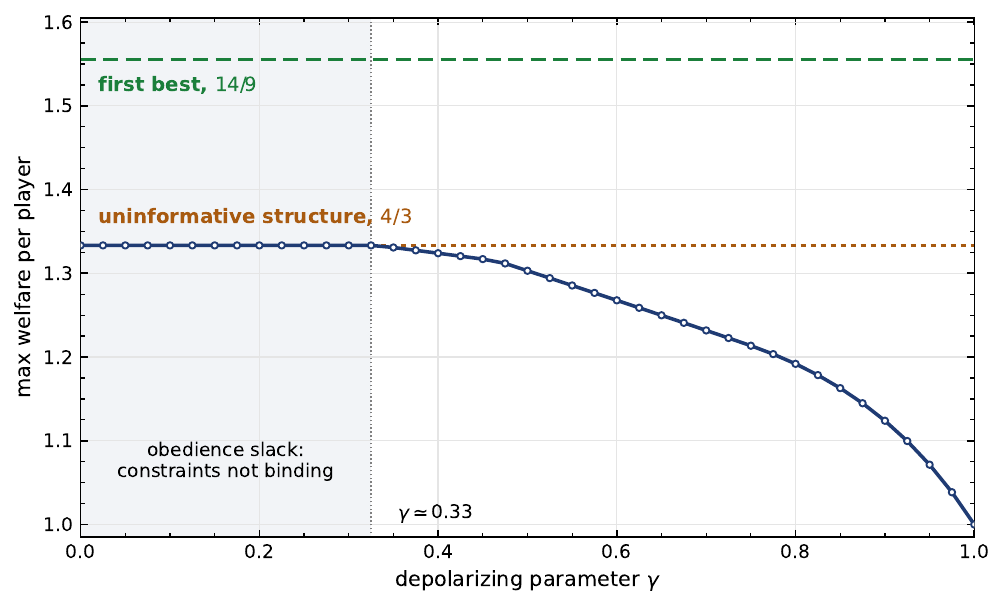}
\caption{Maximum average welfare per player over $\QBCE(G,Q_\gamma)$ along the depolarizing
chain of Proposition~\ref{prop:chainexample}, each marker being the optimal value of a
semidefinite program. Informativeness increases with $\gamma$, so the structures are ordered
from least informative on the left to most informative on the right. The value is
non-increasing, as Theorem~\ref{thm:soundness} requires: more information admits fewer
equilibrium outcomes and hence a lower best case. It equals $4/3$, the maximum over
$\BCE(G,S_\emptyset)$, for $\gamma\lesssim0.33$, where the obedience constraints are slack,
and falls to $1$ at $\gamma=1$. The first-best value $14/9$ is never attained, because joint
investment in the bad state is not obedient}\label{fig:welfare}
\end{figure}

Three features are worth recording.

\emph{The comparative static.} The maximum is non-increasing in $\gamma$, as
Proposition~\ref{prop:chainexample} and Theorem~\ref{thm:soundness} require: a larger
$\gamma$ means a more informative structure, hence a smaller QBCE set, hence a smaller
maximum. This is the quantum form of the comparative static that Bergemann and Morris obtain for
the same basic game. Their Figure~4 displays it as a nesting of BCE outcome sets in the
plane of state-conditional investment probabilities, drawn for three accuracies of the
classical signal, and they conclude from that nesting that the best achievable BCE
welfare is weakly lower with more information, and strictly lower in their
example~\cite[Appendix]{BM2016}. Figure~\ref{fig:welfare} plots the welfare conclusion
directly rather than the nesting from which it follows. The direction is the one Theorem~\ref{thm:soundness} forces: more information adds
obedience constraints and adds no feasibility, since realizability is free by
Lemma~\ref{lem:realize}, so the equilibrium set can only shrink and its best element can
only fall.

\emph{Consistency at the endpoints.} At $\gamma=0$ the value $\tfrac43$ coincides with the
maximum of welfare over $\BCE(G,S_\emptyset)$, computed separately as a linear program;
this is the value predicted by Theorem~\ref{thm:ppdegen}, since an uninformative
structure imposes only the null-structure constraints. At $\gamma=1$ the value falls to
$1$. The first-best $\tfrac{14}9$ is never attained, because joint investment in the
bad state is not obedient. This is the wedge that makes the multi-player comparative
static nontrivial, and its absence for a single agent is Proposition~\ref{prop:notvalue}.

\emph{A threshold.} The maximum is constant at $\tfrac43$ for $\gamma\lesssim0.33$ and
strictly decreasing thereafter. Below the threshold the quantum signals are too noisy
for the additional obedience constraints to bind at the welfare-maximizing outcome,
even though $Q_\gamma$ is strictly more informative than $Q_0$ in the order $\qis$. Strict
informativeness therefore need not translate into a strict welfare reduction: the
ordering is about set inclusion, and inclusions can be strict in directions that a
particular scalar objective does not detect.

\section{Concluding remarks}\label{sec:conclusion}
The transition from classical to quantum information structures in games requires a fundamental shift in how we conceptualize correlated advice. By abandoning the belief-invariant assumptions prevalent in prior quantum game theory and allowing the mediator to observe the state of nature, we construct a quantum analogue to the Bergemann--Morris framework that is both mathematically rigorous and conceptually conservative. The engine of this framework is the realization that quantum obedience translates exactly into a Loewner domination between operators on a single player's subsystem. From this single operator inequality, the core geometry of the theory follows: the quantum Bayes correlated equilibrium set is a convex, compact spectrahedron, and optimal information design reduces naturally to semidefinite programming.

This formulation successfully recovers the classical theory without distortion. An omniscient, coherent mediator generates no spurious outcomes when restricted to classical structures, proving that the quantum extension is exact. Furthermore, our analysis reveals a stark dichotomy in how quantum resources can be accessed in games. The "advice model"—where players extract classical signals from quantum states before equilibrium play—is entirely degenerate, carrying no incentive content beyond the null structure. The comparative power of a quantum information structure instead lies strictly in the local quantum deviations it permits. Under direct access, quantum individual sufficiency acts as a robust preorder encompassing local garblings, shared randomness, and shared entanglement, confirming the central comparative static: more quantum information monotonically shrinks the set of obedient outcomes.

We establish the exact equivalence of quantum individual sufficiency and the incentive ordering for two fundamental regimes. We prove that these relations coincide precisely for information structures that are simultaneously diagonalizable in a product basis, cleanly recovering the classical Bergemann--Morris equivalence. We further prove that this exact equivalence holds universally for any comparison involving the least informative structure. For all other quantum structures, we establish the rigorous soundness inclusion: quantum individual sufficiency guarantees the incentive ordering  across all basic games and any number of players. By resolving the exact bounds of the classical embedding and proving the general soundness of the quantum expansion, this framework places the comparison of quantum information structures on firm operational foundations.

\bibliographystyle{plain}
\bibliography{qbce-refs}

\end{document}